\renewcommand\paragraph{\@startsection{myparagraph}{4}{0pt}%
                                    {-0.3\baselineskip}%
                                    {-1.5ex}%
                                    {\normalfont\normalsize\bfseries}}
\title{Combinatorial Algorithms for General Linear Arrow-Debreu Markets}
\author{Bhaskar Ray Chaudhury}{MPI for Informatics, Saarland Informatics Campus, Graduate School of Computer Science Saarbr\'ucken, Germany}{braycha@mpi-inf.mpg.de}{}{}
\author{Kurt Mehlhorn}{MPI for Informatics, Saarland Informatics Campus, Germany}{mehlhorn@mpi-inf.mpg.de}{}{}
\authorrunning{B.\,R. Chaudhury,  and K. Mehlhorn}
\keywords{Linear Exchange Market, Equilibrium Prices, Combinatorial Algorithm}
\subjclass{Theory of computation, Design and analysis of algorithms, Mathematical optimization}
\keywords{Linear Exchange Markets, Equilibrium, Combinatorial Algorithms}%mandatory
\newtheorem{observation}[theorem]{Observation}
\newcommand{\Rnonneg}{\mathbb{R}_{\ge 0}}
\newtheorem{fact}[theorem]{Fact}
\newcommand{\block}{\mathit{block}}
\newcommand{\DM}{\mathrm{DM}}   %{\mathit{DM}}
\newcommand{\norm}[1]{\lVert #1 \rVert}
\newcommand{\citeDM}{Duan-Mehlhorn:Arrow-Debreu-Market}
\newcommand{\set}[2]{\left\{  #1 \: \middle\vert \: #2 \right\}}
\newcommand{\xeq}{x_{\mathrm{eq}}}
\newcommand{\xmax}{x_{\max}}
\newcommand{\abs}[1]{\lvert #1 \rvert}
\begin{document}

\maketitle

\begin{abstract}  We present a combinatorial algorithm for determining the market clearing prices of a general linear Arrow-Debreu market, where every agent can own multiple goods. The existing combinatorial algorithms for linear Arrow-Debreu markets consider the case where each agent can own all of one good only. We present an $\tilde{\mathcal{O}}((n+m)^7 \log^3(UW))$ algorithm where $n$, $m$, $U$ and $W$ refer to the number of agents, the number of goods, the maximal integral utility and the maximum quantity of any good in the market respectively. The algorithm refines the iterative algorithm of Duan, Garg and Mehlhorn using several new ideas. We also identify the hard instances for existing combinatorial algorithms for linear Arrow-Debreu markets. In particular we find instances where the ratio of the maximum to the minimum equilibrium price of a good is $U^{\Omega(n)}$ and the number of iterations required by the existing iterative combinatorial algorithms of Duan, and Mehlhorn and Duan, Garg, and Mehlhorn are high. Our instances also separate the two algorithms. \end{abstract}

%\tableofcontents
\newpage
\pagenumbering{arabic}

\section{Introduction}

In a linear Arrow-Debreu market, there is a set $B$ of $n$ agents and a set $G$ of $m$ divisible goods. We will refer to an individual agent by $b_i$ for $i \in [n]$ and to an individual good by $g_j$ for $j \in [m]$. Each agent comes with a basket of goods to the market, more precisely, agent $b_i$ owns $w_{ij} \ge 0$ units of good $g_j$. The total supply of $g_j$ is then $\sum_i w_{ij}$ units. Moreover, the agents have utilities over the goods and $u_{ij} \ge 0$ is the utility derived by agent $b_i$ from one unit of good $g_j$. 

The goal is to find a positive price vector $p \in \Rnonneg^m$ and a non-zero flow  $f \in \Rnonneg^{n \times m}$ such that:\smallskip

\begin{compactenum}[\hspace{\parindent}a)]
\item For all $j \in [m]$: $\sum_i f_{ij} = \sum_i w_{ij} p_j$    \hfill (all goods are completely sold)
\item For all $i \in [n]$: $\sum_j w_{ij} p_j = \sum_j f_{ij}$ \hfill (agents spend all their income)
\item $f_{ij} > 0$ implies $u_{ij}/p_j = \max_{\ell \in [m]} u_{i\ell}/p_\ell$ \hfill (only bang-for-buck spending)
\end{compactenum}\smallskip

Such a price vector is called a vector of equilibrium prices. In a), the left hand side is the total flow (of money) into $g_j$ and the right hand side is the total value of all units of good $g_j$. In b), the left hand side is the income of agent $b_i$ under prices $p$ and the right hand side is his spending. In c), $\max_{\ell \in [m]} u_{i\ell}/p_\ell$ is the maximum ratio of utility to price (bang-for-buck) that agent $b_i$ can achieve. Agents spend only money on goods that give them the maximum bang-for-buck. For agent $b_i$ and good $g_j$, we use $x_{ij} = f_{ij}/ p_j$ for the amount of $g_j$ allocated to agent $b_i$. A price vector $p$ and an flow $f$ as above is called a \emph{market equilibrium}. 

We make the standard assumption that every agent likes at least one good, i.e., for all $i$, $\max_j u_{ij} > 0$, and that each good is liked by some agent, i.e., for all $j$, $\max_i u_{ij} > 0$. We also make the nonstandard assumption that for every proper subset $B' \subset B$ there is at least one good $g_j$ that is not completely owned by the agents in $B'$ and such that at least some $b_i \in B'$ is interested in $g_j$, i.e., $w_{kj} > 0$ for some $b_k \not\in B'$ and $u_{ij} > 0$ for some $b_i \in B'$.  In other words, there is no subset of agents that are only interested in the goods completely owned by them. References~\cite{Jain07,Devanur-Garg-Vegh,Duan-Mehlhorn:Arrow-Debreu-Market} show how to remove the nonstandard assumption. We assume that utilities $u_{ij}$ and weights $w_{ij}$ are integral and use $U = \max_{i \in [n], j\in [m]}u_{ij}$ to denote the maximum utility and $W = \max_{j \in [m]} \max_{i \in [n]}w_{ij}$ to denote the maximum weight. Then the budget available to any agent is bounded by $\sum_j w_{ij} p_j \le n W \max_j p_j$. 

Linear Exchange markets were introduced by Walras~\cite{Walras1874} back in 1874. Walras also argued that equilibrium prices exist. %In this model there are agents with some initial endowments of goods and utility functions over the set of goods (which are divisible). The goal is to determine the prices for the divisible goods such that each agent spends their entire budget (equal to the total price of the goods they own) and every good is completely sold. 
The first rigorous proof for the existence of equilibirum under strong assumptions was given by Wald~\cite{Wald36}. Arrow and Debreu~\cite{AD1954} gave the proof for the existence of equilibrium when the utility functions are concave. There has been substantial algorithmic research put into determining the equilibrium prices since the 60s. Codenotti et al.~\cite{CMV05} gives a surveys the algorithmic literature before 2004. While there are strongly polynomial approximation schemes for determining the equilibrium prices~\cite{JMS03, Garg-Kapoor, DV03}, the existence of a strongly polynomial exact algorithms still remains as an open question. There have been exact finite algorithms~\cite{Eaves76, Garg-Mehta-Sohoni-Vazirani}, exact weakly polynomial time algorithms~\cite{ Jain07,Ye2007, \citeDM, DGM:Arrow-Debreu} and the characterization of the equilibrium prices as a solution set of a convex program~\cite{Devanur-Garg-Vegh, Nenakov-Primak}.

A market equilibrium can be found in time polynomial in $n$, $m$, $\log U$ and $\log W$ by a number of different algorithms. Jain~\cite{Jain07} and Ye~\cite{Ye2007} gave algorithms based on the ellipsoid and the interior point method, respectively, and Duan and Mehlhorn~\cite{\citeDM} and Duan, Garg, and Mehlhorn~\cite{DGM:Arrow-Debreu} described combinatorial algorithms. The algorithm by Ye has a running time of $O(\max(n,m)^8 (\log UW)^2)$, see~\cite[footnote on page 2]{DGM:Arrow-Debreu}.

The combinatorial algorithms actually only solve a special case: $m = n$ and each agent is the sole owner of a good, i.e., $w_{ii} = 1$, and $w_{ij} = 0$ for $i \not= j$. The algorithm in~\cite{DGM:Arrow-Debreu} solves the special case in time  $O(n^7 \log^3(nU))$. A reduction for reducing the general case to the special case is known, see Section~\ref{new algorithm}. However, it turns a general problem with $n$ agents and $m$ goods into a special problem with $nm$ goods and hence leads to a running time of $\tilde{O}((nm)^7 \textup{poly}(\log(U)))$ (ignoring poly logarithmic dependencies on $n$). In an unpublished note, Darwish and Mehlhorn~\cite{Darwish-MehlhornAD,DarwishMasterThesis} have shown how to extend the algorithm in~\cite{\citeDM} to the general problem without going through the reduction. The resulting running time is $O(\max(n,m)^{10}\log^2(\max(n,m)UW))$. They were unable to generalize the approach in~\cite{DGM:Arrow-Debreu}.

\paragraph*{Our Contribution:} Our contribution is twofold: a combinatorial algorithm for the general problem and examples that are difficult for the algorithms in~\cite{\citeDM,DGM:Arrow-Debreu}.
We give a combinatorial algorithm for the general problem with running time $O((n+m)^7 $ $\log^3(nmUW))$. The algorithm refines the algorithm in~\cite{DGM:Arrow-Debreu} by several new ideas. We discuss them in Section~\ref{new algorithm}. In particular, in~\cite{DGM:Arrow-Debreu}, the number of iterations compared to~\cite{\citeDM} is reduced by a factor of $\Omega(n)$ by a modified price update rule . The modified update rule is subtle and heavily relies on the fact that there is a one to one correspondence between an agent and a good (one agent owns all of one good only) and this is not true in the general scenario and several of their crucial arguments break down. We come up with a novel price update rule that also highlights some new structure in the problem. 

We also give examples that are difficult for the algorithms in~\cite{\citeDM,DGM:Arrow-Debreu} amd where 
the equilibrium prices are exponential in $U$. Both algorithms are iterative and need $\mathcal{O}(n^5 \log(U))$ and $\mathcal{O}(n^4 \log(nU))$ iterations respectively. The examples force the algorithms into $\tilde{\Omega}(n^{4+\frac{1}{3}} \log(U))$ and $\tilde{\Omega}(n^{4} \log(U))$ iterations respectively (ignoring poly logarithmic dependencies on $n$). They separate the two algorithms.

\section{Determining the equilibrium price vector of the general linear Arrow-Debreu market}
\label{new algorithm}

For completeness we first show the well known reduction from the general case to the special problem in \cite{\citeDM,DGM:Arrow-Debreu} (where each agent owns all of one good only). This reduction is well-known. For each positive $w_{ij}$ we create an agent $b_{ij}$ and a good $g_{ij}$ owned by this agent. There is one unit of good $g_{ij}$. If $w_{ij} = 0$, there is no good $g_{ij}$ and no agent $b_{ij}$. We interpret $g_{i j}$ as the goods $g_j$ owned by $b_i$ and $b_{ij}$ as a copy of agent $b_i$. We define the utility derived by the agent $b_{ij}$ from one unit of good $g_{\ell k}$ as $\tilde{u}_{ij,\ell  k} = w_{\ell  k}\cdot u_{ik}$; here the factor $u_{ik}$ reflects that $b_{ij}$ is a copy of agent $b_i$ and $g_{\ell  k}$ is a copy of $g_k$ and the factor $w_{\ell  k}$ reflects that $b_\ell $ owns $w_{\ell  k}$ units of good $g_k$ but there is only copy of good $g_{\ell  k}$. 

\begin{lemma}
Let $p$ and $f$ be the market clearing price vector and the corresponding money flow for the above instance of the special case with $nm$ agents and goods. Then $\frac{p_{\ell k}}{w_{\ell k}}$ does not depend on $\ell $, but only on $k$. Let $\hat{p}_k = \frac{p_{\ell k}}{w_{\ell k}}$ and $\hat{f}_{i k} = \sum_{j \in [m]} \sum_{\ell  \in [n]} f_{ij,\ell k}$. Then $\hat{p}$ and $\hat{f}$ are the market clearing price vector and corresponding money flow for general case with utility matrix $u$ and weight matrix $w$.  
\end{lemma}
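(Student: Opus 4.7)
The plan is to establish the two claims in order: first, that the per-unit prices of the copies of a good agree (so $\hat p_k$ is well defined), and second, that the pair $(\hat p,\hat f)$ satisfies conditions (a)--(c) of a market equilibrium for the original instance. The main work is in the first claim; the second is a routine aggregation, using that a copy $b_{ij}$ of $b_i$ evaluates $g_{\ell k}$ at utility $\tilde u_{ij,\ell k}=w_{\ell k}u_{ik}$ while its own endowment is worth $p_{ij}=w_{ij}\hat p_j$ once the first claim is in hand.

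For the first claim, fix $k$ and any two indices $\ell_1,\ell_2$ with $w_{\ell_1 k},w_{\ell_2 k}>0$. I would compute the bang-per-buck that an arbitrary copy $b_{ij}$ gets from $g_{\ell k}$: it is
\[
\frac{\tilde u_{ij,\ell k}}{p_{\ell k}} \;=\; \frac{w_{\ell k}\,u_{ik}}{p_{\ell k}} \;=\; \frac{u_{ik}}{p_{\ell k}/w_{\ell k}},
\]
so the ratio $p_{\ell k}/w_{\ell k}$ is precisely the effective per-unit price at which every copy of every agent sees $g_{\ell k}$. Since some $b_i$ likes $g_k$ and some $b_{ij}$ exists (using $w_{ij}>0$ for some $j$ and the standing assumption that $g_k$ is liked), the copy $g_{\ell_1 k}$ must be sold in the special-case equilibrium, so some $b_{ij}$ pays for it. Condition (c) for this $b_{ij}$ then forces $p_{\ell_1 k}/w_{\ell_1 k}\le p_{\ell_2 k}/w_{\ell_2 k}$, because otherwise $g_{\ell_2 k}$ would yield a strictly better bang-per-buck than the copy actually purchased. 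Swapping the roles of $\ell_1$ and $\ell_2$ gives the reverse inequality, so $p_{\ell k}/w_{\ell k}$ depends only on $k$ and $\hat p_k$ is well defined. This bang-per-buck comparison is the step I expect to be the main obstacle, since one has to be careful that the right indices are populated (some copy actually buys, some agent actually likes the good); the standing assumptions from the introduction are what make it go through.

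For the second claim I would verify the three equilibrium conditions for $(\hat p,\hat f)$ by direct aggregation over copies. For (a), summing the special-case identity $\sum_{(i,j)} f_{ij,\ell k}=p_{\ell k}$ (one unit of $g_{\ell k}$) over $\ell$ gives
\[
\sum_i \hat f_{ik} \;=\; \sum_\ell \sum_{(i,j)} f_{ij,\ell k} \;=\; \sum_\ell p_{\ell k} \;=\; \sum_\ell w_{\ell k}\hat p_k,
\]
which is exactly condition (a) for the original market. For (b), I would sum the per-copy budget identity $p_{ij}=\sum_{\ell,k} f_{ij,\ell k}$ over $j$, use $p_{ij}=w_{ij}\hat p_j$, and regroup the right-hand side as $\sum_k \hat f_{ik}$, obtaining $\sum_j w_{ij}\hat p_j=\sum_k \hat f_{ik}$. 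For (c), if $\hat f_{ik}>0$ then some $f_{ij,\ell k}>0$, so the special-case bang-per-buck condition applied at index $(ij,\ell k)$ reads
\[
\frac{u_{ik}}{\hat p_k} \;=\; \frac{\tilde u_{ij,\ell k}}{p_{\ell k}} \;=\; \max_{(\ell',k')}\frac{\tilde u_{ij,\ell' k'}}{p_{\ell' k'}} \;\ge\; \frac{u_{ik'}}{\hat p_{k'}}
\]
for every $k'$ (using the first claim, and that for each $k'$ some $\ell'$ has $w_{\ell' k'}>0$). This is condition (c) in the original market, completing the proof.
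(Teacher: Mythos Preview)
Your proposal is correct and follows essentially the same approach as the paper. For the first claim, both you and the paper use the same bang-per-buck comparison $\tilde u_{ij,\ell k}/p_{\ell k}=u_{ik}/(p_{\ell k}/w_{\ell k})$; the paper phrases it as a contradiction (if the per-unit prices differed, the more expensive copy would remain unsold), while you phrase it directly (a buyer of one copy must weakly prefer it to the other), but these are two sides of the same argument. For the second claim, your aggregation over copies to verify (a), (b), and (c) matches the paper's computations line for line.
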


\begin{proof}
  Assume $\frac{w_{\ell k}}{p_{\ell k}} > \frac{w_{hk}}{p_{hk}} $. Let $b_{ij}$ be any arbitrary agent. Then $$\frac{\tilde{u}_{ij,\ell k}}{p_{\ell k}} = \frac{w_{\ell k} \cdot u_{ik}}{p_{\ell k}} > \frac{w_{h k} \cdot u_{ik}}{p_{hk}} = \frac{\tilde{u}_{ij,hk}}{p_{hk}}.$$ Hence agent $b_{ij}$ prefers good $g_{\ell k}$ over good $g_{hk}$. Since $b_{ij}$ is arbitrary,  $g_{hk}$ will not be sold at all, a contradiction. Thus $\frac{w_{\ell k}}{p_{\ell k}} = \frac{w_{hk}}{p_{hk}}$ for all $\ell $ and $h$. Now, it is easy to verify that every agent invests in goods that give him maximum utility to price ratio. Assume $\hat{f}_{i k}>0$. Then for any arbitrary $k'$ and $\ell '$, there is a $j$ and $\ell $, such that $$\frac{u_{ik}}{\hat{p}_k} = \frac{\tilde{u}_{ij,\ell k}}{w_{\ell k} \cdot \hat{p}_k} = \frac{\tilde{u}_{ij,\ell k}}{p_{\ell k}} \geq \frac{\tilde{u}_{ij,\ell 'k'}}{p_{\ell 'k'}} = \frac{\tilde{u}_{ij,\ell 'k'}}{w_{\ell 'k'} \cdot \hat{p}_{k'}} = \frac{u_{ik'}}{\hat{p}_{k'}}.$$ The equilibrium flow constraints are also easily verifiable,
   $$\sum_{i \in [n]} \hat{f}_{i k}=\sum_{i \in [n]} \sum_{j \in [m]} \sum_{\ell  \in [n]} f_{{ij},{\ell k}}=\sum_{\ell  \in [n]} p_{\ell k}=\sum_{\ell  \in [n]} w_{\ell k} \cdot \hat{p}_k$$
   $$\sum_{k \in [m]} \hat{f}_{i k}=\sum_{k \in [m]} \sum_{j \in [m]} \sum_{\ell  \in [n]} f_{{ij},{\ell k}}=\sum_{j \in [m]} p_{ij}=\sum_{j \in [m]} w_{ij} \cdot \hat{p}_j$$
\vspace{-1.2cm}\par \end{proof}
We now present our algorithm that does not rely on this reduction.
\subsection{The Algorithm}

Figure~\ref{Program} shows the algorithm. Similar to the algorithms in~\cite{DPSV08,\citeDM,DGM:Arrow-Debreu}, the algorithm is iterative and flow based. For the description of the algorithm, we need the concepts of an equality network, of a balanced flow, of the set of high-surplus buyers, and of the flow- and price-update. 

\begin{algorithm}[t]
\begin{algorithmic}[1]
 %\SetAlgoLined
 \State Set $p_i \gets 1 \quad \forall j \in [n]$.
 \State Set $\varepsilon \gets {1}/({8\cdot (n+m)^{4(n+m)}(UW)^{3(n+m)}})$.
 \While{$\norm{r_f}_2 > \varepsilon$ , where $f$ is a balanced flow in $N_p$}
      \State Let $S$ be the set of high-surplus agents w.r.t $f$ in $N_p$.
      \State $x \gets \min(\xeq,x_{23},x_{24},x_{13},x_2,x_{\max})$.
      \State \parbox{0.95\textwidth}{Multiply prices of goods in $\Gamma(S)$ by $x$ and update $f$, $p$ to $f'$ and $p'$ as in $(\ref{eq-1})$ - $(\ref{eq-4})$  and $N_p$ to $N_{p'}$.}
      \State Let $f''$ be the balanced flow in $N_{p'}$. 
      \State Set $p \gets p'$ and $f \gets f''$.
    \EndWhile
  \State Round $p$ to equilibrium prices.      
  
\end{algorithmic}
\caption{Combinatorial algorithm for determining the equilibirum prices in the general linear Arrow-Debreu market}\label{Program}
\end{algorithm}

% The intuitive structure of the algorithm is very similar to the one in REF. However the significance lies on a novel price update scheme. Walras argued that if a price vector $p$ is not a market clearing price, then there must exist a good, where demand exceeds supply. Therefore, increase the price of this good until demand equals supply. Increasing the price of a good in demand will decrease its demand (since the agents may either run out of budget as they are forced to invest more or find interest in some other good - the latter case will also increase the demand of few goods that have low demand). This seems to help us move closer to the equilibrium. However increasing the price of the high demand good will also result in an increased budget/ money for the agent that owns the same good. Thereafter there may be increase in the demand of other high demand goods. It is therefore not clear whether we are moving closer to the equilibrium or not. However Walras claimed that such a process will eventually converge (without proof). In order to be more precise with the definition of demand and surplus, we introduce Equality networks and Balanced flows like in REF.\\

\paragraph*{Equality Network $N_p$:} For a price vector $p$ the equality network $N_p$ is a flow network with vertices  $s \cup t \cup B \cup G$ and edges
    \begin{itemize}
        \item $(s,b_i)$ with capacity $\sum_{j \in [m]}w_{ij}\cdot p_j$ for all $i \in [n]$.
        \item $(g_j,t)$ with capacity $\sum_{i \in [n]}w_{ij}\cdot p_j$ for all $j \in [m]$.
        \item $(b_i,g_j)$ with capacity $\infty$ iff ${u_{ij}}/{p_j} \geq {u_{ik}}/{p_k}$ for all $k \in [m]$.
    \end{itemize}
For any $B' \subseteq B$ let $\Gamma(B') = 
\set{g_j}{(b_i,g_j) \in N_p \text{ for some } b_i \in B'}$ denote the neighborhood of $B'$ in $N_p$ (all the goods agents in $B'$ may invest on).
    
\paragraph*{Surpluses and Surplus vector $r_f$:}  Let $f$ be a valid flow in the equality network $N_p$, and let $f_{ij}$, $f_{si}$ and $f_{jt}$ denote the flow along the edge $(b_i,g_j)$, $(s,b_i)$ and $(g_j,t)$ respectively.  We define the surplus $r_f(b_i)$ of the agent $b_i$ and $r_f(g_j)$ of the good  $g_j$, as $r_f(b_i) = \sum_{j \in [m]}w_{ij}\cdot p_j - f_{si} $ and $r_f(g_j) = \sum_{i \in [n]}w_{ij}\cdot p_j - f_{jt}$ respectively. Surpluses are always non-negative. We define the surplus vector $r_f \in \mathbb{R}^{n}$ as $\langle r_f(b_1),r_f(b_2), \ldots r_f(b_n) \rangle$.
    
\paragraph*{Balanced Flow:} A balanced flow $f$ is a valid flow in $N_p$ with minimum norm $\norm{r_f}_2^2$. Every balanced flow is a maximum flow. Additionally, if $f$ is a balanced flow and $f_{ij}$ and $f_{i'j}$ are positive, then the surpluses of the agents $b_i$ and $b_{i'}$ are the same and if $(b_i,g_j)$ and $(b_{i'},g_j)$ are edges of $N_p$ and $r_f(b_i) > r_f(b_{i'})$ then $f_{i'j} = 0$. This essentially follows from the fact that the L2 norm of a vector reduces as the components move closer to each other in magnitude (while the L1 norm remain constant). The following algorithmic property of balanced flows will be useful.
    
    \begin{lemma}
     Balanced Flows can be computed with at most $n$ max flow computations~\cite{DPSV08} and by one parameterized flow computation~\cite{Darwish-Mehlhorn}. 
    \end{lemma}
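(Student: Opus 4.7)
The plan is to construct a balanced flow inductively by identifying and peeling off the ``top surplus class'' in each iteration. First I would establish a structural lemma: the minimizer of $\|r_f\|_2^2$ over valid flows in $N_p$ induces a canonical partition of $B$ into classes $T_1, T_2, \dots, T_k$ with strictly decreasing surplus values $\delta_1 > \delta_2 > \cdots > \delta_k \ge 0$, and this partition is the same for every balanced flow. The key argument is strict convexity of the squared $\ell_2$-norm: any ``mass redistribution'' of surpluses between agents that is feasible in the flow polytope would strictly decrease the objective, so in a minimizer no surplus can be transferred from a higher-surplus agent to a lower-surplus one via augmenting paths. Equivalently, for each $j$, the agents in $T_1 \cup \dots \cup T_j$ together with their reachable goods form a ``source-side maximally saturated'' subnetwork at surplus level $\delta_j$.

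Given this structure, the $n$-max-flow algorithm proceeds as follows. Compute an ordinary max flow in $N_p$; then determine the largest surplus $\delta_1$ as the minimum value $\delta$ such that the network obtained by reducing each capacity $\mathrm{cap}(s,b_i)$ by $\delta$ still admits a flow saturating all of $\Gamma(B) \to t$ from the relevant source side. The set $T_1$ is then recovered as the source side of a min-cut in the residual graph at that threshold. Freeze the induced flow on $T_1$ and its neighborhood, remove these vertices, and recurse on the residual instance to find $T_2$, and so on. Because each iteration identifies at least one new agent (whose surplus becomes frozen), the procedure terminates after at most $n$ max-flow computations, and by the structural lemma the resulting flow is balanced.

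For the single-parameter-flow result, I would set up a parametric max-flow instance where the capacity of $(s,b_i)$ is replaced by $c_i(\lambda) = \max(0, \mathrm{cap}(s,b_i) - \lambda)$, and all other capacities are left untouched. As $\lambda$ grows from $0$, the source-side of the min-cut shrinks monotonically, and the breakpoints at which it changes correspond exactly to the surplus thresholds $\delta_1 > \delta_2 > \cdots$. By the Gallo--Grigoriadis--Tarjan framework, monotone parametric max flows of this form can be solved in the asymptotic time of a single push--relabel max-flow computation, yielding the entire class decomposition and hence a balanced flow in one shot. The balanced flow is then read off by combining, for each class, the flow saturating $s \to T_j$ up to capacity minus $\delta_j$.

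The main obstacle is the structural lemma in the first paragraph: establishing that the hierarchical decomposition is unique and that the top class $T_1$ coincides with the source-side residual component after thresholding. This requires ruling out ``ties'' where two different source-side cuts at value $\delta_1$ would give conflicting candidates for $T_1$; the argument uses the submodularity of min-cuts (unique minimal source-side min-cut exists) together with the observation that, in a balanced flow, the saturation pattern of $s$-edges within a class is determined by the convex minimization, not by the combinatorial choice of min-cut. Once this is in place, both the iterative and the parametric constructions reduce to routine max-flow machinery.
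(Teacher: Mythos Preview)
The paper does not actually prove this lemma; it merely states it and points to the two cited references~\cite{DPSV08,Darwish-Mehlhorn}. So there is no ``paper's own proof'' to compare against, and your sketch is in effect a reconstruction of what those references do. At that level your outline is faithful: the $n$-max-flow result rests on a structural decomposition of the balanced flow into surplus classes and an iterative (or divide-and-conquer) identification of those classes via max-flow/min-cut computations; the single-computation result rests on casting the source capacities as monotone linear functions of a parameter and invoking Gallo--Grigoriadis--Tarjan. Both halves of your plan match the cited works in spirit.

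Two points are worth tightening. First, your characterization of $\delta_1$ as ``the minimum value $\delta$ such that the reduced network still admits a flow saturating all of $\Gamma(B)\to t$'' is inverted: at $\delta=0$ the sink side is (weakly) easiest to saturate, and saturation only becomes harder as $\delta$ grows, so the minimum such $\delta$ is trivially $0$. What you want is either the \emph{maximum} $\delta$ at which a particular saturation condition still holds, or---closer to what \cite{DPSV08} actually does---a divide-and-conquer that thresholds at the \emph{average} residual surplus and recurses on the two sides of the resulting min-cut, rather than peeling off the top class directly. Either route yields at most $n$ max-flow calls, but as written your step for extracting $\delta_1$ reads like a search over $\delta$, which would not. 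Second, your ``main obstacle'' paragraph correctly identifies uniqueness of the class decomposition as the crux, and the submodularity/unique-minimal-min-cut argument you sketch is the standard way to resolve it; just be explicit that what is unique is the surplus \emph{vector} (and hence the partition), not the balanced flow itself, since multiple flows can realize the same surplus vector.
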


 \paragraph*{High Surplus Agents $S$ and Goods in High Demand $\Gamma(S)$:} Let $f$ be a balanced flow and assume that there is surplus. The high surplus edges and high demand goods w.r.t.~$f$ in $N_p$ are defined exactly as  in~\cite{DGM:Arrow-Debreu}. Renumber the agents in order of decreasing surplus so that $b_1$ has the highest surplus and $b_n$ has the lowest surplus. Let $\ell$ be minimal such that $r_f(b_\ell) > r_f(b_{\ell + 1})$ and for every $k$ such that 
$r_f(b_\ell) > r_f(b_k) \ge r_f(b_\ell)/(1 + 1/n)$, $f_{sk} = 0$ and $w_{kj} = 0$ for every $g_j \in \Gamma(S)$, where $S = \{b_1,\ldots,b_\ell\}$. If no such $\ell$ exists, let $\ell = n$. We refer to $S$ as the set of high surplus agents and to $\Gamma(S)$ as the set of high demand goods. The surplus of the goods in $\Gamma(S)$ is zero since the agents in $S$ have positive surplus. There is no flow on edges $(b_i,g_j) \in N_p$ with $b_i \not\in S$ and $g_j \in \Gamma(S)$. 
     % \begin{itemize}
     %   \item Compute the balanced flow $f$.
     %   \item Sort the agents in decreasing order of their surpluses and let $<b_{\pi(1)},b_{\pi(2)},...b_{\pi(n)}>$ be this order ($\pi$ is a permutation).
     %   \item Determine the minimal  $l$ such that $S \gets \left\{b_{\pi(1)},b_{\pi(2)},...,b_{\pi(l)}\right\}$ and for every agent $b_{\pi(k)}$ with $k>l$, $\frac{r_f(b_{\pi(l)})}{r_f(b_{\pi(k)})} \leq 1 + \frac{1}{n}$ only if $f_{s,\pi(k)} =0$ and $w_{{\pi(k)}j} = 0$ for all $g_j \in \Gamma(S)$.
     %   \item Goods in demand $\gets \Gamma(S)$.\\
       
The algorithm in~\cite{DGM:Arrow-Debreu} that determines $S$ in time $\mathcal{O}(n^2)$ can be easily generalized and we do not discuss it here.

\paragraph*{Price and Flow Update:}
Like the earlier combinatorial algorithms, our algorithm is a multiplicative price update algorithm. It works in phases and in each phase we compute the balanced flow $f$ in $N_p$ and determine the high surplus agents $S$ and high demand goods $\Gamma(S)$. We  increase the prices of the goods in $\Gamma(S)$ as well as the money flow into them by the same factor $x>1$ (so we change the flow $f$ in $N_p$ to $f'$ in $N_{p'}$). Formally,
\begin{align} \label{eq-1}
  f'_{ij} &= \left\{
        \begin{array}{ll}
            x \cdot f_{ij} & \quad g_j \in \Gamma(S) \\
            f_{ij}         & \quad g_j \notin \Gamma(S)\\
        \end{array}
    \right. \qquad 
 f'_{si} = \left\{
        \begin{array}{ll}
            x \cdot f_{si}   & \quad b_i \in S \\
            f_{si}         & \quad b_i \notin S\\
        \end{array}
    \right. \\
\label{eq-4}
   f'_{jt} &= \left\{
        \begin{array}{ll}
            x \cdot f_{jt} & \quad g_j \in \Gamma(S) \\
            f_{jt}         & \quad g_j \notin \Gamma(S)\\
        \end{array}
    \right. \qquad p'_i = \left\{
        \begin{array}{ll}
            x \cdot p_i      & \quad g_j \in \Gamma(S) \\
            p_i             & \quad g_j \notin \Gamma(S)\\
        \end{array}
    \right.
  \end{align} 

% \begin{equation} \label{eq-1}
%   f'_{ij} = \left\{
%         \begin{array}{ll}
%             x \cdot f_{ij} & \quad g_j \in \Gamma(S) \\
%             f_{ij}         & \quad g_j \notin \Gamma(S)
%         \end{array}
%     \right.
% \end{equation}

%  \begin{equation} \label{eq-2}
%  f'_{si} = \left\{
%         \begin{array}{ll}
%             x \cdot f_{si}   & \quad b_i \in S \\
%             f_{si}         & \quad b_i \notin S
%         \end{array}
%     \right.
%   \end{equation} 

%   \begin{equation} \label{eq-3}
%    f'_{jt} = \left\{
%         \begin{array}{ll}
%             x \cdot f_{jt} & \quad g_j \in \Gamma(S) \\
%             f_{jt}         & \quad g_j \notin \Gamma(S)
%         \end{array}
%     \right.
%   \end{equation} 
  
%   \begin{equation} \label{eq-4}
%   p'_i = \left\{
%         \begin{array}{ll}
%             x \cdot p_i      & \quad g_j \in \Gamma(S) \\
%             p_i             & \quad g_j \notin \Gamma(S)
%         \end{array}
%     \right.
% \end{equation}

\paragraph*{The Factor $x$:} We define
$x = \min(\xeq,x_{23},x_{24},x_{13},x_2,x_{\max})$, where the quantities on the right are defined below. 

Since we increase the prices of all the goods in $\Gamma(S)$ by the same factor $x$, the only equality edges that may disappear are the ones that connect an agent from $B \setminus S$ to a good from $\Gamma(S)$. Since $f$ is a balanced flow and the agents in $S$ have strictly higher surplus than the ones in $B \setminus S$, the edges from agents in $B \setminus S$ to goods in $\Gamma(S)$ in $N_{p}$ carry no flow and hence them disappearing will not lead to a violation of the flow constraints.  The new edges that appear will connect an agent in $S$ to a good in $G \setminus \Gamma(S)$. Therefore we define, 
\begin{align*}
\xeq = \min \set{\frac{u_{ij}}{p_j} \cdot \frac{p_k}{u_{ik}}}{b_i \in S, (b_i,g_j) \in N_p, g_k \notin \Gamma(S)}
\end{align*}
This is the minimum $x$ at which a new equality edge appears in the network. 

Next we consider how the surpluses of the agents are affected. Observe that $r_{f'}(b_i) = r_f(b_i) + x \cdot (\sum_{g_j \in \Gamma(S)}(w_{ij}p_j - f_{ij}))$. Therefore all the surpluses vary linearly with $x$. We now, introduce 5 classes of agents similar to the ones in~\cite{DGM:Arrow-Debreu}. Figure~\ref{definition of types} illustrates this definition.

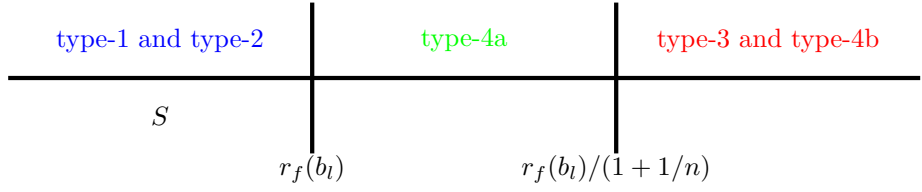
\begin{figure}[t]
\psfrag{bl}{$b_\ell$}
\begin{center} 
\begin{tikzpicture}

%Lines and Partitions
\draw[black, ultra thick] (-6,0) -- (6,0);
\draw[black, ultra thick] (-2,1) -- (-2,-1);
\draw[black, ultra thick] (2,1) -- (2,-1);

%The agent types
\node at (-4,0.5) {\textcolor{blue}{type-1 and type-2}};
\node at (-4,-0.5) {$S$};
\node at (-2,-1.2) {$r_f(b_l)$};
\node at (0,0.5) {\textcolor{green}{type-4a}};
\node at (2,-1.2) {$r_f(b_l)/(1+1/n)$};
\node at (4,0.5) {\textcolor{red}{type-3 and type-4b}};

\end{tikzpicture} 
\end{center}
\caption{\label{definition of types} The horizontal line indicates the agents ordered by decreasing surplus from left to right. $b_\ell$ is the agent of smallest surplus in $S$. All agents with surplus in $r_f(b_\ell), r_f(b_\ell)/(1 + 1/n)]$ are type-4a agents. They have no outflow and do not own goods in $\Gamma(S)$. }
\end{figure}

\begin{itemize}
\item Type-1 agent: An agent is a type-1 agent if it belongs to $S$ and its surplus increases by the price change. Formally, $b_i$ is type-1 if $b_i \in S$ and $\sum_{g_j \in \Gamma(S)} w_{ij} p_j > \sum_{g_j \in \Gamma(S)} f_{ij}$.

\item Type-2 agent: An agent is a type-2 agent if it belongs to $S$ and its surplus does not increase by the price change. Formally, $b_i$ is type-2 if $b_i \in S$ and $\sum_{g_j \in \Gamma(S)} w_{ij} p_j \leq \sum_{g_j \in \Gamma(S)} f_{ij}$.

\item Type-3 agent: An agent is type-3 agent if it does not belong to $S$ and its surplus increases by the price change, i.e., it partially owns a good in $\Gamma(S)$. Formally, $b_i \not\in S$ and $w_{ij} > 0$ for some $g_j \in \Gamma(S)$. 

\item Type-4a agent: An agent is a type-4a agent if it does not belong to $S$ and has no ownership in a good in $\Gamma(S)$ and its surplus is at least $r_{\min}/({1 + {1}/{n}})$, where $r_{\min}$ is the minimum surplus of any agent in $S$. Formally, $b_i \not\in S$, $w_{ij} = 0$ for all $j \in \Gamma(S)$ and $r_f(b_i) \ge r_{\min}/({1 + {1}/{n}})$. By definition of $S$ such $b_i$ have no outflow, i.e., $f_{si} = 0$, and no ownership of any good in $\Gamma(S)$, i.e., $w_{ij} = 0$ for any $j \in \Gamma(S)$. 

\item Type-4b agent: An agent is a type-4b agent if it does not belong to $S$ and has no ownership in a good in $\Gamma(S)$ and its surplus is strictly less than  $r_{\min}/(1 + {1}/{n})$, where $r_{\min}$ is the minimum surplus of any agent in $S$. Formally, $b_i \not\in S$, $w_{ij} = 0$ for all $j \in \Gamma(S)$ and $r_f(b_i) < r_{\min}/(1 + {1}/{n})$.
\end{itemize}
We abbreviate the above sets of agents of each type as $T_1$, $T_2$, $T_3$, $T_{4a}$ and $T_{4b}$. Notice that the surpluses of agents in $T_{4a} \cup T_{4b}$ remain unaffected by increase of price of goods in $\Gamma(S)$. We now define $x_{23}$, $x_{24}$, $x_{13}$ as the minimal $x$ such that $r_{f'}(b_i) = r_{f'}(b_j)$ for some $b_i \in T_2, b_j \in T_3$ and $b_i \in T_2, b_j \in T_{4b}$ and $b_i \in T_1,b_j \in T_3$ respectively. We also define $x_2$ as the minimal $x$ when $r_{f'}(b_i) = 0$ where $b_i \in T_2$. 

Finally, 
\[ x_{\max} =    \begin{cases} 
      1 + \frac{1}{Rn^3} & \text{when}  \min_{g_j \in \Gamma(S)} p_j < Rn^4 m W\\
      1 + \frac{1}{Rkn^2} & \text{when} \min_{g_j \in \Gamma(S)} p_j \geq Rn^4 m W,
\end{cases}\]
where $k$ is the number of agents in $S$ that partially own a good in $\Gamma(S)$ and $R = 8 e^2$. Note that $k$ is at least the number of type-1 agents.\footnote{In~\cite{DGM:Arrow-Debreu}, $\xmax$ is defined as 
\[ x_{max} = \begin{cases} 
      1+ \frac{1}{Rn^3} &  \text{if there are type-3 agents}\\
      1+ \frac{1}{Rkn^2} & \text{otherwise, where $k$ = number of type-1 agents}.
   \end{cases} \]
The revised definition we change the classification of iterations. We do not classify them based on the type of the agents in $S$, but by looking at the smallest price of a good in $\Gamma(S)$ and we have a larger $k$ ($k$ is at least the number of type-1 agents).}
We distinguish light and heavy iterations. An iteration is light, if $p_j < R n^4 m W$ for some good $g_j \in \Gamma(S)$ (some good in demand is not heavily priced), and heavy if $p_j \ge R n^4 m W$ for all goods $g_j \in \Gamma(S)$ (all goods in demand are heavily priced). 

\paragraph*{Effect of an Iteration:} We multiply the prices of the goods in $\Gamma(S)$ by $x$ and update $f$, $p$ to $f'$ and $p'$ as in (1)-(4)  and $N_p$ to $N_{p'}$. Note that the goods completely sold w.r.t.~$f$ in $N_p$ are also completely sold with respect to $N_{p'}$. Maximizing and balancing the flow will not increase the surplus of a good from zero to a positive value. Thus all goods that are completely w.r.t.~$f$ in $N_p$ are also completely sold w.r.t. the balanced flow $f''$ in $N_{p'}$. The 2-norm of $r_{f''}$ is at most the $L2$ norm of $r_{f'}$. 

We stop updating the prices once the $L2$ norm of the surplus vector is at most $\varepsilon = {1}/(8(n+m)^{4(n+m)}(U \cdot W)^{3(n+m)})$. 

\subsection{Analysis of the Algorithm}
Before we present the analysis of our algorithm we briefly indicate why the price-update scheme used in~\cite{DGM:Arrow-Debreu} does not generalize. The algorithm in~\cite{DGM:Arrow-Debreu} reduces the number of iterations of the algorithm in~\cite{\citeDM} by a multiplicative factor of $\Omega(n)$ by performing a more careful selection of the set $S$ and crucial change in the price update rule. In particular they set 
\[ x_{max} = \begin{cases} 
      1+ \frac{1}{Rn^3} &  \text{if there are type-3 agents}\\
      1+ \frac{1}{Rkn^2} & \text{otherwise, where $k$ = number of type-1 agents}.
   \end{cases} \]
Since in the special case, every agent owns only all of one good, $k$ equals the number of type-1 agents in $S$ and the number of goods in $\Gamma(S)$ in all iterations that do not involve a type-3 agent. We enlist two crucial arguments that are necessary (not sufficient) to reduce the number of iterations by $\Omega(n)$ from that in~\cite{\citeDM} and their dependencies on $k$ as follows,
\begin{enumerate}
    \item There is a decrease in the $L2$ norm of the surplus vector in every balancing iteration without a type-3 agent. This argument crucially relies on $k$ being equal to the number of type-1 agents.
    \item The total multiplicative increase of the $L2$ norm of the surplus vector in all $x_{\mathit{max}}$ iterations without a type-3 agent is at most $(nU)^{\mathcal{O}(n)}$. This claim relies on the fact that $k$ equals the number of goods in $\Gamma(S)$.
\end{enumerate}
In the general scenario we do not have such one to one correspondence between the agents and the goods. Therefore if we choose $k$ to be either the number of type-1 agents in $S$ or the number of goods in $\Gamma(S)$, then one of the claims from above will fail. Thus we need at least a different classification of the iterations or a different choice of $k$ than the ones used in~\cite{DGM:Arrow-Debreu}. In our algorithm we do both and thereby highlighten more hidden structure in the problem. 

Like the algorithms in~\cite{\citeDM,DGM:Arrow-Debreu}, in every iteration Algorithm \ref{Program} only increases the prices of goods that are completely sold (since $f$ is a balanced flow and the agents in $S$ have positive surplus, the goods in $\Gamma(S)$ will have zero surplus). Since the sum of surpluses of the agents equals the sum of surpluses of the goods, both are therefore non-decreasing during a price update. Also the goods completely sold w.r.t.~$f$ in $N_p$ also remain sold w.r.t.~$f''$ in $N_{p'}$ in every iteration. Therefore the sum of surpluses of the agents is non-increasing throughout the algorithm. Initially (when the price of every good $g_j$ is $1$) this sum is at most $nmW$. 

\begin{observation}
\label{boundedsurplusum}
 The $L1$ norm of the surplus vector is non-increasing throughout the algorithm and is at most $n m W$.
\end{observation}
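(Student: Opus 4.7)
My plan is to prove both claims by showing that the sum of agent surpluses equals the sum of good surpluses at all times, and tracking what happens to the good surpluses across the two stages of an iteration (the price/flow update and the subsequent rebalancing).

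First, I would establish the identity $\sum_i r_f(b_i) = \sum_j r_f(g_j)$ for any valid flow $f$ in $N_p$. This is immediate from flow conservation: the total capacity out of $s$ equals $\sum_i\sum_j w_{ij}p_j = $ the total capacity into $t$, and the total flow out of $s$ equals the total flow into $t$, so subtracting gives the identity. Since the surpluses are non-negative, the $L1$ norm of the surplus vector equals this common sum.

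Next I would analyze the price/flow update $(f,p) \to (f',p')$ described in (\ref{eq-1})--(\ref{eq-4}). For $g_j \in \Gamma(S)$, the good is completely sold under $f$ (since $f$ is balanced and the agents in $S$ have strictly positive surplus), i.e., $f_{jt} = \sum_i w_{ij}p_j$; the update multiplies both $f_{jt}$ and $p_j$ by $x$, so $r_{f'}(g_j) = x\sum_i w_{ij}p_j - xf_{jt} = 0$ remains zero. For $g_j \notin \Gamma(S)$, both the capacity and the flow on $(g_j,t)$ are unchanged, so $r_{f'}(g_j) = r_f(g_j)$. Hence the sum of good surpluses is unchanged from $f$ to $f'$, and by the identity above so is the sum of agent surpluses.

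Then I would handle the transition from $f'$ to the new balanced flow $f''$ in $N_{p'}$. Since every balanced flow is a maximum flow, and $f'$ is a valid flow in $N_{p'}$, we have $\sum_i f''_{si} \ge \sum_i f'_{si}$. Because the total source capacity in $N_{p'}$ is fixed, this means $\sum_i r_{f''}(b_i) \le \sum_i r_{f'}(b_i)$. Combining with the previous step, the sum of agent surpluses (equivalently, the $L1$ norm of the surplus vector) is non-increasing across each iteration.

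For the initial bound, when $p_j = 1$ for all $j$, the total good-side capacity is $\sum_j\sum_i w_{ij} \le n m W$, so the sum of good surpluses at the start is at most $nmW$, and by the identity the $L1$ norm of the agent surplus vector is at most $nmW$. Combined with monotonicity, this gives the claimed bound throughout the algorithm. I do not anticipate any real obstacle; the only subtlety is noting that $\Gamma(S)$-goods have zero surplus before the update (which follows from $f$ being balanced and $S$ having positive surplus) so that the multiplicative scaling preserves their saturated status.
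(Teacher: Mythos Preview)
Your proposal is correct and follows essentially the same approach as the paper: establish $\sum_i r_f(b_i)=\sum_j r_f(g_j)$, observe that the price/flow update leaves good surpluses unchanged (since $\Gamma(S)$-goods are fully sold and the others are untouched), and conclude non-increase plus the initial $nmW$ bound. The only cosmetic difference is in the rebalancing step: the paper argues via the good side (``goods completely sold w.r.t.\ $f$ in $N_p$ also remain sold w.r.t.\ $f''$ in $N_{p'}$''), whereas you argue via the agent side (balanced $\Rightarrow$ maximum, so $\sum_i f''_{si}\ge \sum_i f'_{si}$); both yield the same inequality.
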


\subsubsection{An Upper Bound on the Maximum Price}

Observe that in every iteration of the algorithm there is a good with unit price. This follows from the fact that the goods that are completely sold stay completely sold and since only the prices of goods that are completely sold are increased, the price of goods that are not completely sold is equal to the initial price and hence equal to one. We now derive an upper bound on the maximum price of a good.   
\begin{lemma}
\label{ratio-bounds} At any time during the course of the algorithm the maximum price is at most 
$\max(2,U)^{m-1} \cdot W^{2m-2}$ and the maximum budget is at most  $\max(2,U)^{m} \cdot W^{2m}$.
 \end{lemma}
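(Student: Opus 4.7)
The lemma reduces to bounding $p_{\max}$. The paragraph preceding the statement gives the invariant that at every step some good has price $1$ (goods that are not completely sold retain their initial price), and prices only increase, so $\min_j p_j = 1$ throughout the execution.

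To bound $p_{\max}$, I plan a chain argument in the equality network. Let $g^*$ achieve $p_{\max}$. If $p_{g^*} > 1$ then $g^*$ is completely sold, so some agent $b$ sends positive money into $g^*$ in the balanced flow, so $g^*$ is MBB for $b$. The MBB inequality at $b$ gives $p_{g_\ell} \ge p_{g^*}\,u_{b\ell}/u_{b g^*} \ge p_{g^*}/U$ for every good $g_\ell$ with $u_{b\ell} > 0$. Starting from $g_0 = g^*$, I iteratively construct a sequence of distinct goods $g_0, g_1, \ldots, g_t$ with $p_{g_t} = 1$ and $p_{g_i} \le \max(2, U) \cdot W^2 \cdot p_{g_{i+1}}$ at each step. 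Each step uses the balanced-flow structure to find either (a) an MBB-adjacent good (two goods in the MBB set of a common agent, giving ratio $\le U$ by the previous display), or (b) an ownership-adjacent good (an agent funds the revenue of $g_i$ from a good it owns, with one application of the MBB inequality and the weight bound $w_{ij} \le W$ yielding ratio $\le U W^2$). The nonstandard assumption, applied to the set of agents already used in the chain, prevents the chain from getting trapped inside a closed sub-market of agents whose interests are confined to the goods they fully own, and thereby guarantees that the chain eventually reaches a unit-priced good. Since at most $m$ distinct goods can be visited, $t \le m - 1$, giving $p_{\max} \le \max(2, U)^{m-1}\,W^{2m-2}$.

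The budget bound is then immediate: for any agent $b_i$, $\sum_j w_{ij} p_j \le m W \cdot p_{\max}$; combined with the bound on $p_{\max}$, the factor $m$ is absorbed into the extra slack of $\max(2, U) \cdot W^2$ between the two claimed bounds.

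The principal obstacle is formalizing the chain construction: verifying that each step strictly descends in price with the promised factor at most $\max(2, U) \cdot W^2$, and that the nonstandard assumption can be wielded to ensure the chain reaches a unit-priced good within $m-1$ steps without stagnating inside a closed subsystem. I expect this to require careful bookkeeping that tracks the set of agents and goods already used in the chain and invokes the nonstandard assumption on the complement to always produce a valid continuation; the case analysis distinguishing MBB-adjacent from ownership-adjacent steps must also be handled explicitly to account for the $U$ vs.\ $U W^2$ contribution per step.
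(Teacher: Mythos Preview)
Your chain idea is morally close to the paper's argument, but two steps do not go through as stated.

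\textbf{The ownership step does not give a per-step ratio of $UW^2$.} In your case~(b) you write that ``an agent funds the revenue of $g_i$ from a good it owns,'' and that one MBB inequality plus the bound $w_{ij}\le W$ yields $p_{g_i}\le UW^2\,p_{g_{i+1}}$. But an agent who buys $g_i$ may own many goods, each contributing a small share of its budget; there is no single owned good whose price controls $p_{g_i}$ up to a factor of $UW^2$. Conversely, if you instead pick an \emph{owner} $b$ of $g_i$ and follow $b$'s spending, then $b$'s budget is at least $p_{g_i}$, but $b$ may spread that budget over many goods, and the total value of those goods can far exceed any single $p_{g_{i+1}}$ by a factor of order $m$, not $W^2$. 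Either way you incur a factor that depends on $m$ per step, which blows up to $m^{m}$ over the chain. The paper avoids this by doing induction bottom-up on the goods sorted by price: when the ownership case arises, the owner $b_h$ of a high-priced good invests \emph{only} in the already-bounded lower-priced goods, so $p_{i+1}\le m_h\le W\sum_{j\le i}p_j$, and the sum of a geometric series in $\max(2,U)W^2$ absorbs the summation without any per-step loss. That geometric-sum accounting is precisely what your single-link chain cannot reproduce.

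\textbf{The budget bound does not follow from $mW\,p_{\max}$.} You argue that $\sum_j w_{ij}p_j\le mW\,p_{\max}$ and that ``the factor $m$ is absorbed into the extra slack of $\max(2,U)\cdot W^2$.'' That would require $m\le \max(2,U)\,W$, which is false in general (take $U=2$, $W=1$, $m$ large). The paper instead bounds $\sum_j w_{ij}p_j\le W\sum_j p_j$ and then uses the \emph{individual} bounds $p_j\le(\max(2,U)W^2)^{j-1}$ (available from the induction) to sum the geometric series, which is what makes the factor $m$ disappear. So even if your chain argument for $p_{\max}$ were repaired, bounding only $p_{\max}$ is not enough; you need the graded bounds on all $p_j$.

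The fix is to reorganize your descent as the paper does: sort goods by price, prove $p_i\le(\max(2,U)W^2)^{i-1}$ by induction, and in the inductive step split on whether some buyer of a high-priced good is interested in a cheaper good (MBB gives ratio $\le U$) or not (then the nonstandard assumption applied to the set $\hat{B}$ of buyers of the expensive goods produces an outside owner whose entire budget is spent on the cheap side, and the geometric sum closes the bound).
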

 \begin{proof}
  Let us renumber the goods in  increasing order of their prices. We show that $p_i \leq \max(2,U)^{i-1} \cdot W^{2i-2}$ by induction on $i$. The smallest price is 1 and this establishes the induction base. Consider an arbitrary $i$ and let $\hat{G}$ denote the set of goods $\left\{g_{i+1},...,g_m\right\}$ and $\hat{B}$ be the agents investing on the goods in $\hat{G}$. We will derive a bound on $p_{i+1}$. We distinguish two cases. 

Assume first that there is an agent in $b_h \in \hat{B}$ that is also interested in a good in $G \setminus \hat{G}$; say $b_h$ invests on good $g_j \in \hat{G}$ and is interested in good $g_\ell \in G \setminus \hat{G}$. Then  ${u_{hj}}/{p_j} \geq {u_{h\ell}}/{p_\ell}$ and hence $p_{i+1} \leq p_j \leq U \cdot p_\ell \leq U \cdot p_i \leq U \cdot \max(2,U)^{i-1} \cdot W^{2i-2} \leq \max(2,U)^i \cdot W^{2i}$.

Assume next that the agents in $\hat{B}$ are only interested in the goods in $\hat{G}$. Then there must be a good in $\hat{G}$, say $g_k$, that is partially owned by an agent in $ B \setminus \hat{B}$. Otherwise, the agents in $\hat{B}$ will only be interested in goods completely owned by them. Therefore let $b_{h} \in B \setminus \hat{B}$ be the agent that partially owns $g_k$. The budget $m_h$ of $b_h$ is at least $w_{hk}p_k$ and hence  at least $p_k$. 
Since $b_{h}$ invests only in goods in $G \setminus \hat{G}$, its budget is at most the total value of the goods in $G \setminus \hat{G}$. Thus 
\[ m_{h} \leq W \cdot \sum_{j \in [i]} p_j \leq W \cdot \sum_{j \in [i]}  \left( \max(2,U)^{j-1} W^{2j-2}\right) \leq W  \max(2,U)^{i} W^{2i-1}=\max(2,U)^{i} W^{2i} \]
and hence $p_{i+1} \le p_k \le \max(2,U)^{i} W^{2i} $.
The maximum budget of an agent is at most $W$ times the total price of all goods and hence is bounded by $W \cdot \sum_{j \in [m]}\left( \max(2,U)^{j-1} \cdot W^{2j-2}\right) \leq \max(2,U)^{m} \cdot W^{2m}$.
\end{proof}

In particular, when $n = m$ and $w$ is an identity matrix, we have an upper bound of $\max(2,U)^{n-1}$ for the highest price of a good in contrast to $\max(n,U)^{n}$ in~\cite{\citeDM,DGM:Arrow-Debreu}. Also note that the maximum price of a good and the maximum budget of an agent is independent of the number of agents. We now separately bound the $\xmax$-iterations ($x = \xmax$) and balancing iterations ($x < \xmax$).

\subsubsection{Bounding the number of $\xmax$-iterations}

In this section we will be bounding the light and the heavy $\xmax$-iterations separately. For bounding both classes of iterations, we use upper bounds on the prices of the goods. A more aggressive price update scheme is used for the heavy $\xmax$-iterations as the prices of all goods in $\Gamma(S)$ in such iterations are high. Such aggressive price update may apparently result in a significant multiplicative increase in the $L2$ norm of the surplus vector. We address this concern in the next subsection. We first show that $\xmax$-iterations where there is at least one type-3 agent are light.
 \begin{lemma}
$\xmax$-iterations with at least one type-3 agent are light. 
 \end{lemma}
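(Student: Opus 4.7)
I would prove the contrapositive: assuming the iteration is heavy ($p_l\ge Rn^4mW$ for every $g_l\in\Gamma(S)$) and contains a type-3 agent $b_j\notin S$, with $w_{jj_0}\ge 1$ for some $g_{j_0}\in\Gamma(S)$, I plan to exhibit a type-2 agent $b_k$ such that $x_{23}<\xmax$, contradicting $x=\xmax$. The key quantitative inputs are the $L_1$ surplus bound $\norm{r_f}_1\le nmW$ from Observation~\ref{boundedsurplusum} and the fact that $k\le|S|<n$ whenever a type-3 agent exists.

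\emph{Step 1: a type-2 agent exists.} Every good in $\Gamma(S)$ is completely sold and, as noted after the definition of $S$, receives flow only from agents of $S$. Summing the flow-conservation identity $f_{lt}=\sum_{b_i\in S} f_{il}$ and $f_{lt}=\sum_i w_{il}p_l$ over $g_l\in\Gamma(S)$ gives
\[
\sum_{b_i\in S}\sum_{g_l\in\Gamma(S)} f_{il}\;=\;\sum_{b_i\in S}\sum_{g_l\in\Gamma(S)} w_{il}p_l\;+\;\sum_{b_i\notin S}\sum_{g_l\in\Gamma(S)} w_{il}p_l.
\]
If every agent of $S$ were type-1, then summing the strict inequality $\sum_{g_l\in\Gamma(S)} w_{il}p_l > \sum_{g_l\in\Gamma(S)} f_{il}$ over $b_i\in S$ would force $\sum_{b_i\notin S}\sum_{g_l\in\Gamma(S)} w_{il}p_l < 0$, which is absurd. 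Hence some $b_k\in S$ must be type-2.

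\emph{Steps 2 and 3: bounding $x_{23}$ and comparing with $\xmax$.} Using the surplus-update formulas, I get $r_{f'}(b_k)=r_f(b_k)-(x-1)A$ with $A=\sum_{g_l\in\Gamma(S)}(f_{kl}-w_{kl}p_l)\ge 0$ (type-2) and $r_{f'}(b_j)=r_f(b_j)+(x-1)B$ with $B=\sum_{g_l\in\Gamma(S)} w_{jl}p_l\ge p_{j_0}\ge Rn^4mW$ (heavy). Equating and applying $r_f(b_k)\le\norm{r_f}_1\le nmW$ yields
\[
x_{23}-1\;\le\;\frac{r_f(b_k)-r_f(b_j)}{A+B}\;\le\;\frac{r_f(b_k)}{B}\;\le\;\frac{nmW}{Rn^4mW}\;=\;\frac{1}{Rn^3}.
\]
Because $b_j\notin S$, we have $|S|\le n-1$ and hence $k\le|S|<n$, so in the heavy case $\xmax-1=\frac{1}{Rkn^2}>\frac{1}{Rn^3}\ge x_{23}-1$. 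Therefore $x=\min(\xeq,x_{23},x_{24},x_{13},x_2,\xmax)\le x_{23}<\xmax$, contradicting the $\xmax$-iteration assumption. The decisive (and most delicate) step is Step~1: one must exploit the strict inequality in the type-1 definition, coupled with complete sale of $\Gamma(S)$, to force the existence of a type-2 agent whenever a type-3 agent is present; the subsequent bounds on $x_{23}$ and the comparison with $\xmax$ are then routine calculations from the update formulas and the $L_1$ surplus bound.
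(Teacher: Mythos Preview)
Your argument is correct, but it takes a noticeably different and longer route than the paper's. The paper argues directly: in any $\xmax$-iteration (light or heavy) one has $\xmax-1\ge 1/(Rn^3)$, so the surplus of a type-3 agent $b_i$ with $w_{ij}\ge 1$ for some $g_j\in\Gamma(S)$ increases additively by at least $p_j/(Rn^3)$; since the sum of agent surpluses w.r.t.\ $f'$ is still at most $nmW$ (goods in $\Gamma(S)$ remain fully sold and $x\le x_2$ keeps all surpluses nonnegative), this forces $p_j\le Rn^4mW$. No type-2 agent is ever invoked, and no comparison of $x_{23}$ with $\xmax$ is needed.

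Your approach is to assume heavy, exhibit a type-2 agent via flow conservation on $\Gamma(S)$, and then show $x_{23}<\xmax$ to contradict $x=\xmax$. This is sound, and your inequality $x_{23}-1\le r_f(b_k)/B\le 1/(Rn^3)$ together with $k<n$ (the $\xmax$ parameter, for which $k\le|S|\le n-1$ since the type-3 agent is outside $S$) does give the strict inequality. The cost is that you do more work: Step~1 is not actually ``delicate'' (it is a two-line flow-conservation identity plus the strictness in the type-1 definition), and the paper avoids it entirely. What your route buys is perhaps a cleaner strict inequality at the boundary $p_j=Rn^4mW$, whereas the paper's argument as written only gives $p_j\le Rn^4mW$; but this is cosmetic. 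Also note a minor notational clash in your write-up: you use $k$ both for the index of the type-2 agent $b_k$ and for the parameter in $\xmax=1+1/(Rkn^2)$.
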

 \begin{proof}
Let $b_i$ be a type-3 agent. Then there is a $g_j \in \Gamma(S)$ with $w_{ij} > 0$. The additive increase in the surplus of a type-3 agent $b_i$ during an $\xmax$-iteration is at least ${w_{ij}\cdot p_j}/{Rn^3}$. Since the total surplus is always at most $n m W$ (by Observation \ref{boundedsurplusum}), the increase in the surplus of any agent is at most $n m W$. This immediately implies that $p_j \leq Rn^4 m W$.   
 \end{proof}
Now we bound the number of light $\xmax$-iteration.

\begin{lemma}
 The number of light $\xmax$-iterations is at most $10R^2 n^3 m \log(nmW)$.
\end{lemma}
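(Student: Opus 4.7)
The plan is to use a standard ``witness-and-charge'' argument that exploits monotonicity of prices. First, prices are coordinate-wise non-decreasing throughout the algorithm: in each iteration every $p_j$ is either left alone or multiplied by $x \geq 1$, and every price starts at $1$, so each $p_j \geq 1$ at all times. Moreover, in any $\xmax$-iteration that is light, the case split defining $\xmax$ forces $x = \xmax = 1 + \tfrac{1}{Rn^{3}}$, and by the definition of ``light'' there exists a good $g_j \in \Gamma(S)$ with $p_j < Rn^{4}mW$.

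For each light $\xmax$-iteration I would assign as a \emph{witness} an arbitrary good $g_j \in \Gamma(S)$ with $p_j < Rn^{4}mW$; such a good exists by the above. By the price update rule, that witness then has its price multiplied by $1 + \tfrac{1}{Rn^{3}}$ in this iteration. Next I would bound how often a fixed good $g_j$ can serve as a witness. Suppose it is picked $k$ times. Between consecutive witness iterations the price $p_j$ can only grow further (by monotonicity), and on each witness iteration it is multiplied by $1 + \tfrac{1}{Rn^{3}}$. Starting from $p_j = 1$, the value of $p_j$ immediately before the $k$-th witness iteration is therefore at least $\bigl(1 + \tfrac{1}{Rn^{3}}\bigr)^{k-1}$, and this must be strictly less than $Rn^{4}mW$ for the witness property to hold.

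Taking logarithms and using $\ln\!\bigl(1 + \tfrac{1}{Rn^{3}}\bigr) \ge \tfrac{1}{2Rn^{3}}$, this gives $k \le 2Rn^{3}\ln(Rn^{4}mW) + 1 = \mathcal{O}(Rn^{3}\log(nmW))$, since $R = 8e^{2}$ is a universal constant. Summing over the $m$ goods, the total number of light $\xmax$-iterations is at most $\mathcal{O}(Rn^{3}m\log(nmW))$, and the slack of an extra factor $R$ in the stated bound $10R^{2}n^{3}m\log(nmW)$ absorbs all hidden constants.

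I do not anticipate a serious obstacle. The only mildly subtle point is that only \emph{one} good in $\Gamma(S)$ needs to lie below the threshold $Rn^{4}mW$ for an iteration to be light, so the witness must be chosen per iteration rather than uniformly; price monotonicity is what ensures that once a good's price crosses $Rn^{4}mW$ it can never serve as a witness again, capping its contribution at the logarithmic bound above.
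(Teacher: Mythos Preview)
Your proof is correct and follows essentially the same approach as the paper: both arguments charge each light $\xmax$-iteration to a good in $\Gamma(S)$ whose price is still below the threshold $Rn^{4}mW$, use that this good's price is multiplied by $1+\tfrac{1}{Rn^{3}}$ in that iteration, and then bound via pigeonhole over the $m$ goods how many times any fixed good can be charged before its price exceeds the threshold. The only cosmetic difference is that the paper always takes the minimum-priced good in $\Gamma(S)$ as the witness while you allow any below-threshold good; this makes no difference to the argument.
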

\begin{proof}
Assume otherwise. Then there exists a good $g_j$ that is the minimum priced good in $\Gamma(S)$ in more than $10R^2 n^3\log(nmW)$ iterations. Before the last such iteration, $p_j > (1 + \frac{1}{Rn^3})^{10R^2 n^3\log(nmW)} > e^{5R\log(nmW)} = n^{5R} m^{5R}W^{5R} > Rn^4 m W$, which is a contradiction. The second to last inequality uses the fact that $1+x > e^{\frac{x}{2}}$ for $0 < x \leq 1$. 
\end{proof}

We turn to heavy $\xmax$-iterations. For such iterations, there exists no type-3 agent and hence the goods in $\Gamma(S)$ are completely owned by the agents in $T_1 \cup T_2$. Thus 
\[ \sum_{b_i \in T_1 \cup T_2} \sum_{g_j \in \Gamma(S)} f_{i j} = \sum_{g_j \in \Gamma(S)} \sum_{i \in T_1 \cup T_2} w_{ij} p_j = \sum_{b_i \in T_1}\sum_{g_j \in \Gamma(S)} w_{ij} \cdot p_j + \sum_{b_i \in T_2}\sum_{g_j \in \Gamma(S)} w_{ij} \cdot p_j,\]
and hence 
\begin{equation}\label{type1 versus type 2 surplus} \sum_{b_i \in T_1} \sum_{g_j \in \Gamma(S)} (w_{ij}\cdot p_j -f_{i j})= \sum_{b_i \in T_2}\sum_{g_j \in \Gamma(S)}(f_{i j}- w_{ij} \cdot p_j). \end{equation}
Note that type-1 and type-2 agents can own goods outside of $\Gamma(S)$. However the above relation will help us prove that this \say{excess budget} is at most $n m W$. In fact the following lemma plays a pivotal role to bound the multiplicative increase in the $L2$ norm of the surplus vector in the next subsubsection.

\begin{lemma}
\label{technical}
 For every agent $b_i \in S$ in a heavy $\xmax$-iteration, $\sum_{g_j \notin \Gamma(S)} w_{ij}p_j \leq n m W$.
\end{lemma}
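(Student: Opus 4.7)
The plan is to exploit the heavy-iteration assumption (which, by the previous lemma, guarantees no type-3 agents) together with the balanced-flow properties to express the quantity $\sum_{g_j \notin \Gamma(S)} w_{ij}p_j$ as a surplus, and then use Observation~\ref{boundedsurplusum}.

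First I would argue that in a heavy $x_{\max}$-iteration, all goods in $\Gamma(S)$ are owned entirely by agents in $S = T_1 \cup T_2$. This is immediate from the previous lemma: if any $b_k \notin S$ had $w_{kj} > 0$ for some $g_j \in \Gamma(S)$, then $b_k$ would be a type-3 agent, making the iteration light.

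Next I would track where money flows. By the definition of $\Gamma(S)$, every equality edge out of an agent $b_i \in S$ goes to a good in $\Gamma(S)$, so the outflow of $b_i$ satisfies $f_{si} = \sum_{g_j \in \Gamma(S)} f_{ij}$. Moreover, in a balanced flow, the stated property that edges $(b_k,g_j)$ with $b_k \notin S$ and $g_j \in \Gamma(S)$ carry no flow, combined with the fact that goods in $\Gamma(S)$ are completely sold, gives $\sum_{b_i \in S} f_{ij} = \sum_{b_i} w_{ij}p_j$ for every $g_j \in \Gamma(S)$. Using the ownership observation above, the right-hand side equals $\sum_{b_i \in S} w_{ij}p_j$. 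Summing over $g_j \in \Gamma(S)$ yields
\begin{equation*}
\sum_{b_i \in S} f_{si} \;=\; \sum_{b_i \in S}\sum_{g_j \in \Gamma(S)} f_{ij} \;=\; \sum_{b_i \in S}\sum_{g_j \in \Gamma(S)} w_{ij} p_j.
\end{equation*}

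From here the lemma follows by a single substitution. The total surplus of $S$ is
\begin{equation*}
\sum_{b_i \in S} r_f(b_i) \;=\; \sum_{b_i \in S}\sum_{j \in [m]} w_{ij} p_j \;-\; \sum_{b_i \in S} f_{si} \;=\; \sum_{b_i \in S}\sum_{g_j \notin \Gamma(S)} w_{ij} p_j,
\end{equation*}
where the second equality uses the identity established above, splitting the budget into the $\Gamma(S)$ and non-$\Gamma(S)$ parts and canceling. Since each term in the last sum is non-negative and the full sum is bounded by the total surplus, which is at most $nmW$ by Observation~\ref{boundedsurplusum}, we conclude $\sum_{g_j \notin \Gamma(S)} w_{ij} p_j \le nmW$ for every individual $b_i \in S$.

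I do not anticipate a serious obstacle: the main conceptual step is recognizing that the absence of type-3 agents makes the flow-budget identity for $\Gamma(S)$ tight on both sides (agent-side and good-side), after which the non-$\Gamma(S)$ budget of any $b_i \in S$ is literally its surplus. The only place that requires care is justifying that $b_i \in S$ has no equality edges outside $\Gamma(S)$ (immediate from the definition of $\Gamma(S)$ as the neighborhood of $S$) and that no flow enters $\Gamma(S)$ from outside $S$ in a balanced flow (recorded in the high-surplus-agents paragraph).
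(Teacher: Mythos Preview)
Your proof is correct and in fact cleaner than the paper's. The paper splits into the cases $b_i \in T_1$ and $b_i \in T_2$: for type-1 agents it bounds $\sum_{g_j\notin\Gamma(S)} w_{ij}p_j$ by the individual surplus $r_f(b_i)$, while for type-2 agents it invokes the identity (\ref{type1 versus type 2 surplus}) to pass through the aggregate $\sum_{b_h\in T_1} r_f(b_h)$. You bypass the case distinction entirely by summing the surplus identity over all of $S$ at once, obtaining $\sum_{b_i\in S} r_f(b_i) = \sum_{b_i\in S}\sum_{g_j\notin\Gamma(S)} w_{ij}p_j$, and then using non-negativity of each summand. Both routes rest on the same two facts---no flow into $\Gamma(S)$ from outside $S$, and no ownership of $\Gamma(S)$ outside $S$---but your aggregation makes the cancellation transparent and avoids the bookkeeping of (\ref{type1 versus type 2 surplus}).

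One small caveat: in your closing remarks you write that ``the non-$\Gamma(S)$ budget of any $b_i\in S$ is literally its surplus.'' That is only true in aggregate; for an individual type-2 agent the non-$\Gamma(S)$ budget can exceed $r_f(b_i)$ (which is exactly why the paper needs a separate argument for $T_2$). Your actual proof does not rely on this individual claim---you bound each term by the \emph{total} surplus---so the argument stands, but you should adjust the informal summary to match what you proved.
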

\begin{proof}
For $b_i \in S$, we have $r_f(b_i) = \sum_{g_j \not\in \Gamma(S)} w_{ij}p_j + \sum_{g_j \in \Gamma(S)} (w_{ij} p_j - f_{i j})$. 
For $b_i \in T_1$, this implies $\sum_{g_j \not\in \Gamma(S)} w_{ij}p_j \le r_f(b_i) \le n m W$. For $b_i \in T_2$, using (\ref{type1 versus type 2 surplus}) 
\begin{align*}
\sum_{g_j \not\in \Gamma(S)} w_{ij}p_j &= r_f(b_i) + \sum_{g_j \in \Gamma(S)} (f_{i j} - w_{ij} p_j )
                                           \le r_f(b_i) + \sum_{b_h \in T_2} \sum_{g_j \in \Gamma(S)} (f_{hj} - w_{hj} p_j )\\
                                           &= r_f(b_i) + \sum_{b_h \in T_1} \sum_{g_j \in \Gamma(S)} (w_{hj}p_j  - f_{hj})    
                                           \le r_f(b_i) + \sum_{b_h \in T_1} r_f(b_h)
                                           \le n m W.
\end{align*} \vspace{-1.2cm}\par
\end{proof} 

In a heavy $\xmax$-iteration, the price of any good in $\Gamma(S)$ is at least $Rn^4mW$ and hence the budget of any agent that partially owns a good in $\Gamma(S)$ is at least that much. By the above, the ownership of the goods outside $\Gamma(S)$ contribute very little to the budget of such agents. 
Thus any multiplicative increment on the prices of the goods in $\Gamma(S)$ will inflict an almost equal multiplicative increase in the budget of such agents.

\begin{lemma}
 The number of heavy $\xmax$-iterations is $\mathcal{O}(n^3 m \cdot \log(WU))$.
\end{lemma}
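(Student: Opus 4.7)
The plan is to use a potential function based on the log-budgets of the agents. Define
\[ \Phi \;=\; \sum_{b_i \in B} \log(m_i), \qquad m_i \;=\; \sum_{j \in [m]} w_{ij}\, p_j, \]
the budget of $b_i$ at the current prices. Since every iteration multiplies the prices of goods in $\Gamma(S)$ by $x \geq 1$ and leaves all other prices unchanged, every $m_i$ is non-decreasing throughout the algorithm, so $\Phi$ is monotone non-decreasing. By Lemma~\ref{ratio-bounds}, $m_i \le \max(2,U)^m W^{2m}$ at all times, which yields the absolute upper bound $\Phi \le n \cdot O(m \log(UW)) = O(nm\log(UW))$. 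The initial value $\Phi^{0}$ is non-negative since each agent that owns something has $m_i^0 \ge 1$ at unit prices (integrality of $w$). Hence the total increment of $\Phi$ over the whole execution is $O(nm \log(UW))$.

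The core step is to show that each heavy $\xmax$-iteration contributes $\Omega(1/n^2)$ to $\Phi$. Because the iteration is heavy there are no type-3 agents (by the previous lemma), so every good in $\Gamma(S)$ is owned entirely by agents in $S$; in particular, agents outside $S$ and agents in $S$ that own nothing in $\Gamma(S)$ see their budgets unchanged, and only the $k$ agents of $S$ that partially own a good in $\Gamma(S)$ can contribute. Fix one such $b_i$, and split $m_i = A_i + B_i$ where $A_i = \sum_{g_j \in \Gamma(S)} w_{ij} p_j$ and $B_i = \sum_{g_j \notin \Gamma(S)} w_{ij} p_j$. Because $b_i$ owns at least one unit of some good in $\Gamma(S)$ whose price is at least $Rn^4 mW$ in a heavy iteration, we have $A_i \ge Rn^4 mW$, while Lemma~\ref{technical} gives $B_i \le nmW$. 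Hence $A_i/(A_i+B_i) \ge 1 - 1/(Rn^3) \ge 1/2$.

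After the update, $m_i' = x_{\max} A_i + B_i$ with $x_{\max} - 1 = 1/(Rkn^2)$, so
\[ \frac{m_i'}{m_i} \;=\; 1 + (x_{\max}-1)\,\frac{A_i}{A_i+B_i} \;\ge\; 1 + \frac{1}{2Rkn^2}, \]
and therefore $\log(m_i') - \log(m_i) \ge \tfrac{1}{4Rkn^2}$ (using $\log(1+y) \ge y/2$ for small $y$, which holds since $1/(Rkn^2)$ is small). Summing over the $k$ agents that own a good in $\Gamma(S)$ gives a total $\Phi$-increase of at least $k \cdot \tfrac{1}{4Rkn^2} = \Omega(1/n^2)$ per heavy $\xmax$-iteration.

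Dividing the total admissible increase $O(nm \log(UW))$ by the per-iteration gain $\Omega(1/n^2)$ yields the claimed $O(n^3 m \log(UW))$ bound on the number of heavy $\xmax$-iterations. The main delicate point is the second step: leveraging Lemma~\ref{technical} to argue that in a heavy iteration the ``outside'' budget $B_i$ is negligible compared to $A_i$, so that the budget of each of the $k$ relevant agents really multiplies by essentially the full factor $x_{\max}$ (and not a fraction damped by ownership of cheap goods outside $\Gamma(S)$). The choice of $\xmax = 1 + 1/(Rkn^2)$ is tuned exactly so that $k$ copies of this gain sum to the needed $\Omega(1/n^2)$, which is why the per-iteration denominator carries the factor $k$ rather than a constant.
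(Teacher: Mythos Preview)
Your proof is correct and follows essentially the same approach as the paper: the potential $\Phi = \sum_i \log m_i$ is exactly the paper's $\log M$, the monotonicity and upper bound come from Lemma~\ref{ratio-bounds}, and the per-iteration gain $\Omega(1/n^2)$ is derived via Lemma~\ref{technical} by showing each of the $k$ relevant budgets grows by a factor $1+\Omega(1/(kn^2))$. The only difference is cosmetic---you use the clean $A_i/(A_i+B_i)$ ratio, whereas the paper writes the same estimate as a longer chain of inequalities.
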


\begin{proof}
Consider any heavy $\xmax$ iteration and let $m_i$ denote the budget of agent $b_i$. For any agent $b_i$ that partially owns a good in $\Gamma(S)$, $m_i \geq Rn^4 m W$. The budget $m_i$ of any agent $b_i$ that partially owns a good in $\Gamma(S)$, increases as follows (new budget denoted by $m_i'$):
 \begin{align*}
 m_i' &= \sum_{g_j \notin \Gamma(S)}w_{ij}p_j + (1 + \frac{1}{Rkn^2}) \cdot \sum_{g_j \in \Gamma(S)}w_{ij}p_j\\
 &\ge (1 + \frac{1}{Rkn^2}) \cdot (1+\frac{1}{n^3})^{-1} \cdot (1 + \frac{1}{n^3}) \cdot \sum_{g_j \in \Gamma(S)}w_{ij}p_j\\
&=(1 + \frac{1}{Rkn^2}) \cdot (1+\frac{1}{n^3})^{-1} \cdot   \left(\sum_{g_j \in \Gamma(S)}{w_{ij}p_j} + \frac{\sum_{g_j \in \Gamma(S)}w_{ij}p_j}{n^3}\right)\\
 &\geq (1 + \frac{1}{Rkn^2}) \cdot (1+\frac{1}{n^3})^{-1} \cdot   \left(\sum_{g_j \in \Gamma(S)}{w_{ij}p_j} + nmW\right) &\text{since $\sum_{g_j \in \Gamma(S)}w_{ij}p_j \ge Rn^4 m W$}\\
&\geq (1 + \frac{1}{Rkn^2}) \cdot (1+\frac{1}{n^3})^{-1} \cdot   \left(\sum_{g_j \in \Gamma(S)}{w_{ij}p_j} + \sum_{g_j \notin \Gamma(S)}w_{ij}p_j\right)  &\text{since $\sum_{g_j \not\in \Gamma(S)} w_{ij} p_j \le nmW$}\\
 &\geq (1 + \Omega(\frac{1}{kn^2}))\cdot m_i.
 \end{align*}
 Let $M = \prod_{i \in [n]}m_i$. Since $m_i \le (\max(2,U) W^2)^m$ we have $\log(M) \leq nm \log(UW)$ always. Also $\log M \ge 0$ initially. At any heavy $x_{\max}$ iteration, $\log(M)$ increases by a additive factor of $\log((1 + \Omega(\frac{1}{kn^2}))^{k}) \in \Omega(\frac{1}{n^2})$. Since $\log(M)$ is non-decreasing in every iteration of the algorithm (since the prices of the goods and the budgets of the agents only increase), the number of heavy $\xmax$-iterations  is $\mathcal{O}(n^3 m \log(WU))$. 
\end{proof}
We have now bounded the total number of $\xmax$-iterations by $\mathcal{O}(n^3 m \log(nm UW))$.

\subsubsection{On the Increase in the $L2$-Norm of the Surplus Vector in the $\xmax$-Iterations}

The $L2$ norm of the surplus vector is minimal for the balanced flow. We just look at the difference in the $L2$ norm of the surpluses with respect to the flows $f$ in $N_p$ and $f'$ in $N_{p'}$ (Updated flow as in $\ref{eq-1}$ - $(\ref{eq-4})$) . Note that this difference in surplus is at least as large as the difference between the $L2$ norm of the surplus vector with respect to $f$ in $N_p$ and $f''$ in $N_{p'}$ (balanced flow in $N_{p'}$ in Algorithm \ref{Program}). This suffices as we are upper bounding the difference in the $L2$ norm of the surplus vector in this section.   

In a light $\xmax$-iteration, the $L2$ norm of the surplus vector increases at most by a factor of $(1 + \mathcal{O}(\frac{1}{n^3}))$ and thus the total multiplicative increase in the $L2$ norm of the surplus vector resulting from such iterations is $(1 + \mathcal{O}(\frac{1}{n^3}))^{\mathcal{O}(n^3 m \cdot \log(nmW))} = (nmW)^{\mathcal{O}(m)}$. 

We now bound the multiplicative increase resulting from heavy $\xmax$-iterations.
Despite the more aggressive price update scheme in heavy $\xmax$-iterations, we can assure the same multiplicative increase. As in~\cite{DGM:Arrow-Debreu} we wish to prove that the ratio of the highest to the lowest surplus of the agents in $S$ is at most $1 + \mathcal{O}(\frac{k}{n}))$. One possible approach is to show that the number of distinct surpluses in $S$ is $\mathcal{O}(k)$ (in that case the ratio will be $(1 + \frac{1}{n})^{\mathcal{O}(k)} = 1 + \mathcal{O}(\frac{k}{n})$). In \cite{DGM:Arrow-Debreu}, this is relatively easy to argue, as goods and agents are in one-to-one correspondence and all agents having positive outflow to a good $g$ have same surplus (by the property of balanced flow). This immediately implies that there are at most $2k$ distinct surpluses of the agents in $S$ (additional  $k$ for agents with zero outflow that own one of the goods in $\Gamma(S)$). This argument does not hold in the general scenario as the number of goods can be much larger $\abs{S}$. However Lemma~\ref{technical} gives us a useful structure in the equality network.

\begin{lemma}
 The total multiplicative increase resulting in the $L2$ norm of the surplus vector in heavy $\xmax$-iterations is $(WU)^{\mathcal{O}(m)}$.
\end{lemma}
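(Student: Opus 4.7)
The plan is to follow a three-step skeleton in the spirit of~\cite{DGM:Arrow-Debreu}: (i) prove that in every heavy $\xmax$-iteration the ratio $r_{\max}/r_{\min}$ of the largest to the smallest agent surplus within $S$ is at most $1 + \mathcal{O}(k/n)$; (ii) use this to show that a single heavy iteration can multiply $\norm{r_f}_2$ by at most $1 + \mathcal{O}(1/n^3)$; (iii) telescope over the $\mathcal{O}(n^3 m \log(UW))$ heavy iterations counted in the previous subsection, obtaining $(1 + \mathcal{O}(1/n^3))^{\mathcal{O}(n^3 m \log(UW))} \leq e^{\mathcal{O}(m \log(UW))} = (UW)^{\mathcal{O}(m)}$, which is the claimed bound.

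For step (i), note that in a heavy iteration there is no type-3 agent, so the goods in $\Gamma(S)$ are wholly owned by the $k$ owner-agents inside $S$. The balanced-flow property forces any two agents of $S$ sending positive flow to a common $g_j \in \Gamma(S)$ to share surplus, while no agent outside $S$ pushes flow to $\Gamma(S)$; thus the positive-flow subgraph on $S \cup \Gamma(S)$ partitions $S$ into equivalence classes of equal surplus. I would then invoke Lemma~\ref{technical}: the external contribution $\sum_{g_j \notin \Gamma(S)} w_{ij} p_j \leq nmW$ is negligible compared with any internal ownership $\sum_{g_j \in \Gamma(S)} w_{ij} p_j \geq Rn^4 mW$, so up to a $\mathcal{O}(1/n^3)$ relative error the surpluses in $S$ are determined entirely by the $k$ heavy owners. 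This should cap the effective number of distinct equivalence-class surpluses at $\mathcal{O}(k)$, and since the definition of $S$ forces consecutive non-trivial surpluses to differ by a factor at most $1 + 1/n$, we obtain $r_{\max}/r_{\min} \leq (1 + 1/n)^{\mathcal{O}(k)} \leq 1 + \mathcal{O}(k/n)$.

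For step (ii), set $\delta_i := r_{f'}(b_i) - r_f(b_i) = (x-1)\bigl(\sum_{g_j \in \Gamma(S)} w_{ij} p_j - \sum_{g_j \in \Gamma(S)} f_{ij}\bigr)$; in a heavy iteration $\delta_i \geq 0$ on $T_1$, $\delta_i \leq 0$ on $T_2$, and $\delta_i = 0$ outside $S$. Equation~(\ref{type1 versus type 2 surplus}) gives $D := \sum_{b_i \in T_1} \delta_i = -\sum_{b_i \in T_2} \delta_i \geq 0$, whence
\begin{align*}
\norm{r_{f'}}_2^2 - \norm{r_f}_2^2 = 2\sum_{b_i \in S} r_f(b_i)\,\delta_i + \sum_{b_i \in S}\delta_i^2 \leq 2(r_{\max} - r_{\min})\,D + (2D)^2.
\end{align*}
Using $D \leq (\xmax - 1)\sum_{b_i \in T_1} r_f(b_i) \leq (\xmax - 1)\,k\,r_{\max} = r_{\max}/(Rn^2)$, step~(i)'s estimate $r_{\max} - r_{\min} \leq \mathcal{O}(k/n)\,r_{\min}$, and $\norm{r_f}_2^2 \geq |S|\,r_{\min}^2 \geq k\,r_{\min}^2$ (since $k \leq |S|$), a routine calculation delivers $\norm{r_{f'}}_2 \leq (1 + \mathcal{O}(1/n^3))\,\norm{r_f}_2$, after which step (iii) finishes by the telescoping already displayed above.

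The main obstacle is step~(i). In~\cite{DGM:Arrow-Debreu} the bound $\mathcal{O}(k)$ on the number of distinct surpluses in $S$ is immediate from the agent-good bijection; in our setting $|\Gamma(S)|$ can dwarf $|S|$, so the balanced-flow partition of $S$ may have many more classes. Turning the external-budget estimate of Lemma~\ref{technical} into an effective $\mathcal{O}(k)$ bound on the number of distinct surpluses --- or proving the ratio bound directly by some other route --- is where the general-case argument must genuinely depart from~\cite{DGM:Arrow-Debreu}.
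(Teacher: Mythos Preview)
Your three-step skeleton matches the paper's exactly, and your step~(ii) computation (though organised slightly differently) yields the same $1 + \mathcal{O}(1/n^3)$ per-iteration bound once step~(i) is in hand. The gap is precisely where you flag it: you have not proved that there are at most $\mathcal{O}(k)$ distinct surpluses among the agents whose surplus actually changes.

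Here is the missing argument. Let $S' \subseteq S$ be the $k$ agents owning a positive share of some good in $\Gamma(S)$, and let $S'' \subseteq S$ be the agents with positive outflow. First, $S' \subseteq S''$: any $b_i \in S'$ has budget at least $Rn^4 mW$ (from $\Gamma(S)$-ownership) but surplus at most $nmW$, so it must spend. Second --- and this is the decisive use of Lemma~\ref{technical} --- every agent in $S \setminus S'$ has total budget at most $nmW$, so the \emph{combined} outflow from all of $S \setminus S'$ is at most $n \cdot nmW = n^2 mW$. Since each good in $\Gamma(S)$ is fully sold at price at least $Rn^4 mW > n^2 mW$ and receives no flow from $B \setminus S$, it must receive positive flow from some agent in $S'$. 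Now any $b \in S''$ sends flow to some $g_j \in \Gamma(S)$, which also receives flow from some $b' \in S'$; by the balanced-flow property $r_f(b) = r_f(b')$. Hence the distinct surplus values in $S''$ number at most $\lvert S' \rvert = k$. The minimality in the definition of $S$ then forces consecutive such values to differ by at most a factor $1 + 1/n$, giving $r_1 \le (1 + 1/n)^k r_h \le (1 + 2k/n) r_h$. Agents in $S \setminus S''$ have neither outflow nor $\Gamma(S)$-ownership, so their surplus is unchanged by the update; thus only the ratio on $S''$ is needed for step~(ii), and your calculation there goes through.
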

\begin{proof}
Let $S'$ be the set of agents in $S$ that partially own some good in $\Gamma(S)$ and $k = \abs{S'}$. Let $S''$ be the set of agents in $S$ with positive outflow. Any agent in $S'$ has a budget of at least $Rn^4 m W$ (follows from the definition of heavy $x_{\mathit{max}}$ iteration) and therefore has positive outflow (since its surplus is at most $n m W$ by Observation \ref{boundedsurplusum}). Thus $S' \subseteq S''$. By Lemma \ref{technical}, the budget of any agent in $S \setminus S'$ is at most $n m W$ and hence the total outflow from agents in $S \setminus S'$ is at most $n^2 m W$. Therefore any good in $\Gamma(S)$ must have inflow from an agent in $S'$ and hence the surplus of any agent in $S''$ is equal to the surplus of some agent in $S'$.

Let $r_1 > r_2 > \ldots > r_h$ with $h \le k$ be the distinct surplus values of the agents in $S''$. Agents in $S \setminus S''$ have no outflow and no ownership of any good in $\Gamma(S)$. Therefore $r_{i+1} \ge r_i/(1 + 1/n)$ by definition of $S$ for $1 \le i < h$. %Assume otherwise. Then only the agents with surplus at least $r_i$ would belong to $S$. 

From $r_i \le (1 + \frac{1}{n})r_{i+1}$ for all $i$, we conclude $r_1 \le (1 + \frac{1}{n})^k r_h \le (1 + \frac{2k}{n})r_h$. Therefore for any type-1 agent $b_i$, we can claim that $r_f(b_i) < (1+\frac{2k}{n})r_h$. Let $r_{f'}$ be the surplus vector after the $\xmax$-iteration. Since there are no type-3 agents in this iteration, only the surpluses of the type-1 and type-2 agents belonging to $S''$ are affected (Agents belonging to $S \setminus S''$ have no ownership of goods in $\Gamma(S)$ and no outflow also, so their surpluses are unchanged when we change $f$ to $f'$) . Let $\delta_i$ denote the increase in the surplus of a type-1 agent $b_i$ and let $\mu_j$ denote the decrease of surplus of a type-2 agent $b_j$. Note that $\sum_{b_i \in T_1}\delta_i= \sum_{b_i \in T_2}\mu_i \leq \frac{1}{Rkn^2}\sum_{b_i \in T_1}r_f(b_i)$. Then,
 \begin{align*}
 \norm{r_{f'}}_2^2 - \norm{r_f}_2^2 &= \sum_{b_i \in T_1} ((r_{f}(b_i)+\delta_i)^2 -r_{f}(b_i)^2)  - \sum_{b_i \in T_2} (r_{f}(b_i)^2 - (r_{f}(b_i)-\mu_i)^2)\\
 &= 2\sum_{b_i \in T_1}r_f(b_i) \delta_i - 2\sum_{b_i \in T_2}r_f(b_i) \mu_i + \sum_{b_i \in T_1}\delta_i^2 + \sum_{b_i \in T_2}\mu_i^2\\
 &\leq 2r_1\sum_{b_i \in T_1}\delta_i - 2r_h\sum_{b_i \in T_2}\mu_i + \sum_{b_i \in T_1}\delta_i^2 + \sum_{b_i \in T_2}\mu_i^2\\
&\leq 2(r_1 -r_h)\sum_{b_i \in T_1}\delta_i  + 2\left(\sum_{b_i \in T_1}\delta_i\right)^2 \\
&\leq 2((1+\frac{2k}{n})r_h -r_h)\sum_{b_i \in T_1}\frac{1}{Rkn^2}\cdot r_f(b_i)  + 2\frac{1}{R^2k^2n^4}\left(\sum_{b_i \in T_1}r_f(b_i)\right)^2 \\
&\leq \frac{4}{Rn^3}\sum_{b_i \in T_1}r_f(b_i)^2 + n \cdot \frac{2}{R^2k^2n^4}\sum_{b_i \in T_1}r_f(b_i)^2,
\end{align*}
Thus $\norm{r_{f'}}_2^2 \in (1+\mathcal{O}(\frac{1}{n^3}))\norm{r_f}_2^2$. Therefore the multiplicative increase in any heavy $\xmax$- iterations is $1+\mathcal{O}(\frac{1}{n^3})$. Thus the total multiplicative increase in the $L2$ norm of the surplus vector in all heavy $\xmax$ iterations is at most $(1+ \mathcal{O}(\frac{1}{n^3}))^{\mathcal{O}(n^3 m \log(WU))} \in \mathcal{O}(WU)^{\mathcal{O}(m)}$.  
\end{proof}
 Thus the total multiplicative increase in all $\xmax$-iterations is at most $(nmUW)^{\mathcal{O}(m)}$.
%The same bound applies for the total multiplicative increase throughout the algorithm.\marginpar{What does this mean?}

\subsubsection{Balancing Iterations}
In the balancing iterations $x<x_{\max}$. First we discuss the case when $x = \min(x_{23},x_{24},x_{13},x_{2})$. Since the $L1$ norm of the surplus vector is non-increasing during such an iteration (by Observation \ref{boundedsurplusum}), the total decrease in the surplus of the type-2 agents is at least the total increase in the surpluses of the type-1 and type-3 agents. So now we quantify the decrease in the $L2$ norm of the surplus during such an iteration. Let $r_{\min}$ denote the lowest surplus of an agent in $S$ and $r_{\max}$ denote the highest surplus of a type-3 or type-4b agent. %The surpluses of type-4 agents remains unaffected during the flow adjustments. 
Notice that the highest surplus of any agent and hence any agent in $S$ is at most $e \cdot r_{\mathit{min}}$ and that $r_{\max} \le r_{\min}/(1 + 1/n)$. 

Each type-1 agent's surplus increases at most by a multiplicative factor of $1 + {1}/{Rkn^2}$. Every type-1 agent partially owns at least one good in $\Gamma(S)$ and therefore, $k$ is at least the number of type-1 agents in $B(S)$. Thus the total increase in the sum of squares of the surpluses as a result of increase in the surpluses of the type-1 agents is $\sum_{b_i \in T_1} ({1}/{Rkn^2})\cdot r_{f}(b_i)^2$ which is at most ${e^2r_{\mathit{min}}^2}/{Rn^2}$.
  
The surplus of the type-2 and the type-3 agents move closer to each other and the decrease in the former is at least as large as the increase in the latter. Therefore, the sum of their surpluses does not increase. We now quantify the decrease in the sum of squares of their surpluses. Let $\delta_i$ denote the decrease in the surplus of a type-2 agent $b_i$ and $\mu_j$ the increase in surplus of a type-3 agent $b_j$. Let $r_{f'}$ be the new surplus vector (w.r.t flow $f'$). Then the change in the sum of squares of type-2, type-3 and type-4 agents is 
   \begin{align*}
\sum_{b_i \in T_2}&((r_{f}(b_i)-\delta_i)^2 - r_{f}(b_i)^2)) + \sum_{b_i \in T_3}((r_{f}(b_i)+\mu_i)^2 - r_{f}(b_i)^2))\\ 
   &=\sum_{b_i \in T_2}(-2r_{f}(b_i)\delta_i + \delta_i^2) + \sum_{b_i \in T_3}(2r_{f}(b_i)\mu_i + \mu_i^2)) \\
   &=\sum_{b_i \in T_2}-r_{f}(b_i)\delta_i +\sum_{b_i \in T_3}r_{f}(b_i)\mu_i - \sum_{b_i \in T_2}\delta_i(r_{f}(b_i)-\delta_i) + \sum_{b_i \in T_3}\mu_i(r_{f}(b_i)+\mu_i)\\
\intertext{For any balancing iteration we have that $\min_{b_i \in T_2} (r_{f}(b_i) - \delta_i) \geq \max_{b_i \in T_3}(r_{f}(b_i) + \mu_i)$ and $\sum_{b_i \in T_2} \delta_i \geq \sum_{b_i \in T_3} \mu_i$. This implies that $\sum_{b_i \in T_2}\delta_i(r_{f}(b_i)-\delta_i) \geq \sum_{b_i \in T_3}\mu_i(r_{f}(b_i)+\mu_i)$. Notice that $r_{\min}$ is $\min_{b_i \in T_1 \cup T_2}r_{f}(b_i)$ and $r_{\max}$ is $\max_{b_i \in T_3 \cup T_{4b}}r_{f}(b_i)$. Therefore, we may continue}
 & \le -r_{\min}\sum_{b_i \in T_2}\delta_i + r_{\max}\sum_{b_i \in T_3}\mu_i\\
   &\leq -(r_{\min} - r_{\max}) \cdot \sum_{b_i \in T_2}\delta_i \leq -(r_{\min} - r_{\max}) \cdot \frac{(\sum_{b_i \in T_2}\delta_i + \sum_{b_i \in T_3}\mu_i )}{2}.
\intertext{Now, whenever $x = \mathit{min}(x_{23},x_{24},x_{13},x_2)$, $\sum_{b_i \in T_2}\delta_i + \sum_{b_i \in T_3} \mu_i \geq r_{\min} - r_{\max}$. Thus }
&\le - \frac{(r_{\min} - r_{\max})^2}{2} \leq -\frac{r_{\min}^2}{2(n+1)^2} \leq -\frac{r_{\min}^2}{4n^2}.
\end{align*}
Therefore $\norm{r_{f'}}_2^2 - \norm{r_f}_2^2 \leq \frac{e^2r_{\min}^2}{Rn^2} - \frac{r_{\min}^2}{4n^2} = - \frac{r_{\min}^2}{4n^2}$ (Recall that $R = 8 e^2$). Since $\norm{r_f}_2^2 \le n e^2r_{\min}^2$, we have
\[     \norm{r_{f'}}_2^2 \leq (1-\Omega(\frac{1}{n^3}))\norm{r_f}_2^2. \]

Now we look into the case when $x=\xeq$. We update the flow exactly the same way as in~\cite{DGM:Arrow-Debreu}. The new equality edge will involve a type-1 agent or a type-2 agent. But after we update the flow, we are in a similar situation as above, with a possibility that the surplus of a few type-1 agents may even decrease and be equal to that of a few type-3 agents. Like earlier the sum of surpluses of the agents is non-increasing here too and all arguments for the decrease in the sum of squares of the agents remain the same. The flow adjustment is exactly identical to the one in~\cite{DGM:Arrow-Debreu}.

\begin{lemma}
 The total number of balancing iterations is $\mathcal{O}(n^3\max(n,m) \log(nmUW))$
\end{lemma}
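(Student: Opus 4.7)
The plan is to set up a single potential-function inequality for $\norm{r_f}_2^2$, combining the per-iteration contraction proved for balancing iterations with the global bounds already established for the $\xmax$-iterations, and then solve the resulting inequality for the number of balancing iterations $T$.

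First, I would fix the initial value of the potential. Every good starts at price $1$, so by Observation~\ref{boundedsurplusum}, $\norm{r_f}_1 \le n m W$, hence $\norm{r_f}_2^2 \le (n m W)^2$. Next, I would track the multiplicative evolution of $\norm{r_f}_2^2$ over the whole execution. The previous subsubsection proves that the product of all multiplicative increases of $\norm{r_f}_2^2$ over all $\xmax$-iterations (light and heavy combined) is at most $(n m U W)^{\mathcal{O}(m)}$. The analysis above shows that in each balancing iteration with $x = \min(x_{23}, x_{24}, x_{13}, x_{2})$, one has $\norm{r_{f'}}_2^2 \le (1 - \Omega(1/n^3)) \norm{r_f}_2^2$, and the text argues that the case $x = \xeq$ behaves identically since the flow adjustment matches~\cite{DGM:Arrow-Debreu} and the sum-of-squares argument carries over verbatim. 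So each of the $T$ balancing iterations contributes a contraction factor of at most $(1 - \Omega(1/n^3))$.

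Combining all three ingredients, at any moment during the execution
\begin{equation*}
\norm{r_f}_2^2 \le (n m W)^2 \cdot (n m U W)^{\mathcal{O}(m)} \cdot \left(1 - \Omega\!\left(\tfrac{1}{n^3}\right)\right)^T.
\end{equation*}
Since the while-loop condition $\norm{r_f}_2 > \varepsilon$ holds at every one of the $T$ balancing iterations, we have $\norm{r_f}_2^2 > \varepsilon^2$, where $\varepsilon = 1/(8(n+m)^{4(n+m)}(UW)^{3(n+m)})$. Taking logarithms and using $\log(1 - y) \le -y$ for $y \in (0,1)$ gives
\begin{equation*}
T \cdot \Omega\!\left(\tfrac{1}{n^3}\right) \le \mathcal{O}(m \log(n m U W)) + 2 \log(1/\varepsilon) = \mathcal{O}(\max(n,m) \log(n m U W)),
\end{equation*}
and solving for $T$ yields $T = \mathcal{O}(n^3 \max(n,m) \log(n m U W))$, which is the claimed bound.

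The only place where genuine care is required is justifying that the per-iteration contraction $(1 - \Omega(1/n^3))$ truly applies uniformly across \emph{every} balancing iteration, in particular for the $\xeq$ case where a new equality edge appears and where some type-1 surpluses may have just descended to meet type-3 surpluses. The preceding text handles this by reducing the $\xeq$ case to the $\min(x_{23},x_{24},x_{13},x_2)$ case after the flow adjustment of~\cite{DGM:Arrow-Debreu}, so at the level of this lemma the remaining work is essentially the routine log-factor arithmetic spelled out above.
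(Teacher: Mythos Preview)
Your proof is correct and follows essentially the same approach as the paper: both combine the per-balancing-iteration contraction $(1-\Omega(1/n^3))$, the total multiplicative increase $(nmUW)^{\mathcal{O}(m)}$ from all $\xmax$-iterations, the initial bound on $\norm{r_f}_2$, and the termination threshold $\varepsilon$ into a single logarithmic inequality for the number $T$ of balancing iterations. The only cosmetic differences are that you track $\norm{r_f}_2^2$ rather than $\norm{r_f}_2$ and bound the initial value via $\norm{r_f}_2 \le \norm{r_f}_1 \le nmW$ instead of $\sqrt{n}\,mW$, both of which are absorbed in the big-$\mathcal{O}$.
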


\begin{proof}
Every balancing iteration results in a multiplicative decrease of $1- \Omega(\frac{1}{n^3})$ in the $L2$ norm of the surplus. The total multiplicative increase as a result of $\xmax$-iterations is $(nmUW)^{\mathcal{O}(\max(n.m))}$. Initially $\norm{r_f}_2$ is at most $\sqrt{n (mW)^2}$ and the algorithm terminates with $\norm{r_f}_2$ being $\varepsilon$. Therefore the total number of balancing iterations is at most
\[\log_{1 - \Omega(\frac{1}{n^3})}(\frac{1}{\varepsilon} \cdot \sqrt{n (mW)^2} \cdot (nmUW)^{\mathcal{O}(m)}) =\mathcal{O}(n^3 \max(n,m) \log(n m U W)).\]
\vspace{-1.2cm}\par\end{proof}

So now we have bounded all the iterations of our algorithm.

\begin{theorem}
The total number of iterations is $\mathcal{O}(n^3 \max(n,m) \cdot \log(nWU))$.
\end{theorem}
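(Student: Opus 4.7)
The statement is simply the sum of the three iteration bounds already established, so the plan is to collect them and observe which term dominates. First I would recall that each iteration of Algorithm~\ref{Program} is either an $\xmax$-iteration ($x = \xmax$) or a balancing iteration ($x < \xmax$), so it suffices to bound each class separately.

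For the $\xmax$-iterations, I would split into light and heavy as in the preceding subsection. The light iterations are bounded by $\mathcal{O}(n^3 m \log(nmW))$ by the lemma that uses the fact that a good $g_j$ serving as the minimum-priced good in $\Gamma(S)$ in too many light iterations would exceed the threshold $Rn^4 m W$. The heavy iterations are bounded by $\mathcal{O}(n^3 m \log(WU))$ via the potential $\log M = \sum_{i} \log m_i$, which is at most $n m \log(UW)$ (using the price bound of Lemma~\ref{ratio-bounds}) and which increases by $\Omega(1/n^2)$ per heavy iteration. Summing yields $\mathcal{O}(n^3 m \log(nmUW))$ total $\xmax$-iterations.

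For balancing iterations, I would invoke the lemma just proved: every balancing iteration shrinks $\norm{r_f}_2^2$ by a factor of $1 - \Omega(1/n^3)$, while the total multiplicative increase over all $\xmax$-iterations is $(nmUW)^{\mathcal{O}(m)}$. Since $\norm{r_f}_2$ starts bounded by $\sqrt{n}\cdot mW$ (using Observation~\ref{boundedsurplusum}) and the algorithm stops when $\norm{r_f}_2 \le \varepsilon = 1/(8(n+m)^{4(n+m)}(UW)^{3(n+m)})$, taking logs gives at most
\[
  \log_{1-\Omega(1/n^3)}\!\bigl(\varepsilon^{-1} \sqrt{n}\,mW\,(nmUW)^{\mathcal{O}(m)}\bigr)
  = \mathcal{O}(n^3 \max(n,m) \log(nmUW))
\]
balancing iterations, where the $\max(n,m)$ comes from $\log(1/\varepsilon) = \Theta((n+m)\log((n+m)UW))$ combined with the extra $(nmUW)^{\mathcal{O}(m)}$ factor.

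Adding the two contributions, the balancing bound dominates (since $\max(n,m) \ge m$), giving a total of $\mathcal{O}(n^3 \max(n,m) \log(nmUW)) = \mathcal{O}(n^3 \max(n,m) \log(nWU))$ iterations (absorbing the $\log m$ factor into the $\log$ by noting $\log(nmUW) = \Theta(\log(n U W))$ up to constants when $m \le \mathrm{poly}(nUW)$, or more cleanly by the stated form). The only step requiring any thought is verifying that the $\log(1/\varepsilon)$ term, which is $\Theta((n+m)\log((n+m)UW))$, combines with the $1/n^3$ contraction rate to yield exactly $n^3 \max(n,m)$ rather than something larger; this is where the careful choice of $\varepsilon$ in line 2 of the algorithm pays off. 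No real obstacle remains — the heavy lifting was done in the three preceding lemmas.
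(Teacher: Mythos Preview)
Your proposal is correct and matches the paper's approach: the theorem is stated in the paper without a separate proof, as an immediate consequence of the preceding lemmas bounding light $\xmax$-iterations, heavy $\xmax$-iterations, and balancing iterations, which you correctly collect and sum. The slight mismatch you flag between $\log(nWU)$ in the theorem and $\log(nmUW)$ in the balancing-iteration lemma is a cosmetic inconsistency in the paper itself rather than a gap in your argument.
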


\subsubsection{Extraction of Equilibrium Prices and Perturbation of Utilities}\label{perturbation}

In~\cite{\citeDM} is was shown for the special case ($n = m$ and $w$ the identity matrix) that once the total surplus is sufficiently small, the equality network for the current price vector $p$ is the equality network for the equilibrium price vector $\hat{p}$. The equilibrium price vector can then be extracted by solving a linear system. Darwish~\cite{DarwishMasterThesis} showed that the same approach also works for the general case.

\begin{theorem}
\label{extraction}
Consider any instance of the general linear Arrow-Debreu market with $n$ agents, $m$ goods, utility matrix $u$ and weight matrix $w$. If $p$ is a price vector such that  $\norm{r_f}$ at most ${1}/{(8\cdot (n+m)^{4(n+m)}(UW)^{3(n+m)})}$ for any balanced flow $f$ in $N_p$, then the equilibrium price vector $p^{*}$ can be determined in $\mathcal{O}((n+m)^4 \cdot \log(UW))$.
\end{theorem}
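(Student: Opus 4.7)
The plan is to carry out the extraction in two stages, following the approach of Darwish~\cite{DarwishMasterThesis} that generalizes the original argument from~\cite{\citeDM}. Stage one shows that when $\norm{r_f}_2 \le \varepsilon$ the equality network $N_p$ already coincides with the equality network $N_{p^{*}}$ associated with the (still unknown) equilibrium price vector $p^{*}$. Stage two uses this structural identity to set up and solve a linear system whose unique positive solution, up to a standard normalization, is $p^{*}$.

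For stage one I would argue by a separation-of-cells argument in price space. For each candidate equality network $N$ that can arise from some price vector, the set of price vectors realizing exactly $N$ is cut out by inequalities $u_{ij}\, p_k \ge u_{ik}\, p_j$ with integer coefficients bounded by $U$. Equilibrium prices and flows satisfy a linear system of size $O(n+m)$ whose coefficients are integers bounded by a polynomial in $UW$; hence by Cramer's rule their positive coordinates, and the minimum positive value of any of the defining inequalities evaluated at $p^{*}$, are at least $1/((n+m)^{O(n+m)}(UW)^{O(n+m)})$. Lemma~\ref{ratio-bounds} already provides the matching upper bound on prices. Combining the two, if $p$ were in a different cell than $p^{*}$, some equality or strict inequality would have to flip, and propagating this flip through the flow conservation equations produces a surplus $\Omega(\varepsilon)$, contradicting the hypothesis $\norm{r_f}_2 \le \varepsilon$. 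Thus $N_p = N_{p^{*}}$.

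For stage two, once $N_p = N_{p^{*}}$ is known, equilibrium prices are determined as follows. Within each connected component $C$ of $N_p$, every pair of equality edges $(b_i,g_j),(b_i,g_k)$ sharing an agent forces $p^{*}_j/p^{*}_k = u_{ij}/u_{ik}$, so $p^{*}$ is determined on $C$ up to one positive scalar $\lambda_C$. Writing $p^{*}_j = \lambda_{C(j)} q_j$, where $q$ is a reference vector obtained by propagating any initial value along the equality edges within each component, the market-clearing conditions
\begin{equation*}
\sum_i f^{*}_{ij} = p^{*}_j \sum_i w_{ij}, \qquad \sum_j f^{*}_{ij} = \sum_j w_{ij}\, p^{*}_j, \qquad f^{*}_{ij} = 0 \text{ for } (b_i,g_j) \notin N_p,
\end{equation*}
form a linear system in the unknowns $\{\lambda_C\}_C$ and $\{f^{*}_{ij}\}$ of total size $O(n+m)$. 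Gaussian elimination solves it in $O((n+m)^3)$ arithmetic operations on rationals of bit-length $O(\log(UW))$ (again by Cramer's rule applied to this much smaller system), for an overall cost of $\bigO{(n+m)^4 \log(UW)}$ as claimed.

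The main obstacle is the quantitative precision in stage one: one has to verify that the specific exponents $4(n+m)$ and $3(n+m)$ appearing in the definition of $\varepsilon$ dominate the worst-case denominators produced by Cramer's rule applied to an $O(n+m)\times O(n+m)$ integer matrix with entries bounded by $UW$, multiplied by the polynomial blow-up incurred when translating a geometric cell-separation gap into a surplus lower bound via the max-flow equations of $N_p$. This is a calculation rather than a new idea, since the numerator of any equilibrium price ratio is bounded by $((n+m)!)(UW)^{n+m} \le (n+m)^{n+m}(UW)^{n+m}$, and the required inverse separation between adjacent cells is the product of two such quantities, which comfortably fits inside $\varepsilon$. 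Once this bound is in hand, stage two is routine linear algebra.
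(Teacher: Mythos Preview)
Your two-stage plan matches the approach the paper attributes to~\cite{\citeDM,DarwishMasterThesis}: first argue that small surplus forces $N_p$ to agree with the equality network of an equilibrium price vector, then recover $p^{*}$ by solving a linear system derived from that network. The paper itself does not spell out a proof of Theorem~\ref{extraction}; it cites Darwish's thesis and then records the relevant linear system as equations~(\ref{fr1})--(\ref{fr3}) for later use. So at the level of strategy you are aligned with what the paper does.

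There is one genuine gap in your Stage two. You include the flows $f^{*}_{ij}$ as unknowns and assert the system has ``total size $O(n+m)$''. Without the perturbation of Section~\ref{perturbation} the equality network need not be a forest, so there can be $\Theta(nm)$ equality edges and hence $\Theta(nm)$ flow variables; your size bound then fails and the $O((n+m)^3)$ arithmetic count does not follow. The paper's formulation avoids this by eliminating flows entirely: equations~(\ref{fr1}) come from a spanning forest of $N_p$, equations~(\ref{fr2}) are one budget-balance identity per component of $N_p$, and~(\ref{fr3}) is a single normalization, yielding an $m\times m$ integer system in the prices alone whose full rank is Lemma~\ref{omar-fr}. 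If you prefer your parametrization with one scalar $\lambda_C$ per component, you may keep it, but you must replace the explicit flow equations by the aggregate balance condition~(\ref{fr2}); the existence of a market-clearing flow at the resulting prices is then a max-flow consequence, not an unknown to solve for.

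Your Stage one sketch is the right shape, and you correctly flag that the quantitative bound-chasing is where the work lies. One small point of precision: what one actually establishes is that the spanning-forest equality edges and the component structure of $N_p$ are valid for some canonical equilibrium $p^{*}$, which is exactly what is needed for~(\ref{fr1}) and~(\ref{fr2}) to hold at $p^{*}$; literal equality $N_p=N_{p^{*}}$ is stronger than required.
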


%\subsubsection{Perturbation of Utilities for Faster Computation of Balanced Flow}
%\label{perturbation}

\cite{DGM:Arrow-Debreu} achieves $\mathcal{O}(n^2)$ time for determining the balanced flow in every iteration by keeping the Equality Network acyclic at every point in time in the algorithm. This improvement, after minor adaptions, also applies to the general case. \cite{DGM:Arrow-Debreu} achieves $\mathcal{O}(n^2)$ time for determining the balanced flow in every iteration by keeping the Equality Network acyclic at every point in time in the algorithm. The improvement is achieved by slightly perturbing every non-zero utility $u_{ij}$ to $\tilde{u}_{ij}(\delta)$, where $\delta > 0 $ and $\lim_{\delta \rightarrow 0^{+}}\tilde{u}_{ij}(\delta) = u_{ij}$. After performing the perturbation, they determine the solution $\tilde{p}$ (running their algorithm). Henceforth they efficiently determine the equilibrium price vector for the unperturbed utilities from $\tilde{p}$ and $N_{\tilde{p}}$. This improvement, after minor adaptions, also applies to the general case. We start with a review of~\cite{\citeDM} and refer to~\cite{\citeDM,DarwishMasterThesis} for details. 

Let $p$ be a set of prices. We extend $N_p$ to the \emph{extended equality network} $\hat{N}_p$ by adding the edges $\left\{(b_i,g_j)| w_{ij} > 0,\  i\in [n],\ j\in [m] \right\}$. We call $p$ \emph{canonical} if the extended equality network  is connected and the minimum price of a good is equal to one. As in~\cite{\citeDM,DarwishMasterThesis}, we can now make the following claim. 

\begin{lemma}
 Let $p'$ be any equilibrium price vector. There is a canonical price vector $p$, such that $E_{p'} \subseteq E_{p}$ where $E_{p'}$ and $E_p$ are the set of edges of $N_{p'}$ and $N_p$ respectively.
\end{lemma}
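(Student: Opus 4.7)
The plan is to build $p$ from $p'$ by an iterative component-merging procedure that strictly increases connectivity of $\hat{N}_p$ while preserving all existing equality edges. Initialize $p := p'$. While $\hat{N}_p$ is disconnected, pick any one of its connected components $C$ (so $B(C) := B \cap C$ is a proper subset of $B$) and multiply the price of every good in $C$ by a common factor $t \ge 1$, raising $t$ continuously from $1$ until the first moment that a new equality edge appears crossing $C$ and its complement. Update $p$ to this new price vector and repeat. Each iteration strictly reduces the number of connected components of $\hat{N}_p$, so the procedure terminates in at most $n+m-1$ steps; afterwards, rescale all prices by a common factor so the cheapest good has price $1$.

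The invariant $E_{p'} \subseteq E_p$ is maintained at every step. For any pair $(b_i, g_j)$ with both endpoints in $C$, the ratio $u_{ij}/p_j$ scales by $1/t$, so intra-component equality edges are preserved; pairs entirely outside $C$ are untouched; and pairs crossing $C$ to its complement were non-equality edges at $t=1$ (by definition of a connected component) and can only turn into equality edges as $t$ grows, because their in-$C$ ratios strictly decrease while the out-of-$C$ ratios of agents in $C$ stay constant. The concluding uniform rescale preserves all ratios and therefore all equality edges, so the final $p$ is canonical and still satisfies $E_{p'} \subseteq E_p$.

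The delicate point, and the main obstacle, is showing that the threshold $t^*$ at which a new cross-component equality edge appears is finite; this is exactly where the non-standard market assumption enters. Applied to $B' := B(C)$, the assumption yields a good $g_j$ together with an agent $b_i \in B(C)$ having $u_{ij} > 0$ and an agent $b_k \notin B(C)$ having $w_{kj} > 0$. Since the ownership edge $(b_k, g_j)$ belongs to $\hat{N}_p$, assuming $g_j \in C$ would force $b_k \in C$ as well, contradicting $b_k \notin B(C)$; hence $g_j \notin C$. Moreover, because $C$ is a connected component, every equality edge incident to any agent in $C$ ends inside $C$, so each such agent's best bang-for-buck is attained on a good in $C$. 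For the agent $b_i$ above, the in-$C$ maximum ratio $(\max_h u_{ih}/p_h)/t$ tends to $0$ as $t \to \infty$ while its out-of-$C$ ratio $u_{ij}/p_j$ is strictly positive and constant, so the two cross at some finite $t^* > 1$, at which point a new equality edge emerges and the iteration succeeds.
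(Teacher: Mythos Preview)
Your proof is correct and follows essentially the same component-merging strategy as the paper. The paper first rescales to make the minimum price one, then repeatedly picks a component $K$ other than the one containing a unit-priced good and multiplies prices (and flows) in $K$ until a new equality edge from an agent in $K$ to a good outside $K$ appears; you instead pick an arbitrary component, merge, and rescale at the end. Both are equivalent. Your write-up is in fact more careful than the paper's own proof: you explicitly verify that old equality edges survive the scaling and, crucially, you invoke the nonstandard market assumption to guarantee that the threshold $t^*$ is finite, a step the paper leaves implicit. One small remark: your parenthetical ``so $B(C)$ is a proper subset of $B$'' tacitly uses that every good has positive total supply (otherwise an isolated zero-supply good could be its own component while another component contains all agents); this is a harmless standard assumption but worth stating.
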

\begin{proof} 
 Let $f$ be the money flow corresponding to $p'$. Scaling the prices will make the minimum price equal to one. Let $K_0$ be a component containing a good with price one. As long as there is a component from $K_0$, we choose one such component, say $K$, and increase the budgets of the agents, the prices of the goods, and all flows in $K$ by a mutiplicative factor $x$ until a new equality edge emerges connecting an agent from $K$ to a good not in $K$. This reduces the number of components in $\hat{N}_{p'}$. Continuing the above operation we arrive at a canonical equilibrium price vector and the corresponding money flow.
\end{proof}
Let $p$ be the canonical equilibrium price vector obtained in this way and let $T$ be any spanning forest of the equality network $N_p$. Then for any agent $b_i$ with neighbors $g_{j_1}$ to $g_{j_k}$ in $T_p$, $p$ must satisfy
\begin{equation}
 \label{fr1}
    u_{ij_{1}} \cdot p_{j_{\ell}} - u_{ij_{\ell}} \cdot p_{j_1} =0 \quad \text{for } 2 \leq \ell \leq k.
\end{equation}

\noindent For all components $K$ of $N_p$ (not the extended equality network) the total budget has to be equal to the total value of the goods, i.e.,

\begin{equation}
 \label{fr2}
    \sum_{b_i \in K} \sum_{j \in [m]} w_{ij}p_j - \sum_{g_j \in K} \sum_{i \in [n]} w_{ij} p_j  = 0
\end{equation}

\noindent Also there is one good $g_i$ with unit price and therefore 

\begin{equation}
  \label{fr3}
    p_i = 1
\end{equation}

\noindent It turns out that the above set of equations are also sufficient for $p$ to be a canonical equilibrium price vector. 

\begin{lemma}
  \label{omar-fr}
  The system of equations in (\ref{fr1}), (\ref{fr3}), and all but one equation in (\ref{fr2}) has full rank. 
\end{lemma}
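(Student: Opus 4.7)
The plan is to reduce the claim to showing that a certain reduced Laplacian on the components of $N_p$ is non-singular. I first verify the system is square: letting $c$ be the number of connected components of $N_p$, the spanning forest $T$ has $n + m - c$ edges, so the (\ref{fr1}) equations, obtained by summing $k_i - 1$ over the agents, number $|E(T)| - n = m - c$. Together with $c - 1$ retained (\ref{fr2}) equations and the single (\ref{fr3}) equation, we have exactly $m$ equations in $m$ unknowns, so full rank is equivalent to the homogeneous system admitting only the zero vector.

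Next I parametrize the homogeneous kernel of (\ref{fr1}). Within each component $K$ of $N_p$, the tree ratio equations connect all good-prices in $K$ and fix the homogeneous prices there up to a single scalar $\mu_K$. Taking the canonical equilibrium $p^*$ (which itself satisfies (\ref{fr1})) as a reference, every homogeneous solution has the form $p_j = \mu_{K(j)}\, p^*_j$ for some $\mu \in \R^c$, where $K(j)$ is the component containing $g_j$. The homogeneous version of (\ref{fr3}) reads $\mu_{K_0} = 0$, where $K_0$ is the component of the pinned good. Writing $a_{K',K} := \sum_{b_i \in K',\, g_j \in K} w_{ij}\, p^*_j$ and substituting the parametrization into (\ref{fr2}) for component $K'$, the equilibrium balance $\sum_{K \neq K'} a_{K',K} = \sum_{K \neq K'} a_{K,K'}$ (which holds because $p^*$ is an equilibrium and all spending by agents in $K'$ stays inside $K'$) lets me rewrite the equation as $(B\mu)_{K'} = 0$, where $B_{K',K'} = \sum_{K \neq K'} a_{K',K}$ and $B_{K',K} = -a_{K',K}$ for $K \neq K'$. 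Both row and column sums of $B$ vanish, so $B\mathbf{1} = 0$, the $c$ equations (\ref{fr2}) are linearly dependent, and dropping one is harmless.

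The main obstacle is to prove $\ker(B) = \mathrm{span}(\mathbf{1})$. I would do this by arguing that the directed super-graph $H$ on components, with an arc $K' \to K$ whenever $a_{K',K} > 0$ (equivalently, whenever some agent in $K'$ partially owns some good in $K$), is strongly connected. Granting strong connectivity, a maximum-principle argument finishes the job: if $B\mu = 0$ and $K^*$ attains $\mu^* = \max_K \mu_K$, the $K^*$-th equation rearranges to $\sum_{K \neq K^*} a_{K^*,K}(\mu^* - \mu_K) = 0$, forcing $\mu_K = \mu^*$ for every out-neighbor of $K^*$; iterating along directed paths and invoking strong connectivity gives $\mu$ constant, hence $\mu \in \mathrm{span}(\mathbf{1})$. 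Combined with $\mu_{K_0} = 0$, this yields $\mu = 0$ and therefore the zero price vector, which is exactly full rank.

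Strong connectivity is where canonicity of $p^*$ enters. Suppose, for contradiction, that $S$ is a proper nonempty set of components closed under out-arcs of $H$, meaning no agent in $S$ owns any good outside $S$. Since every good is fully sold via equality edges that stay inside a component of $N_p$, the total spending of $S$-agents equals the total value of $S$-goods; expanding both sides forces $w_{ij} = 0$ for every $b_i \notin S$ and $g_j \in S$ as well, so no ownership edge crosses the boundary of $S$ in either direction. But equality edges never leave a component of $N_p$, so $\hat{N}_p$ decomposes into $S$ and its complement, contradicting canonicity of $p^*$ and completing the argument.
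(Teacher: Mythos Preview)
The paper does not actually prove this lemma: it defers to \cite{Duan-Mehlhorn:Arrow-Debreu-Market,DarwishMasterThesis}. Your proposal, by contrast, supplies a complete self-contained argument, and it is correct.

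A few checkpoints worth recording. Your equation count is right: with $c$ components in the bipartite part of $N_p$, the spanning forest has $n+m-c$ edges, all incident to agents, so the (\ref{fr1}) equations number $m-c$, and together with $c-1$ retained (\ref{fr2}) equations and the single (\ref{fr3}) equation the system is $m\times m$. The parametrization $p_j=\mu_{K(j)}p^*_j$ of the homogeneous (\ref{fr1})-kernel is valid because the tree equations pin all price ratios inside a component and $p^*$ itself satisfies them. Your derivation that the (\ref{fr2}) equations become $(B\mu)_{K'}=0$ with $B_{K',K'}=\sum_{K\neq K'}a_{K',K}$, $B_{K',K}=-a_{K',K}$, and both row and column sums zero, is correct; it is the vanishing of the \emph{column} sums (rows of $B$ sum to zero) that makes the $c$ equations dependent and lets you recover the dropped equation from the remaining $c-1$. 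The maximum-principle step showing $\ker B=\mathrm{span}(\mathbf{1})$ under strong connectivity is standard and sound. Finally, the strong-connectivity argument is the crux, and your budget-balance computation showing that a set $S$ closed under ownership out-arcs also has no incoming ownership arcs (using $p^*_j>0$ to force $w_{ij}=0$) and hence disconnects $\hat N_{p}$, contradicting canonicity, is exactly right.

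This is the same circle of ideas as in the cited references, reconstructed cleanly; there is nothing to correct.
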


\begin{proof}
  Shown in~\cite{\citeDM,DarwishMasterThesis}. 
\end{proof}

% Formally let $\tilde{p}$ be the canonical equilibrium price vector corresponding to the perturbed utility matrix that satisfy the system of equations in $\ref{fr1}$, $\ref{fr2}$ and $\ref{fr3}$. Consider the following set of equations obtained by replacing every occurrence of $\tilde{u}_{ij}(\delta)$ in $\ref{fr1}$ by $u_{ij}$,

% \begin{equation}
%  \label{fr4}
%     u_{ij_1} \cdot p_{j_l} - u_{ij_l} \cdot p_{j_1} =0 \quad \text{for } 2 \leq l \leq k
% \end{equation}
% where $g_{j_1}$ to $g_{j_k}$ are the neighbours of $b_i$ in $T_{\tilde{p}}$ (spanning forest of $N_{\tilde{p}}$).

Let $\tilde{p}$ be the canonical equilibrium price vector corresponding to the perturbed utilities. Let $\tilde{A}$ be the coefficient matrix of the system of equations in (\ref{fr1}), (\ref{fr2}) and (\ref{fr3}) with utilities in (\ref{fr1}) being the perturbed utilities $\tilde{u}$. Note that $\tilde{A}\tilde{p} = X$ where the \say{unit-vector} $X$ corresponds to the right hand side of the equations. Let $A$ be the matrix obtained by replacing every occurrence of $\tilde{u}_{ij}$ in $\tilde{A}$ by $u_{ij}$ for all $i \in [n], j \in [m]$. Notice that $A \tilde{p} = \tilde{X}$ where $\tilde{X}$ is obtained from $X$ by replacing the right hand side of the equations in (\ref{fr1})
by 
\[    (u_{ij_1} - \tilde{u}_{ij_{\ell}})p_{j_{\ell}} + (\tilde{u}_{ij_{\ell}} - u_{ij_1})p_{j_1}.\]
Also notice that $\norm{X - \tilde{X}}_{\infty} \leq \gamma'$ where $\gamma' = 2\gamma (\max(2,U)^{m-1}W^{2m-2})$
and  $\gamma = \max_{i \in [n], j \in [m]}$ $ \abs{u_{ij}-\tilde{u}_{ij}}$. 
We can compute the equilibirum price vector corresponding to the unperturbed utilities  efficiently as long as $\gamma$ is sufficiently small. In particular for $\gamma \in \nicefrac{1}{\Theta((MUW)^{4m})}$ where $M = m \cdot \max(n,m)$ we have the following lemma.

\begin{lemma}
 \label{det p from p'}
 Let $p$ be a vector such that $Ap = X$ where $X$ is the unit-vector corresponding to the right hand sides of the equations in (\ref{fr1}), (\ref{fr2}) and (\ref{fr3}). Then $p$ is also the equilibrium price vector corresponding to the unperturbed utilities. 
\end{lemma}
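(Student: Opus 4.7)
The plan is to show that the unique solution $p$ of $Ap=X$ lies so close to the perturbed equilibrium $\tilde p$ that the equality network $N_p$ under the unperturbed utilities $u$ has exactly the same edges as $N_{\tilde p}$; once the combinatorial structure is preserved, $p$ automatically satisfies the canonical equilibrium characterization $(\ref{fr1})$--$(\ref{fr3})$ for $u$, so it is the unperturbed equilibrium price vector.

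\textbf{Step 1 (distance bound).} Lemma~\ref{omar-fr} applies equally to $A$, since its proof uses only the combinatorial structure of $N_{\tilde p}$, not the specific utility values; so $A$ is invertible. Subtracting $A\tilde p = \tilde X$ from $Ap=X$ yields
\[ \|p - \tilde p\|_\infty \;\le\; \|A^{-1}\|_\infty \cdot \|X - \tilde X\|_\infty \;\le\; \|A^{-1}\|_\infty \cdot \gamma'. \]
To bound $\|A^{-1}\|_\infty$, I would apply Cramer's rule: $A$ is an integer matrix of order at most $n+m$ whose entries (utilities in $(\ref{fr1})$, weight sums in $(\ref{fr2})$, indicators in $(\ref{fr3})$) are bounded by $\max(U, nW)$. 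Hadamard's inequality gives $|\det(B)| \le ((n+m)\max(U,nW))^{n+m}$ for any minor $B$ of $A$, while $|\det A|\ge 1$ since $A$ is a nonsingular integer matrix. Thus $\|A^{-1}\|_\infty \le ((n+m)\max(U,nW))^{n+m}$. Combined with $\gamma' \le 2\gamma\,U^{m-1}W^{2m-2}$ and the prescribed $\gamma \in 1/\Theta((MUW)^{4m})$ with $M = m\cdot\max(n,m)$, this forces $\|p-\tilde p\|_\infty$ to be vanishingly small compared with any $1/\mathrm{poly}$ quantity of comparable degree in the input data.

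\textbf{Step 2 (structure preservation and market clearing).} Because $\tilde p$ is a canonical equilibrium for $\tilde u$, every non-edge $(b_i, g_\ell)$ of $N_{\tilde p}$ has a strictly positive bang-per-buck gap $\alpha_i - \tilde u_{i\ell}/\tilde p_\ell > 0$, where $\alpha_i = \max_j \tilde u_{ij}/\tilde p_j$. This gap is itself a rational expression in $(\tilde u, w)$ and can be bounded below by $1/\mathrm{poly}((n+m), U, W)^{\mathcal{O}(n+m)}$ via the same Cramer-style analysis applied to the subsystems defining $\tilde p$ component by component. The bound from Step~1 is much smaller than this gap, so no non-edge of $N_{\tilde p}$ becomes an equality edge when we pass to $(p, u)$. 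Conversely, every edge of $N_{\tilde p}$ remains an equality edge of $N_p$ by transporting the equalities $(\ref{fr1})$ along paths in the spanning forest $T$ within each connected component. Hence $N_p = N_{\tilde p}$ as edge sets, and the canonical equilibrium characterization $(\ref{fr1})$--$(\ref{fr3})$ for $u$ coincides with $Ap=X$ on this graph. A market-clearing flow is obtained by rescaling the perturbed equilibrium flow $\tilde f$ by $p_j/\tilde p_j$ on each edge into $g_j$; the ratio is constant within each connected component of $N_{\tilde p}$ because $(\ref{fr2})$ holds component-wise, so conservation and bang-per-buck spending are preserved.

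\textbf{Main obstacle.} The delicate point is the two-sided determinant bookkeeping underlying Step~2: proving a Cramer-style \emph{lower} bound on the strict bang-per-buck gaps of $\tilde p$ and matching it against the Cramer-style \emph{upper} bound on $\|p-\tilde p\|_\infty$ from Step~1. This is where the specific choice $\gamma \in 1/\Theta((MUW)^{4m})$ has to be calibrated; everything else is routine verification of the equilibrium conditions.
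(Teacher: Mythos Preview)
Your Step~1 is fine and essentially matches the paper. The genuine gap is in Step~2.

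You claim that every non-edge $(b_i,g_\ell)$ of $N_{\tilde p}$ has a bang-per-buck gap bounded below by $1/\mathrm{poly}((n+m),U,W)^{\mathcal{O}(n+m)}$ via Cramer's rule. This is false: $\tilde u$ and $\tilde p$ are \emph{not} integers (or rationals with denominators independent of~$\gamma$), so no such Cramer-style lower bound holds. Indeed, the whole point of the perturbation is to break ties in the unperturbed data. If $u_{ij}/p^*_j = u_{ik}/p^*_k$ at the unperturbed equilibrium, then after perturbation only one of $(b_i,g_j),(b_i,g_k)$ survives in $N_{\tilde p}$, and the gap for the other is $\Theta(\gamma)$ --- the same order as $\|p-\tilde p\|_\infty$. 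So non-edges of $N_{\tilde p}$ \emph{can} become edges of $N_p$, and in general $E_{\tilde p}\subsetneq E_p$. Your flow rescaling is also broken: within a component of $N_{\tilde p}$ the price ratios are fixed by the utility ratios via~(\ref{fr1}), and since $u_{ij}/\tilde u_{ij}$ varies with $(i,j)$, the ratio $p_j/\tilde p_j$ is not constant on components, so the rescaled flow violates conservation.

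The paper avoids both pitfalls. It proves only the one-sided inclusion $E_{\tilde p}\subseteq E_p$, exploiting that $u$ and $q:=Dp$ are integral so that $u_{ij}q_k > u_{ik}q_j - 1$ forces $u_{ij}q_k \ge u_{ik}q_j$. Then, rather than building a flow, it runs a min-cut argument: since $E_{\tilde p}\subseteq E_p$, every cut of $N_p$ is a cut of $N_{\tilde p}$; the capacities of corresponding cuts differ by at most $(n+m)^2 W(MUW)^m\gamma'/D$; and $(s,\,B\cup G\cup t)$ is a min-cut in $N_{\tilde p}$. Integrality of $q$ then forces this cut to remain a min-cut in $N_p$, which is exactly market clearing for $p$. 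You should replace your structure-equality claim and flow rescaling by this min-cut comparison.
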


\begin{proof}
 Note that $A$ is an $m \times m$ integral matrix with every entry upper bounded by $\max(U,\max(n,m)W)$. Let $M = m \cdot\max(n,m)$ and $D$ be the determinant of $A$. We have  $D \leq (MUW)^m$. We have $Ap=X$ and $A\tilde{p} = \tilde{X}$. Thus we can express $p = \frac{1}{D}q$ and $\tilde{p} = \frac{1}{D}\tilde{q}$ where both $q$ and $\tilde{q}$ are integral vectors. Also we have $A(p-\tilde{p}) = X -\tilde{X}$. Therefore we have $p-\tilde{p} = A^{-1} (X - \tilde{X}) = \nicefrac{1}{D} \cdot B(X - \tilde{X})$, where $B$ is an integral matrix with largest entry at most $m!(U \max(n,m)W)^m$ . In particular, for any $i \in [m]$ we have $\abs{p_i - \tilde{p}_i } \leq \nicefrac{(MUW)^m \gamma'}{D}$. This also implies that $\lvert q_i - \tilde{q}_i \rvert \leq (MUW)^m \gamma'$. Note that to prove that $p$ is an equilibrium price vector corresponding to the unperturbed utilities, it suffices to show that the min-cut in $N_p$ is $\sum_{i \in [n]} \sum_{j \in [m]} w_{ij}p_j$. To this end we first show that the set of edges in $N_{\tilde{p}}$ is a subset of that in $N_p$.
 
 \begin{observation}
  Let $E_{\tilde{p}}$ and $E_p$ be the set of edges in $N_{\tilde{p}}$ and $N_p$ respectively. We have $E_{\tilde{p}} \subseteq E_p$.
 \end{observation} 
 
 \begin{proof}
First observe that the edges $(s,b_i)$ for all $i \in [n]$ and $(g_j,t)$ for all $j \in [m]$ belong to both $E_{\tilde{p}}$ and $E_{p}$. We now argue about the bang-for-buck edges. To this end, consider  any edge $(b_i,g_j) \in E_{\tilde{p}}$. We have $\frac{\tilde{u}_{ij}}{\tilde{p}_j} \geq \frac{\tilde{u}_{ik}}{\tilde{p}_k}$ for all $k \in [m]$. Or equivalently we have $\tilde{u}_{ij}\tilde{q}_k \geq \tilde{u}_{ik}\tilde{q}_j$ for all $k \in [m]$. Note that to show $(b_i,g_j) \in N_{p}$, it suffices to show that $u_{ij}q_k \geq u_{ik}q_j$ for all $k \in [m]$. Using that $\gamma \ll  \nicefrac{1}{\Theta((MUW)^{3m})}$ and $\gamma' \ll \nicefrac{1}{\Theta((MUW)^{2m})}$ we have 
  \begin{align*}
   u_{ij}q_k  &\geq (\tilde{u}_{ij} - \gamma)\tilde{q}_k  + (\tilde{u}_{ij} - \gamma)(q_k - \tilde{q}_k)\\
              &\geq \tilde{u}_{ij}\tilde{q}_k - \gamma \tilde{q}_k -  U(MUW)^m\gamma'\\
              &\geq \tilde{u}_{ik}\tilde{q}_j - \gamma \tilde{q}_k -  U(MUW)^m\gamma'\\
              &\geq (u_{ik}-\gamma)q_j + (u_{ik} - \gamma)(\tilde{q}_j - q_j) - \gamma \tilde{q}_k-  U(MUW)^m\gamma'\\
              &\geq u_{ik}q_j - \gamma(q_j + \tilde{q}_k) - 2U(MUW)^m\gamma'\\
              &> u_{ik}q_j - 1
 \end{align*}  
  From the integrality of the matrix $u$ and the vector $q$ we have that $u_{ij}q_k \geq u_{ik}q_j$ or equivalently $\frac{u_{ij}}{p_j} \geq \frac{u_{ik}}{p_k}$ for all $k \in [m]$. Therefore we have $E_{\tilde{p}} \subseteq E_{p}$.   
 \end{proof}
In particular, any cut $Z$ in $N_{p}$ is also a cut in $N_{\tilde{p}}$. Consider any edge $e \in E_{\tilde{p}} \cap E_{p}$. Let $\tilde{c}(e)$ denote the capacity of $e$ in $N_{\tilde{p}}$ and $c(e)$ denote the capacity of $e$ in $N_{p}$. For any cut $Z$ of $N_{p}$ and $N_{\tilde{p}}$, we define $c(Z)$ and $\tilde{c}(Z)$ analogously. We now make the following observation.

\begin{observation}
 \label{cut-bounds}
 Let $Z$ be the min-cut in $N_{p}$. We have $\abs{c(Z) - \tilde{c}(Z)} \le  \nicefrac{(n+m)^2W(MUW)^m\gamma'}{D}$.  
\end{observation}

\begin{proof}
 Consider any edge $e \in E_{\tilde{p}} \cap E_{p}$. Observe that 
 
 \begin{itemize}
  \item If $e = (b_i,g_j)$ for any $i \in [n], j \in [m]$ then $c(e) = \tilde{c}(e) = \infty$.
  \item If $e = (g_j,t)$ for any $j \in [m]$, then 
\[ \tilde{c}(e)  = \sum_{i \in [n]} w_{ij}\tilde{p}_j
                     = \sum_{i \in [n]}w_{ij}p_j + \sum_{i \in [n]}w_{ij}(\tilde{p}_j-p_j)
                     = c(e) + \sum_{i \in [n]}w_{ij}(\tilde{p}_j-p_j).\]
   Since $\abs{ p'_j - p_j} \leq \nicefrac{(MUW)^m\gamma'}{D}$ we have that $\abs{c(e) - \tilde{c}(e)} \leq  \nicefrac{nW(MUW)^m\gamma'}{D}$.       
  \item If $e = (s,b_i)$ for any $i \in [n]$, by a symmetric argument we have $\abs{c(e) - \tilde{c}(e)}\leq  \nicefrac{mW(MUW)^m\gamma'}{D}$.  
 \end{itemize}
Therefore, $\abs{c(Z) - \tilde{c}(Z)} \leq \abs{Z} \cdot \nicefrac{(n+m)W(MUW)^m\gamma'}{D}$. Since $Z$ is the min-cut it does not contain the bang per buck edges (as their capacity is $\infty$). Hence $\abs{Z}$ is at most $n+m$ and the claim follows. %hence $\abs{c(Z) - \tilde{c}(Z)} \leq \nicefrac{(n+m)^2W(mUW)^m\gamma'}{D}$. 
\end{proof} 
Consider the cut $Z = (s, B \cup G \cup t)$. $Z$ is a min-cut in $N_{\tilde{p}}$. Note that $c(Z) = \nicefrac{\sum_{i \in [n]} \sum_{j \in [m]} w_{ij}q_j}{D}$. Assume that $Z$ is not the min-cut in $N_{p}$ and let $Z'$ be the min-cut. Then we have $c(Z') \leq \nicefrac{(\sum_{i \in [n]} \sum_{j \in [m]} w_{ij}q'_j-1)}{D}$. We now upper bound $\tilde{c}(Z')$. We have (the first inequality uses Observation~\ref{cut-bounds})

\begin{align*}
 \tilde{c}(Z')   &\leq c(Z') + \frac{(n+m)^2W(MUW)^m\gamma'}{D} \\
                 &\leq \frac{\sum_{i \in [n]} \sum_{j \in [m]} w_{ij}q_j}{D} - \frac{1}{D} + \frac{(n+m)^2W(MUW)^m\gamma'}{D}\\ 
                 &=\frac{\sum_{i \in [n]} \sum_{j \in [m]} w_{ij}\tilde{q}_j}{D} + \frac{\sum_{i \in [n]} \sum_{j \in [m]} w_{ij}(q_j-\tilde{q}_j)}{D} - \frac{1}{D} + \frac{(n+m)^2W(MUW)^m\gamma'}{D}\\
                 &\leq \sum_{i \in [n]} \sum_{j \in [m]} w_{ij} + \sum_{i \in [n]} \sum_{j \in [m]} w_{ij}\frac{(MUW)^m\gamma'}{D}  - \frac{1}{D} + \frac{(n+m)^2W(MUW)^m\gamma'}{D}\\
                 &\leq \tilde{c}(Z) - \frac{1}{D} + 2\frac{(n+m)^2W(MUW)^m\gamma'}{D}\\
                 &< \tilde{c}(Z), 
\end{align*}
a contradiction. Thus $Z$ is a min-cut in $N_p$ as well and has capacity $\sum_{i \in [n]} \sum_{ j \in [m]} w_{ij}p_{j}$.
\end{proof}
By the above, the equilibrium price vector corresponding to the unperturbed utilities can be determined from $\tilde{p}$ and $N_{\tilde{p}}$ by solving the system $Ap =b$ where $A$ has full rank and the absolute value of every entry is bounded by $\max(U,\max(n,m)W)$. This can be achieved in $\mathcal{O}(m^3)$ arithmetic operations and in time $\mathcal{O}(m^4 \cdot \log(nmUW))$. Therefore we can use the same perturbation scheme used in~\cite{DGM:Arrow-Debreu} and ensure that the equality network is acyclic at every iteration of our algorithm. This ensures that we can determine the balanced flow in $\mathcal{O}((n+m)^2)$.

\subsection{Summary}

\begin{theorem}
 The market clearing price vector for the general Arrow-Debreu market can be determined with $\mathcal{O}((n+m)^6 \log (nmWU))$ arithmetic operations.
\end{theorem}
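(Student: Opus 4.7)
The plan is to multiply the iteration count already established with the per-iteration cost, and then add the one-off cost of extracting equilibrium prices from a sufficiently accurate price vector. From the previous theorem the total number of iterations of Algorithm~\ref{Program} is $\mathcal{O}(n^3\max(n,m)\log(nWU)) = \mathcal{O}((n+m)^4\log(nmWU))$, so to reach the $\mathcal{O}((n+m)^6\log(nmWU))$ bound it suffices to show that each iteration performs only $\mathcal{O}((n+m)^2)$ arithmetic operations, and that the final extraction step fits inside this budget.

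First I would go through the per-iteration work item by item. The generalization of the $\mathcal{O}(n^2)$ procedure from~\cite{DGM:Arrow-Debreu} computes the set $S$ of high-surplus agents and its neighborhood $\Gamma(S)$ in $\mathcal{O}((n+m)^2)$ time. Evaluating each of the quantities $\xeq, x_{23}, x_{24}, x_{13}, x_2, \xmax$ reduces to scanning the edges of $N_p$ together with the current surpluses and weights, which is again $\mathcal{O}((n+m)^2)$. Applying the multiplicative update (\ref{eq-1})--(\ref{eq-4}) to the prices and to the flow touches at most $\mathcal{O}((n+m)^2)$ quantities. The critical item is the recomputation of the balanced flow in $N_{p'}$: generically this costs $n$ max-flow computations by Lemma on balanced flows, which is far too expensive.

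Here is where the main obstacle lies, and where the perturbation machinery of Section~\ref{perturbation} is crucial. I would invoke it to argue that, after replacing the utilities $u_{ij}$ by the perturbed values $\tilde u_{ij}(\delta)$, the equality network $N_{\tilde p}$ stays acyclic throughout the execution, exactly as in~\cite{DGM:Arrow-Debreu}; the only adaptation is that vertices on the good-side are no longer in one-to-one correspondence with agents, but the acyclicity argument is purely combinatorial and goes through verbatim for the bipartite network with $n+m$ non-terminal vertices. On an acyclic equality network a balanced flow can be computed in $\mathcal{O}((n+m)^2)$ arithmetic operations by the incremental update used in~\cite{DGM:Arrow-Debreu}, since each local rebalancing affects only an $\mathcal{O}(n+m)$-sized subnetwork. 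Adding the bounds of the previous paragraph, every iteration costs $\mathcal{O}((n+m)^2)$ arithmetic operations.

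Finally I would add the rounding step. By Theorem~\ref{extraction}, once $\norm{r_f}_2 \le \varepsilon$ the canonical equilibrium price vector can be recovered by solving the linear system of Lemma~\ref{omar-fr}, which together with Lemma~\ref{det p from p'} requires $\mathcal{O}((n+m)^4\log(UW))$ arithmetic operations and is therefore absorbed into the main term. Combining everything gives
\[
\mathcal{O}\bigl((n+m)^4\log(nmWU)\bigr)\cdot \mathcal{O}\bigl((n+m)^2\bigr)+\mathcal{O}\bigl((n+m)^4\log(UW)\bigr)=\mathcal{O}\bigl((n+m)^6\log(nmWU)\bigr),
\]
which is the bound claimed in the theorem.
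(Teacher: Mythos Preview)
Your proposal is correct and follows essentially the same approach as the paper: multiply the $\mathcal{O}((n+m)^4\log(nmWU))$ iteration bound by an $\mathcal{O}((n+m)^2)$ per-iteration cost (obtained via the perturbation of Section~\ref{perturbation} keeping $N_p$ acyclic so that balanced flows are cheap), and absorb the final extraction step. The paper phrases the balanced-flow cost as ``$n+m$ max-flow calls, each $\mathcal{O}(n+m)$ on an acyclic network'' rather than your ``incremental update'', and it separates the final un-perturbation step (Lemma~\ref{det p from p'}) as an additional $\mathcal{O}(m^3)$ arithmetic operations, but these are cosmetic differences.
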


\begin{proof}
 We perturb the utility matrix following the same perturbation as in \cite{DGM:Arrow-Debreu}. This perturbation ensures that the equality network is acyclic at any point in time in the algorithm. Thereafter we run Algorithm \ref{Program} until $\norm{r_f} < {1}/{(8\cdot (n+m)^{4(n+m)}(UW)^{3(n+m)})}$. This involves $\mathcal{O}(n^3 m$ $ \log (nWU))$ iterations. In each iteration we can determine $x_{23}$, $x_{24}$, $x_{13}$, $x_2$ and $\xeq$ in $\mathcal{O}(n^2)$ comparisons. The balanced flow can also be determined by $n+m$ max flow calls in $N_p$. Since $N_p$ is acyclic (due to the perturbation), we can compute each max flow in $\mathcal{O}(n+m)$ arithmetic operations as in \cite{DGM:Arrow-Debreu}. Thus every iteration involves $\mathcal{O}((n+m)^2)$ arithmetic operations and comparisons. Therefore Algorithm 1 terminates performing $\mathcal{O}((n+m)^6 \log(nmWU))$ arithmetic operations and comparisons. Thereafter we perform extraction as in Theorem $\ref{extraction}$ in time $\mathcal{O}((n+m)^4 \log(nmUW))$ and determine the equilibrium prices for the perturbed utilities. We then determine equilirbium prices corresponding to the original utilities from the equilibrium prices of the perturbed utilities performing $\mathcal{O}(m^3)$ arithmetic operations in $\mathcal{O}(m^4 \log(UW))$ time. Overall we perform $\mathcal{O}((n+m)^6 \log(nmWU))$ arithmetic operations and comparisons.   
\end{proof}
 
To achieve the polynomial running time we can follow the same strategy used in~\cite{\citeDM}. where we restrict the prices and the update factor to powers of $1 + 1/L$ where $L$ has polynomial bit length (linear in $n+m$). This guarantees that all arithmetic is done on rationals of polynomial bitlength. This can be adapted to the perturbation as well~\cite{DGM:Arrow-Debreu}.

\section{Lower Bounds for the Algorithms in~\cite{Duan-Mehlhorn:Arrow-Debreu-Market,DGM:Arrow-Debreu}} %multiplicative update algorithms for $AD_S(n)$}
In this section, we construct non-trivial instances that make the equilibrium prices exponential in $U$ and forces the algorithms in~\cite{Duan-Mehlhorn:Arrow-Debreu-Market,DGM:Arrow-Debreu} to execute large number of iterations. 

\subsection{Instances where Algorithm in~\cite{DGM:Arrow-Debreu} requires $\Omega(n^4 \log U)$ Iterations}
 \label{DGMI_n}

% Now we wish to give the reader an impression on how the iterations looks like when we run the $\DM$ on instance $I$. This will also elucidate why the $\DGM$ and our algorithm will also incur the same running time. We start with some observations:

We construct an $I_n$ comprises of a set of agents $B = \left\{b_1,b_2,...,b_n\right\}$ and a set of goods $G = \left\{g_1,g_2,...,g_n\right\}$, and $n$ is $\emph{even}$. There is exactly one unit of each good ($W =1$) and agent $b_i$ only owns one unit of good $g_i$. We now define the utility matrix as follows: $u_{i,i-1} = U \quad  \text{for } 2 \leq i \leq n$, $u_{i,i+1} = 1 \quad  \text{for } 2 \leq i \leq n-1$, and $u_{1,1} = u_{1,2} = U$. 
$u_{i,j} = 0$ for every other pair of $(i,j)$. See Figure~\ref{First Example}.

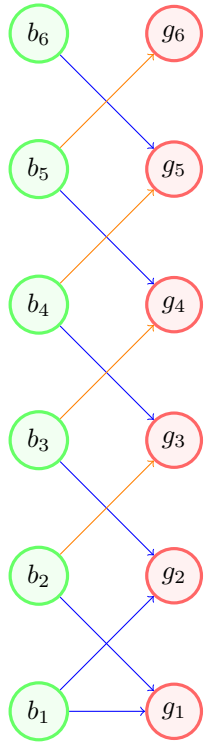
\begin{figure}[t]
\centering
\begin{tikzpicture}
[
roundnode/.style={circle, draw=green!60, fill=green!5, very thick, minimum size=7mm},
roundnode2/.style={circle, draw=red!60, fill=red!5, very thick, minimum size=7mm},
]
%Nodes

\node[roundnode]      (l6)       {$b_6$};
\node[roundnode]      (l5)       [below=of l6] {$b_5$};
\node[roundnode]      (l4)       [below=of l5] {$b_4$};
\node[roundnode]      (l3)       [below=of  l4]{$b_3$};
\node[roundnode]      (l2)       [below=of  l3]{$b_2$};
\node[roundnode]      (l1)       [below=of  l2]{$b_1$};

\node[roundnode2]      (r6)       [right=of l6] {$g_6$};
\node[roundnode2]      (r5)       [right=of l5] {$g_5$};
\node[roundnode2]      (r4)       [right=of l4] {$g_4$};
\node[roundnode2]      (r3)       [right=of l3] {$g_3$};
\node[roundnode2]      (r2)       [right=of l2] {$g_2$};
\node[roundnode2]      (r1)       [right=of l1] {$g_1$};
 
%Lines
\draw[blue,->] (l6) -- (r5);
\draw[blue,->] (l5) -- (r4);
\draw[blue,->] (l4) -- (r3);

\draw[blue,->] (l3) -- (r2);
%\draw[blue,->] (l2) -- (r2);
\draw[blue,->] (l2) -- (r1);
\draw[blue,->] (l1) -- (r2);
\draw[blue,->] (l1) -- (r1);

%\draw[orange,->] (l6) -- (r6);
%\draw[orange,->] (l5) -- (r5);
%\draw[orange,->] (l4) -- (r4);
%\draw[orange,->] (l3) -- (r3);

\draw[orange,->] (l2) -- (r3);
\draw[orange,->] (l3) -- (r4);
\draw[orange,->] (l4) -- (r5);
\draw[orange,->] (l5) -- (r6);

\end{tikzpicture}    
\caption{Utility matrix for $n =6$. The agents are the green nodes on the left and the goods are the red nodes on the right. Agent $b_i$ owns good $g_i$. The blue edges represent utility $U$ and the orange ones represent utility one}\label{First Example}\end{figure}

%\clearpage

\begin{observation}
 \label{irreducibleI_n}
  The instance $I_n$ is irreducible or equivalently for every $P \subset B$, there exists $b_i \in P$ ,$b_j \notin P$ such that $u_{i,j}>0$.  
\end{observation}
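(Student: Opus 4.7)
The plan is to show that for any proper nonempty subset $P \subset B$ we can locate a ``boundary'' agent $b_i \in P$ who has positive utility on some good $g_j$ owned by an agent $b_j \notin P$. The argument is a short case analysis that exploits the fact that in $I_n$ every agent (except at the endpoints) likes precisely the goods owned by its two neighbors in the natural ordering $b_1, b_2, \ldots, b_n$.

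First I would observe that, since the index set $I := \{i : b_i \in P\}$ is a proper nonempty subset of $[n]$, there must exist two consecutive indices $k, k+1 \in [n]$ with exactly one of $k, k+1$ in $I$; otherwise $I$ would propagate to all of $[n]$ or to none of it. This reduces the proof to analyzing a single ``cut'' in the chain.

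Next I would split into two cases based on which side of the cut lies in $P$. In the first case $b_k \in P$ and $b_{k+1} \notin P$: if $k = 1$ then $u_{1,2} = U > 0$ (from the special definition for $b_1$) yields the pair $(b_1, b_2)$; if $k \geq 2$ then $k+1 \leq n$ forces $k \leq n-1$, so $u_{k,k+1} = 1 > 0$ is in the range of the definition and yields $(b_k, b_{k+1})$. In the second case $b_k \notin P$ and $b_{k+1} \in P$: since $k+1 \geq 2$ the utility $u_{k+1,k} = U > 0$ is defined and gives the required pair $(b_{k+1}, b_k)$. Because $b_j$ is the sole owner of $g_j$ (so that $u_{ij} > 0$ for some $b_j \notin P$ is exactly the nonstandard assumption specialized to $I_n$), this completes the argument.

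There is essentially no technical obstacle here: the ``interest'' relation forms a connected chain on $[n]$, and any cut of a connected graph has a crossing edge, which is exactly what irreducibility demands. The only point that needs care is the endpoint $k = 1$, where there is no backward neighbor $g_0$; this is handled by the explicit definition of $u_{1,2} = U$ in $I_n$, which is precisely why the special case for $b_1$ was included in the construction.
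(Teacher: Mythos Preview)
Your proof is correct. Both arguments exploit the path structure of the interest relation on $[n]$, but the paper organizes it as a contradiction: assuming no crossing edge exists, it shows that $b_1 \in P$ forces $P = B$ by forward propagation ($u_{i,i+1} > 0$ for $1 \le i \le n-1$), while if $b_1 \notin P$ then any $b_i \in P$ with $i \ge 2$ cascades down to $b_1$ via $u_{i,i-1} = U$. You instead directly locate a boundary pair $(k,k+1)$ and read off the crossing utility edge by a short case analysis. Your version is constructive and makes the ``a cut in a connected graph has a crossing edge'' reading explicit; the paper's is the contrapositive of the same fact phrased as closure under propagation. Neither is materially simpler than the other.
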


\begin{proof}
For contradiction let us assume that there exists a $P \subset B$, such that the agents in $P$ are only interested in the goods they own. If $P$ contains $b_1$ then it must contain $b_2$. The agent $b_i$ is interested in the good $g_{i+1}$ and hence $b_i \in P$ implies  $b_{i+1} \in P$. Thus $P =B$, which is a contradiction. Hence $P$ cannot contain $b_1$. If $P$ contains any agent $b_i \neq b_1$ then $P$ must also contain $b_{i-1}$, implying that $P$ must contain $b_1$, which is again a contradiction.
\end{proof}

\begin{theorem}
 For the instance $I_n$, we have the ratio of the maximum to the minimum price of a good at equilirbium is $\Omega(U^{\Omega(n)})$. 
\end{theorem}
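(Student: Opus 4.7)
The plan is to construct, at any equilibrium of $I_n$, a strictly decreasing sequence of good indices $n = i_0 > i_1 > \cdots > i_L$ with $i_L \le 2$ satisfying $p_{i_k} \ge U^k\, p_n$ for every $k$. Because each step will shrink the index by at most two, this forces $L \ge (n-2)/2 = n/2 - 1$, so some good has price at least $U^{n/2-1} p_n$, and since $g_n$ supplies a good of price $p_n$, the ratio of maximum to minimum equilibrium prices is at least $U^{n/2-1} = U^{\Omega(n)}$.

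The first step is an auxiliary ``symmetric flow'' identity: at any equilibrium, $f_{i, i+1} = f_{i+1, i}$ for every $i \in \{2, \ldots, n-1\}$. I would prove this by applying flow conservation to the cut separating the suffix of goods $\{g_{i+1}, \ldots, g_n\}$ from the rest of the network. The zig-zag utility structure of $I_n$ implies that the only agents with positive utility for a good in this suffix are $b_i, b_{i+1}, \ldots, b_n$, and among these only $b_i$ and $b_{i+1}$ have a neighbour outside the suffix (namely $g_i$); equating the revenue $\sum_{j \ge i+1} p_j$ with the sum of agent contributions collapses exactly to $f_{i, i+1} = f_{i+1, i}$.

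With this identity in hand, the chain is built inductively. The base step $i_0 = n \to i_1 = n-2$ is forced because $g_n$ is liked only by $b_{n-1}$ (with $u_{n-1, n} = 1$), so $b_{n-1}$ must include $g_n$ in its bang-for-buck set, which yields $p_{n-2} \ge U p_n$. For the inductive step at $i_k \in \{3, \ldots, n-1\}$ with $p_{i_k} \ge U^k p_n$, I would split on whether $f_{i_k - 1, i_k}$ is positive. If it is, then $g_{i_k}$ lies in $b_{i_k - 1}$'s BpB set, giving $p_{i_k - 2} \ge U p_{i_k}$ and $i_{k+1} = i_k - 2$. If it is zero, the identity gives $f_{i_k, i_k - 1} = 0$, forcing $b_{i_k}$ to spend its full budget $p_{i_k}$ on $g_{i_k + 1}$; this simultaneously yields $p_{i_k + 1} \ge p_{i_k}$ (the revenue of $g_{i_k + 1}$ must cover $b_{i_k}$'s spending) and $p_{i_k - 1} \ge U p_{i_k + 1}$ (because $g_{i_k + 1}$ is now in $b_{i_k}$'s BpB set), and together these give $p_{i_k - 1} \ge U p_{i_k}$, so $i_{k+1} = i_k - 1$ works. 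Iterating until $i_L \le 2$ produces a chain of length at least $n/2 - 1$, which gives the desired bound.

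The main obstacle will be the ``shift by one'' (Case 2) step: without the flow identity it is not immediate that $b_{i_k}$'s entire budget goes to $g_{i_k + 1}$, and hence the boost $p_{i_k - 1} \ge U p_{i_k}$ is hard to justify directly. The identity, which reflects a combinatorial symmetry of money flow across consecutive agent--good pairs of the zig-zag, is the key technical observation that keeps the induction alive whenever the straightforward Case 1 bound fails.
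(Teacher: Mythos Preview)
Your argument is correct and yields the same $U^{n/2-1}$ bound, but it is organised quite differently from the paper's proof. The paper makes a single \emph{global} dichotomy on whether $p_n=p_{n-1}$ or $p_{n-1}>p_n$ and then, inside each branch, runs a downward induction over even indices that pins down a rigid pattern for the entire price and flow vector (in the first branch $p_{i-1}=p_i$ with $b_i$ and $b_{i-1}$ exchanging their full budgets; in the second branch $p_{i-1}>p_i$ and flow propagates along the other diagonal). You instead perform a \emph{local} dichotomy at every step of the chain, and the engine that keeps Case~2 alive is your flow-balance identity $f_{i,i+1}=f_{i+1,i}$, which the paper never isolates. The paper's route extracts more structural information about the equilibrium (the exact pairing of prices and flows), whereas your route is more robust: it never commits to a global pattern, treats both of the paper's branches uniformly, and would transfer more readily to perturbations of $I_n$. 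One cosmetic slip: when you write that $b_i$ and $b_{i+1}$ both have neighbour ``$g_i$'' outside the suffix, $b_i$'s outside neighbour is actually $g_{i-1}$; this does not affect the derivation of the identity.
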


\begin{proof}
 Let $p$ be the market clearing price vector with $p_i$ denoting the price of good $g_i$, and $f$ be the money flow at equilibrium. Since the only agent interested in $g_n$ is $b_{n-1}$ , $p_{n-1} \geq p_n$. We now discuss  two disjoint scenarios,
 \begin{itemize}
    \item $p_n = p_{n-1}$. In this case we claim that for every even $i$, $p_{i-1} = p_i \geq U \cdot p_{i+1} = U \cdot p_{i+2}$ and $f_{i,i-1} = f_{i-1,i} = p_i = p_{i-1}$. For the base case, $i=n  $ we have $p_n = p_{n-1}$ and $f_{n,n-1}=f_{n-1,n}=p_n=p_{n-1}$. For the inductive step we assume that the claim holds for $i+2$. Since $g_{i+2}$ is a bang per buck good for $b_{i+1}$, $p_i \geq U\cdot p_{i+2} = U \cdot p_{i+1}$. Since the only other agent interested in $g_i$ is $b_{i-1}$ we may conclude that $p_{i-1} \geq p_i$ ($b_{i-1}$ is the only agent investing in $g_i$). But then again, since the only good $b_i$ invests in is $g_{i-1}$, $p_{i-1} \leq p_i$. This implies that $p_{i-1}=p_i\geq U \cdot p_{i+1} = U \cdot p_{i+2}$ and $f_{i,i-1} = f_{i-1,i} = p_i = p_{i-1}$.
    \item $p_{n-1} > p_n$. In this case we claim that for every even $i$, $p_i < p_{i-1}$ and $p_{i-1} \geq U \cdot p_{i+1}$. For the base case $i=n$, this trivially holds. For the inductive step we assume that our claim holds for $i+2$. Since $p_{i+1} > p_{i+2}$, the agent $b_{i+1}$ must invest in the good $g_{i}$. This implies that $p_{i} \leq U \cdot p_{i+2}$. Since $p_{i+2} < p_{i+1}$, agent $b_i$ must invest in good $g_{i+1}$. Therefore $g_{i+1}$ is a bang per buck good for agent $b_i$, implying that $p_{i-1} \geq U \cdot p_{i+1} > U \cdot p_{i+2} \geq p_i$.
   \end{itemize} 
 In either case, we have $p_{i-1} \ge U p_{i+1}$ for even $i$. Thus there are goods with price ratio equal to $U^{\frac{n}{2}-1}$.
\end{proof}
Note that the algorithm in~\cite{Duan-Mehlhorn:Arrow-Debreu-Market} never increases the price of a good more than a multiplicative factor of $(1 + \mathcal{O}(1/n^3))$. Thus the number of iterations is $\log_{1 + \mathcal{O}(1/Rn^3)} U^{\Omega(n)} \in \Omega(n^4 \log(U))$. To prove that the algorithm in~\cite{Duan-Mehlhorn:Arrow-Debreu-Market,DGM:Arrow-Debreu} takes $\Omega(n^4 \log(U))$ iterations, we first need to understand the details of how the sets $S$ (high surplus agents) and $\Gamma(S)$ (high demand goods) evolve throughout the iterations of the algorithm in~\cite{Duan-Mehlhorn:Arrow-Debreu-Market,DGM:Arrow-Debreu}. In particular we show that there exists a good $g_i$ and its price is increased by a multiplicative factor of $U^{\Omega(n)}$ in $\xmax$ iterations with $\Omega(n)$ type-1 agents. Note that the price of any good is increased at most by a multiplicative factor of $1 + \mathcal{O}(1/n^3)$ in any $\xmax$ iteration with $\Omega(n)$ type-1 agents. Since the total price increase in such iterations is $U^{\Omega(n)}$, the number of such iterations is $\Omega(\log_{1 + \mathcal{O}(1/Rn^3)} U^{\Omega(n)}) \in \Omega(n^4 \log(U))$. 

We first give details on how the algorithm in~\cite{Duan-Mehlhorn:Arrow-Debreu-Market} operates on $I_n$. This will also elucidate why the algorithm in~\cite{DGM:Arrow-Debreu} and our algorithm will also incur the same running time. We start with some observations:

\begin{observation}
 \label{technical-DGM}
 Let $n$ be an even integer and  $p_i$ be the price of $g_i$ and the budget of $b_i$ for $i \in [n]$. For an even integer $k < n $,
 \begin{itemize}
   \item $p_i = 1 \quad  k < i \leq n $,
   \item $1 \leq p_k = p_{k-1} \leq p_{k-2} = p_{k-3} \leq ...... p_4=p_3 \leq p_2=p_1$.
 \end{itemize}
 The equality network $N_p$ comprises of the following edges in addition to the ones connecting the agents to the source and the goods to the sink,
 \begin{itemize}
  \item $(b_i,g_{i-1}) \quad 2 \leq i \leq n$,
  \item $(b_i,g_{i+1}) \quad 1 \leq i \leq k-1$,
  \item $(b_1,g_1)$.
 \end{itemize}
 Then, the surplus $r_f(b_i)$ for an agent $b_i$ w.r.t a balanced flow $f$ is
 \begin{itemize}
  \item $r_f(b_i) = 0  \quad  k+1<i \leq n$
  \item $r_f(b_i) = \frac{1}{k+1} \quad 1 \leq i \leq k+1$.
 \end{itemize}
\end{observation}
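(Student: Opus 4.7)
The plan is to compute the balanced flow explicitly by decomposing the equality network into its connected components and exhibiting, in each component, a valid flow that realizes the claimed surplus pattern. The hypotheses tell us both the price vector and the set of bang-per-buck edges, so the non-source/sink part of $N_p$ is fully specified and the argument is purely flow-theoretic.

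First I would read off the connected components of the bipartite part of $N_p$ from the listed edges. Tracing the alternating edges $(b_i,g_{i-1})$ and $(b_i,g_{i+1})$ shows that $b_1,\ldots,b_{k+1}$ together with $g_1,\ldots,g_k$ form a single component $C$ which is in fact a tree with two arms meeting at $b_1$, namely $b_1-g_1-b_2-g_3-\cdots-b_k$ and $b_1-g_2-b_3-g_4-\cdots-b_{k+1}$. For $i\geq k+2$ the edge $(b_i,g_{i-1})$ is the only one incident to either endpoint (since $(b_{i-2},g_{i-1})$ is present only when $i-2\leq k-1$), so $\{b_i,g_{i-1}\}$ is a trivial two-vertex component, and $g_n$ is isolated.

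Next I would use a budget/value accounting to pin down the total surplus in each component. In $C$ the sum of agent budgets is $\sum_{i=1}^{k+1}p_i=1+\sum_{j=1}^{k}p_j$ while the sum of good values is $\sum_{j=1}^{k}p_j$, leaving an excess of exactly $1$. In each trivial component budget and value both equal $1$, so any valid flow forces $r_f(b_i)=0$ for $k+2\leq i\leq n$. In $C$ the agent surpluses must therefore sum to $1$, and for the balanced flow the $L_2$ norm of the surplus vector is minimized under this linear constraint; by strict convexity of $x\mapsto x^2$ the minimum is achieved uniquely when all $k+1$ surpluses equal $1/(k+1)$. What remains is to show this surplus profile is actually realized by some valid flow.

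To do that I would propagate flow in $C$ from each leaf toward the root $b_1$, enforcing at each intermediate agent that its total outflow equals its budget minus $s:=1/(k+1)$. Along arm one this yields $f_{b_{k-2j},g_{k-2j+1}}=js$ and $f_{b_{k-2j},g_{k-2j-1}}=p_{k-2j-1}-(j+1)s$, culminating in $f_{b_1,g_1}=(k/2)s$; along arm two it yields $f_{b_{k+1-2j},g_{k-2j}}=1-(j+1)s$ and $f_{b_{k+1-2j},g_{k-2j-2}}=p_{k-2j-1}-1+(j+1)s$, culminating in $f_{b_1,g_2}=p_1-1+(k/2)s$. Conservation at $b_1$ then reduces to $(k+1)s=1$, and non-negativity of every listed flow follows from $p_i\geq 1$ together with $ks<1$. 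Combined with the minimum-norm argument above, this proves the surpluses are exactly as stated.

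The main obstacle is the tree-flow bookkeeping on the two asymmetric arms (one leaf has budget $p_k\geq 1$, the other has budget $1$), so the closed forms on the two arms look different and need to be checked separately; once the formulas are written down, non-negativity and conservation are routine.
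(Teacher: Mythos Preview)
Your argument is correct and takes a genuinely different route from the paper. The paper first exhibits an explicit (unbalanced) maximum flow in which $r_f(b_{k+1})=1$ and all other agents have surplus zero, and then performs $k$ augmentations in the residual network along alternating paths from $b_{k+1}$ to each $b_i$ with $i\le k$, pushing $1/(k+1)$ units to each and arriving at the uniform profile. You instead decompose the bipartite part of $N_p$ into its connected components, use a budget--value count together with convexity of the $L_2$ norm to identify the only candidate balanced surplus profile upfront, and then verify achievability by a direct tree-flow construction rooted at $b_1$. Your route is conceptually cleaner---the convexity step pins down the target immediately without any augmentation---whereas the paper's route is more in the spirit of the algorithm itself (which proceeds by residual augmentations) and avoids having to write down closed-form tree-flow values on the two arms. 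One bookkeeping remark: in your second closed form on arm two the indices have slipped (the good above $b_{k+1-2j}$ in the tree is $g_{k-2j+2}$, not $g_{k-2j-2}$, and the correct value is $p_{k-2j+2}-1+js$); however your terminal values $f_{b_1,g_1}=(k/2)s$ and $f_{b_1,g_2}=p_1-1+(k/2)s$ are right, so the conservation check $(k+1)s=1$ and the non-negativity verification go through as stated.
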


\begin{proof}

Note that $(s \cup B \cup \Gamma(B),\left\{ g_n,t \right\})$ is a cut in the equality network with capacity $\sum_{i\in [n-1]}^{}p_i$. Any flow that saturates the edges of this cut is a maximum flow. We give the definition of such a flow $f$:

\begin{itemize}
    \item $f_{s,i} = p_i \quad i \in [n] \setminus \left\{k+1\right\}$,
    \item $f_{i,t} = p_i \quad i \in [n-1]$,
    \item $f_{i,i-1} = 1 \quad k+1 < i \leq n$,
    \item $f_{k+1,k} = 0$,
    \item $f_{i,i-1} = f_{i-1,i} = p_i = p_{i-1} \quad \text{for all even } i \in [k]$,
    \item $f_{1,1} = 0$.
\end{itemize}
Now we show how to transform this flow to get a balanced flow through a series of augmentations. We denote the residual equality network with respect to the flow $f$ as $N_p(f)$. We refer to all the directed edges of the equality network as \emph{forward edges} in $N_p(f)$, and the reverse of the edges carrying positive flow as \emph{backward edges} in $N_p(f)$. Capacity of any {forward edge} is $\infty$ and that of the {backward edge }$e$ is $f(\mathit{rev}(e))$. The maximum flow $f$ is balanced iff there exists no path from any $b_i$ to $b_j$, in $N_p(f)$ where $r_f(b_i) > r_f(b_j)$.

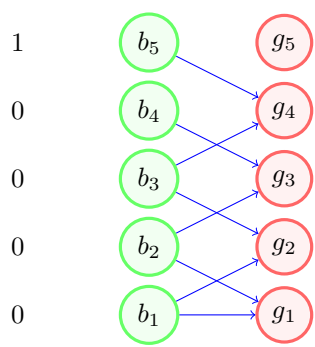
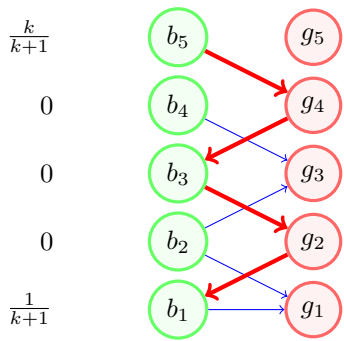
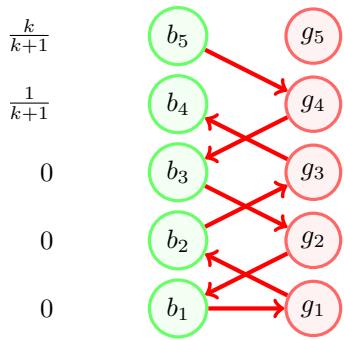
\begin{figure}[thb!]
 
 \begin{subfigure}{\textwidth}
   % !TEX root = main.tex

\begin{tikzpicture}
[
roundnode/.style={circle, draw=green!60, fill=green!5, very thick, minimum size=7mm},
roundnode2/.style={circle, draw=red!60, fill=red!5, very thick, minimum size=7mm},
whitenodes/.style={circle, draw=white!60, fill=white!55, very thick, minimum size=5mm},
roundnode4/.style={circle, draw=red!60, fill=red!55, very thick, minimum size=5mm},
]
%Nodes

\node[whitenodes]     (w) {};

\node[roundnode]      (l5)       [right=160pt of w] {$b_5$};
\node[roundnode]      (l4)       [below=3pt of l5] {$b_4$};
\node[roundnode]      (l3)       [below=3pt of  l4]{$b_3$};
\node[roundnode]      (l2)       [below=3pt of  l3]{$b_2$};
\node[roundnode]      (l1)       [below=3pt of  l2]{$b_1$};

\node[roundnode2]      (r5)       [right=of l5] {$g_5$};
\node[roundnode2]      (r4)       [right=of l4] {$g_4$};
\node[roundnode2]      (r3)       [right=of l3] {$g_3$};
\node[roundnode2]      (r2)       [right=of l2] {$g_2$};
\node[roundnode2]      (r1)       [right=of l1] {$g_1$};

%Surplus
\node[whitenodes]      (s1)       [left=of l1] {$0$};
\node[whitenodes]      (s2)       [left=of l2] {$0$};
\node[whitenodes]      (s3)       [left=of l3] {$0$};
\node[whitenodes]      (s4)       [left=of l4] {$0$};
\node[whitenodes]      (s5)       [left=of l5] {$1$};

%Lines

\draw[blue,->] (l5) -- (r4);
\draw[blue,->] (l4) -- (r3);
\draw[blue,->] (l3) -- (r2);
\draw[blue,->] (l2) -- (r1);
%\draw[blue,->] (l2) -- (r2);
\draw[blue,->] (l1) -- (r1);
\draw[blue,->] (l1) -- (r2);
\draw[blue,->] (l2) -- (r3);
\draw[blue,->] (l3) -- (r4);
%\draw[blue,->] (l4) -- (r5);

\end{tikzpicture}
   \caption{The equality network $N_p$ and the surplus vector with respect to $f$. Here $k=4$.} \label{fig:4a}
 \end{subfigure}

 \begin{subfigure}{\textwidth}
   % !TEX root = main.tex

\begin{tikzpicture}
[
roundnode/.style={circle, draw=green!60, fill=green!5, very thick, minimum size=7mm},
roundnode2/.style={circle, draw=red!60, fill=red!5, very thick, minimum size=7mm},
whitenodes/.style={circle, draw=white!60, fill=white!55, very thick, minimum size=5mm},
roundnode4/.style={circle, draw=red!60, fill=red!55, very thick, minimum size=5mm},
]
%Nodes

\node[whitenodes]     (w) {};

\node[roundnode]      (l5)       [right=160pt of w] {$b_5$};
\node[roundnode]      (l4)       [below=3pt of l5] {$b_4$};
\node[roundnode]      (l3)       [below=3pt of  l4]{$b_3$};
\node[roundnode]      (l2)       [below=3pt of  l3]{$b_2$};
\node[roundnode]      (l1)       [below=3pt of  l2]{$b_1$};

\node[roundnode2]      (r5)       [right=of l5] {$g_5$};
\node[roundnode2]      (r4)       [right=of l4] {$g_4$};
\node[roundnode2]      (r3)       [right=of l3] {$g_3$};
\node[roundnode2]      (r2)       [right=of l2] {$g_2$};
\node[roundnode2]      (r1)       [right=of l1] {$g_1$};

%Surplus
\node[whitenodes]      (s1)       [left=of l1] {$\frac{1}{k+1}$};
\node[whitenodes]      (s2)       [left=of l2] {$0$};
\node[whitenodes]      (s3)       [left=of l3] {$0$};
\node[whitenodes]      (s4)       [left=of l4] {$0$};
\node[whitenodes]      (s5)       [left=of l5] {$\frac{k}{k+1}$};

%Lines

%\draw[blue,->] (l5) -- (r4);
\draw[blue,->] (l4) -- (r3);
%\draw[blue,->] (l3) -- (r2);
\draw[blue,->] (l2) -- (r1);
%\draw[blue,->] (l2) -- (r2);
\draw[blue,->] (l1) -- (r1);
%\draw[blue,->] (l1) -- (r2);
\draw[blue,->] (l2) -- (r3);
%\draw[blue,->] (l3) -- (r4);
%\draw[blue,->] (l4) -- (r5);

%Path
\draw[red,ultra thick,->] (l5) -- (r4);
\draw[red,ultra thick,->] (r4) -- (l3);
\draw[red,ultra thick,->] (l3) -- (r2);
\draw[red,ultra thick,->] (r2) -- (l1);

\end{tikzpicture}
   \caption{Augmenting flow of $\frac{1}{k+1}$ along a path from $b_{5}$ to $b_1$ (odd $i$).} \label{fig:4b}
 \end{subfigure}
 
 \begin{subfigure}{\textwidth}
   % !TEX root = main.tex

\begin{tikzpicture}
[
roundnode/.style={circle, draw=green!60, fill=green!5, very thick, minimum size=7mm},
roundnode2/.style={circle, draw=red!60, fill=red!5, very thick, minimum size=7mm},
whitenodes/.style={circle, draw=white!60, fill=white!55, very thick, minimum size=5mm},
roundnode4/.style={circle, draw=red!60, fill=red!55, very thick, minimum size=5mm},
]
%Nodes

\node[whitenodes]     (w) {};
\node[roundnode]      (l5)       [right=160pt of w] {$b_5$};
\node[roundnode]      (l4)       [below=3pt of l5] {$b_4$};
\node[roundnode]      (l3)       [below=3pt of  l4]{$b_3$};
\node[roundnode]      (l2)       [below=3pt of  l3]{$b_2$};
\node[roundnode]      (l1)       [below=3pt of  l2]{$b_1$};

\node[roundnode2]      (r5)       [right=of l5] {$g_5$};
\node[roundnode2]      (r4)       [right=of l4] {$g_4$};
\node[roundnode2]      (r3)       [right=of l3] {$g_3$};
\node[roundnode2]      (r2)       [right=of l2] {$g_2$};
\node[roundnode2]      (r1)       [right=of l1] {$g_1$};

%Surplus
\node[whitenodes]      (s1)       [left=of l1] {$0$};
\node[whitenodes]      (s2)       [left=of l2] {$0$};
\node[whitenodes]      (s3)       [left=of l3] {$0$};
\node[whitenodes]      (s4)       [left=of l4] {$\frac{1}{k+1}$};
\node[whitenodes]      (s5)       [left=of l5] {$\frac{k}{k+1}$};

%Lines

%\draw[blue,->] (l5) -- (r4);
%\draw[blue,->] (l4) -- (r3);
%\draw[blue,->] (l3) -- (r2);
%\draw[blue,->] (l2) -- (r1);
%\draw[blue,->] (l2) -- (r2);
%\draw[blue,->] (l1) -- (r1);
%\draw[blue,->] (l1) -- (r2);
%\draw[blue,->] (l2) -- (r3);
%\draw[blue,->] (l3) -- (r4);

%Path
\draw[red,ultra thick,->] (l5) -- (r4);
\draw[red,ultra thick,->] (r4) -- (l3);
\draw[red,ultra thick,->] (l3) -- (r2);
\draw[red,ultra thick,->] (r2) -- (l1);
\draw[red,ultra thick,->] (l1) -- (r1);
\draw[red,ultra thick,->] (r1) -- (l2);
\draw[red,ultra thick,->] (l2) -- (r3);
\draw[red,ultra thick,->] (r3) -- (l4);

\end{tikzpicture}
   \caption{Augmenting flow of $\frac{1}{k+1}$ along a path from $b_{5}$ to $b_4$ (even $i$).} \label{fig:4a}
 \end{subfigure}
 \caption{The two different augmentations transforming the maximum flow $f$ into a balanced flow.}
\label{Transformation to balanced flow}
\end{figure}

In our definition of $f$, we see that capacity of any {backward edge} in $N_p(f)$ is at least 1. We also observe that $r_f(b_{k+1}) = 1$ and $r_f(b_i) = 0$ for all $i \neq k$. The following procedure converts the maximum flow $f$ into a balanced flow. 

\begin{enumerate}
  \item For every odd $i \in [k]$ augment a flow of $\frac{1}{k+1}$ along the path $b_{k+1} \rightarrow g_k \rightarrow b_{k-1} \rightarrow g_{k-2} \rightarrow ....\rightarrow g_{i+1} \rightarrow b_i$ in $N_p(f)$. It can be verified easily that this is a valid path in the residual equality network.
  This augmentation increases the surplus of agent $b_i$ from $0$ to $\frac{1}{k+1}$ and decreases the surplus of $b_{k+1}$ by $\frac{1}{k+1}$. See Figure~\ref{Transformation to balanced flow}b.
  
  \item For every even $i \in [1,k]$ augment a flow of $\frac{1}{k+1}$ along the path $b_{k+1} \rightarrow g_k \rightarrow b_{k-1} \rightarrow g_{k-2} \rightarrow ....\rightarrow g_{2} \rightarrow b_1 \rightarrow g_1 \rightarrow b_2 \rightarrow g_3 \rightarrow .... \rightarrow g_{i-1} \rightarrow b_i$ in $N_p(f)$. Like earlier,it can be verified easily that this is a valid path in the residual equality network. This augmentation increases the surplus of agent $b_i$ from $0$ to $\frac{1}{k+1}$ and decreases the surplus of $b_{k+1}$ by $\frac{1}{k+1}$. See Figure~\ref{Transformation to balanced flow}c.
\end{enumerate}

The total flow augmented along any edge is at most $\frac{k}{k+1} < 1$. Since the backward edges of $N_p(f)$ had capacity $1$, they always remain in the residual network in all the flow augmentations. Thus the above augmentations are valid and there would be a total of $k$ such augmentations. After the augmentations, we have  $r_f(b_i) = \frac{1}{k+1}$  for all $i \in [k+1]$ and $r_f(b_i) = 0$ otherwise. This is a balanced flow.
\end{proof}
\emph{Notice that in the above lemma the surplus remains the same, as long as the price vector satisfies the necessary condition as mentioned in the observation, irrespective of the exact price of the goods.}

Now we give the reader an impression of how the price vector changes (See Figure~\ref{price change}). The algorithm in~\cite{Duan-Mehlhorn:Arrow-Debreu-Market} initializes all the prices to unity at the beginning. The equality network and the price vector satisfy all the conditions mentioned in Observation \ref{technical-DGM}, with $k=2$. $\Gamma(S) = \left\{g_1,g_2\right\}$. As a consequence of Observation \ref{technical-DGM}, we can say that the increase in prices of $\left\{g_1,g_2\right\}$ does not change the surplus vector as long as the equality network does not change. Thus the algorithm keeps increasing the prices until $p_1=p_2=U$ and the edges $(b_2,g_3)$ and $(b_3,g_4)$ appear. The new equality network and the updated price vector again satisfy the conditions in Observation \ref{technical-DGM}, with $k=4$ and $\Gamma(S) =\left\{g_1,g_2,g_3,g_4\right\}$. Like earlier the surplus vector does not change as long as the equality network does not change. The algorithm increases the prices of $\left\{g_1,g_2,g_3,g_4\right\}$ by a multiplicative factor of $U$ and this procedure continues. Henceforth we can make the following observation,

\begin{observation}
 \label{only-type-1}
  Throughout the iterations of the algorithm in~\cite{Duan-Mehlhorn:Arrow-Debreu-Market} on instance $I_n$, there are no type-3 agents. When the equality Network changes for the $i^{th}$ time, $S = \left\{b_1,b_2,...,b_{2i+1}\right\}$ and $\left\{b_1,b_2,...,b_{2i}\right\}$ are type-1 agents and $b_{2i+1}$ is a type-2 agent.   
\end{observation}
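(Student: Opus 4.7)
The plan is to prove the observation by induction on $i$, with the invariant that whenever the algorithm is in its $i$-th equality-network state (the configuration described by Observation~\ref{technical-DGM} with $k=2i$) the sets $S$, $\Gamma(S)$ and the type classification are exactly as stated and no type-3 agent is present. Under this invariant, Observation~\ref{technical-DGM} fixes the surpluses to $\tfrac{1}{2i+1}$ on $b_1,\dots,b_{2i+1}$ and $0$ elsewhere, so $S=\{b_1,\dots,b_{2i+1}\}$ and $\Gamma(S)=\{g_1,\dots,g_{2i}\}$ follow immediately from the definition of the high-surplus set.

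For the base case $i=1$ I would work with the initial prices $p\equiv 1$: Observation~\ref{technical-DGM} already gives the balanced flow, which on the edges incident to $b_1,b_2,b_3$ takes values $\tfrac{1}{3}$ on $(b_1,g_1)$ and $(b_1,g_2)$ and $\tfrac{2}{3}$ on $(b_2,g_1)$ and $(b_3,g_2)$. The type test for $b_1$ and $b_2$ then reduces to the strict inequality $1>\tfrac{2}{3}$, so both are type-1, whereas $b_3$ owns only $g_3\notin\Gamma(S)$ and so is type-2. Since the only owners of $g_1,g_2$ are $b_1,b_2\in S$, there is no type-3 agent.

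For the inductive step I would first establish invariance along the block of $\xmax$-iterations between the $i$-th and the $(i+1)$-th network change: since every price in $\Gamma(S)$ is scaled by the same factor, all bang-per-buck ratios between goods inside $\Gamma(S)$ are preserved, and no agent outside $S$ has an equality edge into $\Gamma(S)$, so the equality network is unchanged and Observation~\ref{technical-DGM} continues to describe the balanced flow. Because the utility matrix is supported on only two diagonals, the only candidate new equality edges are $(b_{2i},g_{2i+1})$ and $(b_{2i+1},g_{2i+2})$; the conditions $U/p_{2i-1}=1/p_{2i+1}$ and $U/p_{2i}=1/p_{2i+2}$ become equalities simultaneously, namely when the common multiplier applied to $\Gamma(S)$-prices reaches $U$. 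At that instant the configuration matches Observation~\ref{technical-DGM} with $k=2(i+1)$, and the base-case argument (now with the new balanced flow) yields $S=\{b_1,\dots,b_{2i+3}\}$, the first $2i+2$ agents type-1, $b_{2i+3}$ type-2, and still no type-3 agent because every good in the new $\Gamma(S)$ is owned by an agent in the new $S$.

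The main technical obstacle I anticipate is the type-1/type-2 classification inside $S$: Observation~\ref{technical-DGM} fixes the surpluses but not the individual flow values $f_{ij}$ that appear in the type test. The plan to overcome this is to extract those flows from the augmenting-path construction used in the proof of Observation~\ref{technical-DGM}: starting at $b_{k+1}$, which is the unique absorber of the surplus pushed back from $g_k$, one solves backwards along the bidiagonal paths and obtains $\sum_{g_\ell\in\Gamma(S)} f_{j\ell}=w_{jj}p_j-\tfrac{1}{k+1}$ for every $1\le j\le k$. This gives the strict inequality needed to certify $b_j$ as type-1 and, symmetrically, a positive $\Gamma(S)$-inflow for $b_{k+1}$, which (since $b_{k+1}$ owns nothing in $\Gamma(S)$) certifies it as type-2. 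The absence of type-3 agents, by contrast, is essentially free from the bidiagonal ownership pattern.
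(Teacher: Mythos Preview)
Your proposal is correct and follows the same inductive skeleton the paper lays out informally in the paragraph preceding the observation (the paper gives no separate proof): start from Observation~\ref{technical-DGM} with $k=2$, argue that each block of $\xmax$-iterations preserves the hypotheses of Observation~\ref{technical-DGM}, and that the next network change bumps $k$ to $k+2$ exactly when the common multiplier hits $U$.

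One remark on your ``main technical obstacle'': you do not need to trace the augmenting paths of Observation~\ref{technical-DGM} to pin down the type classification. For every $j\le k=2i$, all equality edges out of $b_j$ land in $\Gamma(S)=\{g_1,\dots,g_k\}$ (this is immediate from the edge list in Observation~\ref{technical-DGM}), so $\sum_{g_\ell\in\Gamma(S)} f_{j\ell}$ equals the \emph{total} outflow of $b_j$, which is budget minus surplus $=p_j-\tfrac{1}{k+1}$ for \emph{any} balanced flow. Comparing with the $\Gamma(S)$-ownership $w_{jj}p_j=p_j$ gives type-1 directly; the analogous computation for $b_{k+1}$ (which owns nothing in $\Gamma(S)$ but sends $f_{k+1,k}=1-\tfrac{1}{k+1}>0$ there) gives type-2. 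So the classification is determined by the surpluses alone, without any appeal to the specific balanced flow constructed in the proof of Observation~\ref{technical-DGM}.
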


%\clearpage
\begin{figure}[thb!]
 
 \begin{subfigure}{1.0\textwidth}
   % !TEX root = main.tex

\begin{tikzpicture}
[
roundnode/.style={circle, draw=green!60, fill=green!5, very thick, minimum size=7mm},
roundnode2/.style={circle, draw=red!60, fill=red!5, very thick, minimum size=7mm},
whitenodes/.style={circle, draw=white!60, fill=white!55, very thick, minimum size=5mm},
roundnode4/.style={circle, draw=red!60, fill=red!55, very thick, minimum size=5mm},
]
%Nodes

\node[whitenodes]     (w) {};
\node[roundnode]      (l6)       [right=160pt of w]{$b_6$};
\node[roundnode]      (l5)       [below=3pt of l6] {$b_5$};
\node[roundnode]      (l4)       [below=3pt of l5] {$b_4$};
\node[roundnode]      (l3)       [below=3pt of  l4]{$b_3$};
\node[roundnode]      (l2)       [below=3pt of  l3]{$b_2$};
\node[roundnode]      (l1)       [below=3pt of  l2]{$b_1$};

\node[roundnode2]      (r6)       [right=of l6] {$g_6$};
\node[roundnode2]      (r5)       [right=of l5] {$g_5$};
\node[roundnode2]      (r4)       [right=of l4] {$g_4$};
\node[roundnode2]      (r3)       [right=of l3] {$g_3$};
\node[roundnode4]      (r2)       [right=of l2] {$g_2$};
\node[roundnode4]      (r1)       [right=of l1] {$g_1$};

%Lines
\draw[blue,->] (l6) -- (r5);
\draw[blue,->] (l5) -- (r4);
\draw[blue,->] (l4) -- (r3);
\draw[blue,->] (l3) -- (r2);

\draw[blue,->] (l2) -- (r1);
%\draw[blue,->] (l2) -- (r2);
\draw[blue,->] (l1) -- (r2);
\draw[blue,->] (l1) -- (r1);

%Prices and Surpluses
\node[whitenodes]  (s1)  [left=of l1] {$\frac{1}{3}$};
\node[whitenodes]  (s2)  [left=of l2] {$\frac{1}{3}$};
\node[whitenodes]  (s3)  [left=of l3] {$\frac{1}{3}$};
\node[whitenodes]  (s4)  [left=of l4] {$0$};
\node[whitenodes]  (s5)  [left=of l5] {$0$};
\node[whitenodes]  (s6)  [left=of l6] {$0$};

\node[whitenodes]  (p1)  [right=of r6] {$1$};
\node[whitenodes]  (p2)  [right=of r5] {$1$};
\node[whitenodes]  (p3)  [right=of r4] {$1$};
\node[whitenodes]  (p4)  [right=of r3] {$1$};
\node[whitenodes]  (p5)  [right=of r2] {$1$};
\node[whitenodes]  (p6)  [right=of r1] {$1$};

\end{tikzpicture}    
   \caption{During the first iteration. } \label{fig:4a}
 \end{subfigure}
 \hspace*{\fill}
 
 \begin{subfigure}{1.0\textwidth}
   % !TEX root = main.tex

\begin{tikzpicture}
[
roundnode/.style={circle, draw=green!60, fill=green!5, very thick, minimum size=7mm},
roundnode2/.style={circle, draw=red!60, fill=red!5, very thick, minimum size=7mm},
roundnode4/.style={circle, draw=red!60, fill=red!55, very thick, minimum size=5mm},
whitenodes/.style={circle, draw=white!60, fill=white!55, very thick, minimum size=5mm},
]
%Nodes

\node[whitenodes]     (w) {};
\node[roundnode]      (l6)       [right=160pt of w]{$b_6$};
\node[roundnode]      (l5)       [below=3pt of l6] {$b_5$};
\node[roundnode]      (l4)       [below=3pt of l5] {$b_4$};
\node[roundnode]      (l3)       [below=3pt of  l4]{$b_3$};
\node[roundnode]      (l2)       [below=3pt of  l3]{$b_2$};
\node[roundnode]      (l1)       [below=3pt of  l2]{$b_1$};

\node[roundnode2]      (r6)       [right=of l6] {$g_6$};
\node[roundnode2]      (r5)       [right=of l5] {$g_5$};
\node[roundnode4]      (r4)       [right=of l4] {$g_4$};
\node[roundnode4]      (r3)       [right=of l3] {$g_3$};
\node[roundnode4]      (r2)       [right=of l2] {$g_2$};
\node[roundnode4]      (r1)       [right=of l1] {$g_1$};

%Lines
\draw[blue,->] (l6) -- (r5);
\draw[blue,->] (l5) -- (r4);
\draw[blue,->] (l4) -- (r3);
\draw[blue,->] (l3) -- (r2);

\draw[blue,->] (l2) -- (r1);
%\draw[blue,->] (l2) -- (r2);
\draw[blue,->] (l1) -- (r2);
\draw[blue,->] (l1) -- (r1);

\draw[blue,->] (l2) -- (r3);
%\draw[blue,->] (l3) -- (r3);
\draw[blue,->] (l3) -- (r4);

%Prices and Surpluses
\node[whitenodes]  (s1)  [left=of l1] {$\frac{1}{5}$};
\node[whitenodes]  (s2)  [left=of l2] {$\frac{1}{5}$};
\node[whitenodes]  (s3)  [left=of l3] {$\frac{1}{5}$};
\node[whitenodes]  (s4)  [left=of l4] {$\frac{1}{5}$};
\node[whitenodes]  (s5)  [left=of l5] {$\frac{1}{5}$};
\node[whitenodes]  (s6)  [left=of l6] {$0$};

\node[whitenodes]  (p1)  [right=of r6] {$1$};
\node[whitenodes]  (p2)  [right=of r5] {$1$};
\node[whitenodes]  (p3)  [right=of r4] {$1$};
\node[whitenodes]  (p4)  [right=of r3] {$1$};
\node[whitenodes]  (p5)  [right=of r2] {$U$};
\node[whitenodes]  (p6)  [right=of r1] {$U$};

\end{tikzpicture}    
   \caption{The equality network changes for the first time. Edges $(b_2,g_3)$ and $(b_3,g_4)$ appear.} \label{fig:4b}
 \end{subfigure}
 \hspace*{\fill}
 
 \begin{subfigure}{1.0\textwidth}
   % !TEX root = main.tex

\begin{tikzpicture}
[
roundnode/.style={circle, draw=green!60, fill=green!5, very thick, minimum size=7mm},
roundnode2/.style={circle, draw=red!60, fill=red!5, very thick, minimum size=7mm},
whitenodes/.style={circle, draw=white!60, fill=white!55, very thick, minimum size=5mm},
]
%Nodes

\node[whitenodes]     (w) {};
\node[roundnode]      (l6)       [right=160pt of w] {$b_6$};
\node[roundnode]      (l5)       [below=3pt of l6] {$b_5$};
\node[roundnode]      (l4)       [below=3pt of l5] {$b_4$};
\node[roundnode]      (l3)       [below=3pt of  l4]{$b_3$};
\node[roundnode]      (l2)       [below=3pt of  l3]{$b_2$};
\node[roundnode]      (l1)       [below=3pt of  l2]{$b_1$};

\node[roundnode2]      (r6)       [right=of l6] {$g_6$};
\node[roundnode2]      (r5)       [right=of l5] {$g_5$};
\node[roundnode2]      (r4)       [right=of l4] {$g_4$};
\node[roundnode2]      (r3)       [right=of l3] {$g_3$};
\node[roundnode2]      (r2)       [right=of l2] {$g_2$};
\node[roundnode2]      (r1)       [right=of l1] {$g_1$};

%Lines
\draw[blue,->] (l6) -- (r5);
\draw[blue,->] (l5) -- (r4);
\draw[blue,->] (l4) -- (r3);
\draw[blue,->] (l3) -- (r2);

\draw[blue,->] (l2) -- (r1);
%\draw[blue,->] (l2) -- (r2);
\draw[blue,->] (l1) -- (r2);
\draw[blue,->] (l1) -- (r1);

\draw[blue,->] (l2) -- (r3);
%\draw[blue,->] (l3) -- (r3);
\draw[blue,->] (l3) -- (r4);

%\draw[blue,->] (l5) -- (r5);
\draw[blue,->] (l5) -- (r6);

%\draw[blue,->] (l4) -- (r4);
\draw[blue,->] (l4) -- (r5);

%Prices and Surpluses
\node[whitenodes]  (s1)  [left=of l1] {$0$};
\node[whitenodes]  (s2)  [left=of l2] {$0$};
\node[whitenodes]  (s3)  [left=of l3] {$0$};
\node[whitenodes]  (s4)  [left=of l4] {$0$};
\node[whitenodes]  (s5)  [left=of l5] {$0$};
\node[whitenodes]  (s6)  [left=of l6] {$0$};

\node[whitenodes]  (p1)  [right=of r6] {$1$};
\node[whitenodes]  (p2)  [right=of r5] {$1$};
\node[whitenodes]  (p3)  [right=of r4] {$U$};
\node[whitenodes]  (p4)  [right=of r3] {$U$};
\node[whitenodes]  (p5)  [right=of r2] {$U^2$};
\node[whitenodes]  (p6)  [right=of r1] {$U^2$};

\end{tikzpicture}    
   \caption{The equality network changes for the third time. Edges $(b_4,g_5)$ and $(b_5,g_6)$ appear.} \label{fig:4c}
 \end{subfigure}
 \hspace*{\fill}
 \caption{Running the $\DM$ algorithm on the instance. The numbers on the left represent surplus of the agents and that on the right the prices of the goods. The red nodes on the right represent the set $\Gamma(S)$ in every iteration.}\label{price change}
\end{figure}
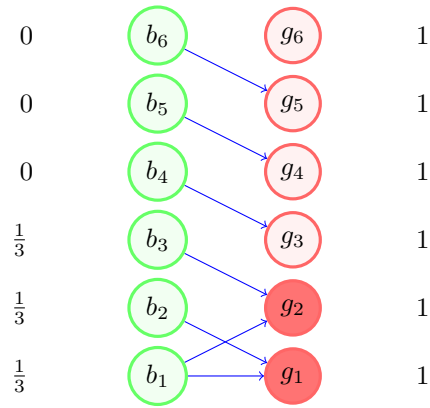
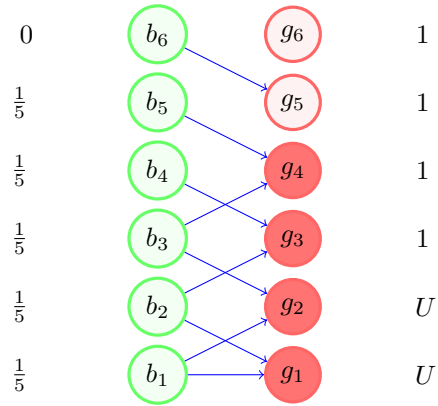
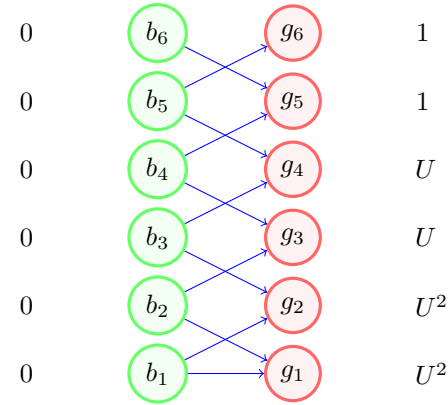

The algorithm in~\cite{DGM:Arrow-Debreu} achieves a factor $n$ improvement on the number of iterations. This is achieved by a careful selection of the set $S$ (which is identical to the one presented in this paper earlier) and a modified way of updating the prices as follows:

\[ x_{max} = \begin{cases} 
      1+ \frac{1}{Rn^3} &  \text{if there are type-3 agents}\\
      1+ \frac{1}{Rkn^2} & \text{otherwise, where $k$ = number of type-1 agents}.
   \end{cases} \]

Now we show that the analysis of the number of iterations of~\cite{DGM:Arrow-Debreu} is tight upto a factor of $\log(n)$.

\begin{theorem}
   The algorithm in~\cite{DGM:Arrow-Debreu} requires $\Omega(n^4 \log U)$ iterations on instance $I_n$. 
\end{theorem}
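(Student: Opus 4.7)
The plan is to lower bound only the number of $\xmax$-iterations executed by the algorithm of~\cite{DGM:Arrow-Debreu} on $I_n$, by showing that the qualitative behavior described in Observation~\ref{only-type-1} carries over from the Duan--Mehlhorn algorithm to the DGM algorithm, and then exploiting the slower per-phase update rate $1+\tfrac{1}{Rkn^2}$ forced by the absence of type-3 agents.

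First I would argue that, on instance $I_n$, the sequence of equality networks, the sets $S$, and the sets $\Gamma(S)$ visited by the DGM algorithm coincide with those described in Observation~\ref{technical-DGM} and Observation~\ref{only-type-1}. The key structural facts are: every good $g_j$ is owned by $b_j$ alone; whenever $\Gamma(S)=\{g_1,\ldots,g_{2i}\}$ we have $S=\{b_1,\ldots,b_{2i+1}\}$, so every owner of a good in $\Gamma(S)$ lies in $S$; hence in every iteration there are \emph{no} type-3 agents. Moreover, by the symmetry of the instance the balanced flow gives all agents of $S$ the same surplus, so the DGM rule for selecting $S$ (agents with strictly highest surplus plus, possibly, zero-outflow agents not owning goods in $\Gamma(S)$) returns the same set as the simpler DM rule. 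Consequently the DGM algorithm performs the same phases as the DM algorithm on $I_n$: a phase-$i$ in which the goods $\{g_1,\ldots,g_{2i}\}$ are multiplied by a factor $U$ (after which two new equality edges appear and the phase ends), and during phase $i$ the number of type-1 agents is exactly $k=2i$.

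Next I would bound the number of $\xmax$-iterations in phase $i$. Since there are no type-3 agents, the DGM rule sets $\xmax=1+\tfrac{1}{Rkn^2}=1+\tfrac{1}{2Rin^2}$. Between balancing iterations the total multiplicative increase of $\min_{g_j\in\Gamma(S)}p_j$ is at most the product of the $\xmax$ factors applied; for phase $i$ this minimum price must grow from $1$ (or from its value at the beginning of the phase) by a factor of $U$ before the next equality edge appears. Hence phase $i$ contains at least
\[
   \log_{1+\frac{1}{2Rin^2}} U \;=\; \Omega(i\,n^2 \log U)
\]
$\xmax$-iterations. Summing over the $\Theta(n)$ phases $i=1,\ldots,n/2-1$ yields
\[
   \sum_{i=1}^{n/2-1} \Omega(i\,n^2\log U) \;=\; \Omega(n^4 \log U),
\]
which is the desired lower bound (any balancing iterations only add to the count).

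The main obstacle I expect is the first step: rigorously justifying that the DGM selection of $S$ and the DGM balanced flow really do produce exactly the evolution of Observation~\ref{technical-DGM} on $I_n$, in spite of the subtle differences in how $S$ is defined. Once this is in place, the arithmetic counting of phases and the factor-$k$ slowdown is routine. A minor secondary point is checking that $\xeq,x_{23},x_{24},x_{13},x_2$ never cut the phase short (they only trigger a balancing iteration, and balancing iterations cannot reduce prices and therefore cannot shorten the $\xmax$-count needed to raise $\min_{g_j\in\Gamma(S)} p_j$ by a factor $U$), but this follows from the monotonicity of prices established earlier in the paper.
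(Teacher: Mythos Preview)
Your proposal is correct and follows essentially the same approach as the paper: first establish that on $I_n$ the DGM algorithm produces the same sequence of equality networks and the same sets $S$ as the DM algorithm (in particular, there are never type-3 agents and the surpluses are either $1/(k+1)$ or $0$, so the DGM selection of $S$ coincides with the simpler one), and then count $\xmax$-iterations using $\xmax = 1 + \tfrac{1}{Rkn^2}$.  The only difference is in the final bookkeeping: the paper restricts attention to the late phases where $k=\Omega(n)$ and tracks the price of $g_1$, which must still grow by $U^{\Omega(n)}$, while you sum $\Omega(in^2\log U)$ over all phases $i$; both yield $\Omega(n^4\log U)$ and neither is materially simpler than the other.
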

\begin{proof}
   Note that the equality network changes exactly like it did while running the algorithm in~\cite{Duan-Mehlhorn:Arrow-Debreu-Market}. At every stage of the iteration we can partition the agents $B$ into $B_1 = \left\{b_i|i \in [k+1]\right\}$ and $B_2 = B \backslash B_1$. With all agents in $B_1$ having the same surplus equal to ${1}/(k+1)$ and agents in $B_2$ having zero surplus. Thus there exists no agent $b_j$, such that $r_f(b_l) > r_f(b_j) \geq r_f(b_l)/(1+1/n)$, when $r_f(b_l) \in S$. This proves that the set $S$ evolves exactly the same way throughout the iterations of the   algorithm in~\cite{DGM:Arrow-Debreu} as it evolves throughout the iterations of the algorithm in~\cite{Duan-Mehlhorn:Arrow-Debreu-Market}. 
   
   As stated before, throughout the algorithm there are no \emph{type-3} agents. Hence, always we have $x_{max} = 1 + {1}/(R kn^2)$ where $k$ equals the number of \emph{type-1} agents in the equality network. After the equality network has changed $\mathcal{O}(n)$ times, the set $\Gamma(S)$ will have $\Omega(n)$ \emph{type-1} agents (Consequence of Observation \ref{only-type-1}) and then $\xmax = (1 + \Omega(1/n^3))$ in every iteration henceforth. These iterations must increase some prices further by a factor $\Omega(U^{\Omega(n)})$. This requires 
    $\Omega(\log_{(1+\Omega(\frac{1}{n^3}))}(U)^{\Omega(n)}) = \Omega(n^4\cdot\log(U))$ further iterations. 
\end{proof}
We next construct an instance $I'_n$ that separates the two algorithms in~\cite{Duan-Mehlhorn:Arrow-Debreu-Market,DGM:Arrow-Debreu}. 

\section{The instance separating the algorithms in~\cite{Duan-Mehlhorn:Arrow-Debreu-Market} and~\cite{DGM:Arrow-Debreu}}
 \label{instance_separation}
In our previous example we observe that the price of one of the goods ($g_1$) always increased in every iteration. The question is are there instances when
\begin{enumerate}
 \item There are $\Omega(n)$ goods with equilibrium prices in $\Omega(U^{\Omega(n)})$,
 \item Prices of each of the goods increase in different iterations.
\end{enumerate}
Such instances could make the algorithm in~\cite{Duan-Mehlhorn:Arrow-Debreu-Market} realize $\Omega(n^5\log(U))$ iterations. With this intention in mind we first prove and claim certain useful facts and observations.

\begin{fact}
  The number of primes less than or equal to an integer $x$ is $\Theta(\frac{x}{\ln(x)})$.
\end{fact}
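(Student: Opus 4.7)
The plan is to prove matching upper and lower bounds of order $x/\ln x$ by Chebyshev's elementary argument, which is sufficient since the statement only requires $\Theta$-bounds and not the asymptotic equivalence of the full Prime Number Theorem. The entire argument revolves around sandwiching the central binomial coefficient $\binom{2n}{n}$ between a product of primes and a product of prime powers.

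For the upper bound, I would observe that every prime $p$ with $n < p \le 2n$ appears in the numerator $(2n)!$ of $\binom{2n}{n}$ but not in the denominator $(n!)^2$, so $\prod_{n < p \le 2n} p$ divides $\binom{2n}{n}$, and thus $\prod_{n < p \le 2n} p \le \binom{2n}{n} \le 4^n$. Taking logarithms gives $(\pi(2n) - \pi(n)) \ln n \le 2n \ln 2$. Applying this telescoping bound to $n = x, x/2, x/4, \ldots, 1$ and summing yields $\pi(x) = \mathcal{O}(x/\ln x)$.

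For the lower bound I would use the opposite inequality $\binom{2n}{n} \ge 4^n/(2n+1)$ together with Legendre's formula: the exponent of a prime $p$ in $\binom{2n}{n}$ equals $\sum_{k\ge 1}\bigl(\lfloor 2n/p^k\rfloor - 2\lfloor n/p^k\rfloor\bigr)$, each summand is in $\{0,1\}$, and the summand vanishes once $p^k > 2n$. Hence $\binom{2n}{n}$ divides $\prod_{p \le 2n} p^{\lfloor \log_p(2n) \rfloor}$, and each such prime-power factor is at most $2n$. Therefore $4^n/(2n+1) \le (2n)^{\pi(2n)}$, which after taking logarithms gives $\pi(2n) \ge (2n \ln 2 - \ln(2n+1))/\ln(2n) = \Omega(n/\ln n)$.

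I expect no serious obstacle; the one point that needs a bit of care is the contribution of higher prime powers $p^k$ with $k \ge 2$ in the lower-bound argument, but these are few (at most $\sqrt{2n}$ primes with $k \ge 2$ can occur, each contributing a factor at most $\log_2(2n)$ to the exponent) and are absorbed into the main term without affecting the $\Theta(x/\ln x)$ conclusion.
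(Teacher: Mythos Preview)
Your argument is the standard Chebyshev estimate and is correct as outlined; both the upper bound via $\prod_{n<p\le 2n}p\mid\binom{2n}{n}\le 4^n$ and the lower bound via Legendre's formula and $\binom{2n}{n}\le (2n)^{\pi(2n)}$ go through with only routine bookkeeping (the telescoping for the upper bound does require a small amount of care to avoid losing a $\log$ factor, but this is standard).

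However, there is nothing to compare against: the paper does not prove this statement at all. It is labeled a \emph{Fact} and simply quoted as the Prime Number Theorem (in its weak $\Theta$-form), used only to count the number of blocks in the hard instance $I'_n$. So your proposal is not a different route to the paper's proof; it is a full proof where the paper has none. For the purposes of matching the paper, a one-line citation of Chebyshev's theorem or the Prime Number Theorem would suffice.
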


\begin{lemma}
 \label{ratioboundDM}
  If $k_1$ and $k_2$ are prime integers and $k_1 \neq k_2$, and $\alpha_1 < k_1, \alpha_2 < k_2 $ are positive integers, then,
  \begin{itemize}
      \item $\frac{\alpha_1}{k_1} \neq \frac{\alpha_2}{k_2}$.
      \item $\frac{\alpha_1}{k_1}/\frac{\alpha_2}{k_2} \in [(1 + \frac{1}{k_1\cdot k_2})^{-1},(1 + \frac{1}{k_1\cdot k_2})]$.
  \end{itemize}
\end{lemma}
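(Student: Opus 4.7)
The plan is to address both bullets, but I must flag that the second bullet as literally stated---the ratio lies \emph{in} the narrow interval $[(1+\tfrac{1}{k_1 k_2})^{-1},\,1+\tfrac{1}{k_1 k_2}]$ around $1$---fails for essentially every concrete input: e.g.\ $k_1=2$, $k_2=3$, $\alpha_1=\alpha_2=1$ gives $\rho = (1/2)/(1/3) = 3/2$, which is far outside $[6/7,\,7/6]$. In the $\Omega(n^5 \log U)$ lower-bound argument that the lemma feeds into, what is actually needed is that prices built from distinct prime denominators are multiplicatively well-separated, so I proceed on the reading that the intended statement is the complementary separation $\rho \notin [(1+\tfrac{1}{k_1 k_2})^{-1},\,1+\tfrac{1}{k_1 k_2}]$.

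For bullet 1, I would observe that $\alpha_1/k_1 = \alpha_2/k_2$ would give $\alpha_1 k_2 = \alpha_2 k_1$; since $\gcd(k_1,k_2)=1$ (both prime and distinct), $k_1$ must divide $\alpha_1$, contradicting $0 < \alpha_1 < k_1$. For the (intended) bullet 2, I would write $\rho = \alpha_1 k_2 / (\alpha_2 k_1)$ as a ratio of two positive integers that are unequal by bullet 1, hence differ by at least $1$ in absolute value. Using $\alpha_2 k_1 \le (k_2 - 1)k_1 < k_1 k_2$ and dividing yields the strict estimate
\[
  |\rho - 1| \;=\; \frac{|\alpha_1 k_2 - \alpha_2 k_1|}{\alpha_2 k_1} \;\ge\; \frac{1}{\alpha_2 k_1} \;>\; \frac{1}{k_1 k_2}.
\]
A one-line comparison, $(1+\tfrac{1}{k_1 k_2})^{-1} = 1 - \tfrac{1}{k_1 k_2 + 1} > 1 - \tfrac{1}{k_1 k_2}$, then shows that whichever side of $1$ the ratio $\rho$ lies on, it lies strictly outside the interval $[(1+\tfrac{1}{k_1 k_2})^{-1},\,1+\tfrac{1}{k_1 k_2}]$: above, $\rho > 1 + \tfrac{1}{k_1 k_2}$ directly; below, $\rho < 1 - \tfrac{1}{k_1 k_2} < (1+\tfrac{1}{k_1 k_2})^{-1}$.

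The arithmetic is elementary; the only real obstacle is reconciling the literal $\in$ in the statement with anything provable. I see no way to salvage the stated containment without importing an unstated hypothesis (for instance, a Farey-neighbour condition on the two fractions forcing $|\alpha_1 k_2 - \alpha_2 k_1| = 1$, though even that would not by itself place $\rho$ \emph{inside} a $(1\pm 1/(k_1 k_2))$-window). I will therefore present the separation version above, which is exactly what is invoked to lower-bound the number of iterations in the subsequent construction.
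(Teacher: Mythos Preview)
Your diagnosis is correct: the second bullet as printed is a typo, and the paper's own proof in fact establishes exactly the separation you prove, namely that (assuming w.l.o.g.\ $\alpha_1/k_1 > \alpha_2/k_2$) the difference $(\alpha_1 k_2 - \alpha_2 k_1)/(k_1 k_2) \ge 1/(k_1 k_2) \ge (1/(k_1 k_2))\cdot(\alpha_2/k_2)$, whence the ratio is at least $1 + 1/(k_1 k_2)$. Your argument for both bullets is essentially the same as the paper's (same integer-difference trick, same use of $\alpha_2 < k_2$), with only cosmetic differences in how the inequality is packaged; no gap.
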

   
\begin{proof}
   Note that since both $k_1$ and $k_2$ are primes, the fraction ${k_1}/{k_2}$ cannot be reduced to lower terms.  Let ${\beta_1}/{\beta_2}$ be the reduced form of the fraction ${\alpha_1}/{\alpha_2}$. Now $\beta_1 < \alpha_1 < k_1$ and $\beta_2 < \alpha_2 < k_2$. Thus    ${\beta_1}/{\beta_2} \neq {k_1}/{k_2}$. Since reduced forms of the fractions are not equal, ${\alpha_1}/{\alpha_2} \neq {k_1}/{k_2}$.

   Since ${\alpha_1}/{k_1} \neq {\alpha_2}/{k_2}$, let us assume without loss of generality that ${\alpha_1}/{k_1} > {\alpha_2}/{k_2}$. In that case, ${\alpha_1}/{k_1} - {\alpha_2}/{k_2} = {(\alpha_1 k_2 - \alpha_2 k_1)}/{k_1 k_2} \geq {1}/{k_1 k_2} \geq {1}/{k_1 k_2} \cdot {\alpha_2}/{k_2}$ and hence ${\alpha_1}/{k_1} \geq (1 + {1}/{k_1 k_2}) \cdot {\alpha_2}/{k_2}$. Similarly if we assume ${\alpha_1}/{k_1} <{\alpha_2}/{k_2}$ we can conclude ${\alpha_2}/{k_2} \geq (1 + {1}/{k_1k_2})\cdot {\alpha_1}/{k_1})$. 
\end{proof}   
We will see ahead that these fractions would be the values of the surpluses of a set of agents and this ratio bound will help us identify the set of agents with high surplus (We refer to the definition of agents with high surplus used in~\cite{Duan-Mehlhorn:Arrow-Debreu-Market}).

\begin{observation}
 \label{technicalDM}
Let $p_i$ denote the budget of $b_i$ and price of $g_i$. Also $p_i = 1$ for all $i \in [n] \setminus [m]$ (for some $m<n$). Now, let us consider an equality Network $N_p$ with vertices $(s \cup B \cup G \cup t)$, where $B = \left\{b_1,b_2,...,b_n\right\}$ and $G = \left\{g_1,g_2,....,g_n\right\}$ with edges as follows,
\begin{itemize}
 \item $(s,b_i)$ with capacity $p_i \quad \forall i \in [n]$.
 \item $(g_i,t)$ with capacity $p_i\quad \forall i \in [n]$.
 \item $(b_i,g_j)$with capacity $ \infty \quad \forall i \in [n] \text{ and } j \in [m]$.
\end{itemize}
The surplus of the agent $b_i$ w.r.t. a balanced flow $f$ in $N_p$ is 
$r_f(b_i) = {(n-m)}/{n}$.
\end{observation}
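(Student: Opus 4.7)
The plan is to exploit the complete symmetry of the network: since every agent can reach the exact same set of goods $\{g_1,\ldots,g_m\}$ through infinite capacity edges, the balanced flow should equalize surpluses across all agents, and the common surplus value is forced by a global accounting argument.

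First I would determine the value of a maximum flow in $N_p$. The goods $g_{m+1},\ldots,g_n$ have no incoming edges from any agent, hence no flow can reach them and the cut separating $\{s\}\cup B\cup \{g_1,\ldots,g_m\}$ from $\{g_{m+1},\ldots,g_n\}\cup\{t\}$ has capacity exactly $\sum_{j=1}^m p_j$. Conversely, since the bipartite graph from $B$ to $\{g_1,\ldots,g_m\}$ is complete with infinite capacity edges, this value is clearly achievable. Thus the max flow equals $\sum_{j=1}^m p_j$ and the total surplus of agents in any max flow (in particular the balanced flow $f$) is
\[ \sum_{i=1}^n p_i - \sum_{j=1}^m p_j = \sum_{j=m+1}^n p_j = n - m. \]

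Next I would exhibit a max flow whose surplus vector is $\bigl((n-m)/n,\ldots,(n-m)/n\bigr)$. Since in the algorithm the prices satisfy $p_i \ge 1 > (n-m)/n$, I can define $f_{si} = p_i - (n-m)/n \ge 0$ for every $i\in[n]$, $f_{jt}=p_j$ for $j\in[m]$, and distribute the agents' outflow proportionally to good prices, e.g.\ $f_{ij} = (p_i-(n-m)/n)\cdot p_j/\sum_{k=1}^m p_k$ for $j\in[m]$. One checks that flow conservation holds at every agent and every good in $[m]$, that all capacities are respected (the edges $(b_i,g_j)$ have capacity $\infty$), and that the total outflow into $\{g_1,\ldots,g_m\}$ saturates these goods, so $f$ is indeed a max flow.

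Finally I would invoke the defining property of a balanced flow: among all max flows it minimizes $\lVert r_f\rVert_2^2$. The total surplus being fixed at $n-m$ over $n$ non-negative coordinates, $\lVert\cdot\rVert_2^2$ is minimized only when every coordinate equals $(n-m)/n$, and the flow constructed above attains this minimum. Hence every balanced flow has $r_f(b_i)=(n-m)/n$ for all $i\in[n]$. The only mild subtlety is verifying that $f_{si}\ge 0$, which is where the hypothesis $p_i\ge 1$ (valid throughout the algorithm since prices start at one and only increase) enters; without it the minimizer would saturate some low-budget agents and the uniform value would fail.
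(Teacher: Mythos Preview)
Your proof is correct and follows the same approach as the paper: compute the total surplus in any max flow as $n-m$, then argue that the uniform surplus vector $((n-m)/n,\ldots,(n-m)/n)$ is realized by some max flow and hence is the balanced one. The paper's own argument is terser (it simply asserts ``clearly there is a flow that realizes this sum and has every surplus equal to $(n-m)/n$''), while you supply the explicit construction and the $p_i\ge 1$ check to guarantee nonnegativity of $f_{si}$; these details are welcome but do not change the underlying idea.
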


\begin{proof}
 Note that the maximum flow is $\sum_{i \in [n]}^{}p_i$. Therefore for every balanced flow $f$, the sum of the surpluses of the agents $\sum_{i \in [n]}^{} r_f(b_i)$ will be equal to $\sum_{i \in [n]} p_i -\sum_{i \in [m]} p_i = n-m$. Clearly there is a flow that realizes this sum and has every surplus equal to $(n-m)/{n}$.  
\end{proof}
What is crucial in the above observation is that this claim is independent of the actual values of the prices of the goods in $\set{g_j}{j\in[m]}$. Before we construct our instance, we wish to give the reader an impression of how the prices rise in the same. Consider the utility matrix in Figure~\ref{Example 4}. So note that there are different green nodes (that correspond to different agents) with the same label. \emph{So an agent is uniquely identified by the unordered pair of the block he belongs to and the label of the node.}

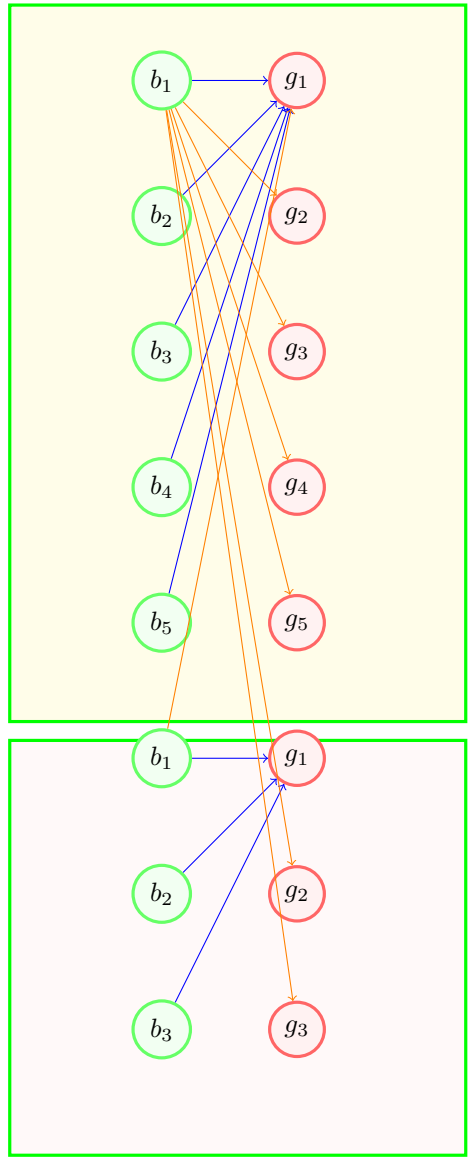
\begin{figure}[t!]
\centering
\begin{tikzpicture}
[
roundnode/.style={circle, draw=green!60, fill=green!5, very thick, minimum size=7mm},
roundnode2/.style={circle, draw=red!60, fill=red!5, very thick, minimum size=7mm},
]

%blocks

\draw[green, fill=yellow!10, very thick] (-2,1) rectangle (4,-8.5) ;
\draw[green, fill= pink!10, very thick] (-2,-8.75) rectangle (4,-14.25);

%Nodes

%buyers_block1
\node[roundnode]      (l1)       {$b_1$};
\node[roundnode]      (l2)       [below=of  l1]{$b_2$};
\node[roundnode]      (l3)       [below=of  l2]{$b_3$};
\node[roundnode]      (l4)      [below=of  l3]{$b_4$};
\node[roundnode]      (l5)      [below=of l4]{$b_5$};

%goods_block1
\node[roundnode2]      (r1)       [right=of l1]{$g_1$};
\node[roundnode2]      (r2)       [right=of l2] {$g_2$};
\node[roundnode2]      (r3)       [right=of l3] {$g_3$};
\node[roundnode2]      (r4)       [right=of l4] {$g_4$};
\node[roundnode2]      (r5)       [right=of l5]{$g_5$};

%buyers_block2

\node[roundnode]      (l'1)       [below=of  l5]{$b_1$};
\node[roundnode]      (l'2)       [below=of  l'1]{$b_2$};
\node[roundnode]      (l'3)       [below=of  l'2]{$b_3$};

%goods_block2
\node[roundnode2]      (r'1)      [right=of l'1] {$g_1$};
\node[roundnode2]      (r'2)      [right=of  l'2]{$g_2$};
\node[roundnode2]      (r'3)      [right=of l'3]{$g_3$};

%Lines

\draw[blue,->] (l1) -- (r1);
\draw[blue,->] (l2) -- (r1);
\draw[blue,->] (l3) -- (r1);
\draw[blue,->] (l4) -- (r1);
\draw[blue,->] (l5) -- (r1);

\draw[blue,->] (l'1) -- (r'1);
\draw[blue,->] (l'2) -- (r'1);
\draw[blue,->] (l'3) -- (r'1);

\draw[orange,->] (l1) -- (r2);
\draw[orange,->] (l1) -- (r3);
\draw[orange,->] (l1) -- (r4);
\draw[orange,->] (l1) -- (r5);
\draw[orange,->] (l1) -- (r'2);
\draw[orange,->] (l1) -- (r'3);

\draw[orange,->] (l'1)--(r1);

\end{tikzpicture}    
\caption{Utility graph where the agents are the nodes on the left and the goods are on the right. The agents have been partitioned into two disjoint blocks - $\mathit{block}_1$ (yellow block) and $\mathit{block}_2$ (pink block). In each block, agent $b_i$ owns good $g_i$. The blue lines from $b_i$ to $g_j$, indicates utility of $U$, and the orange ones a utility of $1$. }\label{Example 4}
\end{figure}
(See Figure 5) Initially the prices of all the goods are set to 1. The only edges present in the equality network are the blue edges of the utilty matrix shown in Figure 5. After computing the balanced flow, we can partition the agents into two blocks - $\mathit{block}_1$ and $\mathit{block}_2$, as we see in the figure itself. The agents in $\mathit{block}_1$ have a surplus of $\frac{4}{5}$ and the ones in $\mathit{block}_2$ have $\frac{2}{3}$. Thus ratio of their surpluses equals $\frac{12}{10} = 1 + \frac{1}{5} > 1 + \frac{1}{n}$, where $n$ here is $8$.Thus during the first iteration $S = \mathit{block}_1$ ( we refer to the $S$ in~\cite{Duan-Mehlhorn:Arrow-Debreu-Market}). We increase the price of good $g_1$ in $\mathit{block}_1$ by $1 + \frac{1}{Rn^3}$. However from  Observation 3.3, we can conclude that the surplus vector still remains the same for all nodes in $\mathit{block}_1$ ( = $\frac{4}{5}$ for each). Thus the algorithm in~\cite{Duan-Mehlhorn:Arrow-Debreu-Market} increases the price of good $g_1$ in $\mathit{block}_1$ until the equality network changes (when its price = $U$). Thereafter agent $b_1$ of $\mathit{block}_1$ will be interested to invest in all the goods except $g_1$ of $\mathit{block}_2$ . In the new equality network, the surplus of all agents in $\mathit{block}_1$ is zero and $\mathit{block}_2$ is the new set of agents with high surplus. Like earlier, the surplus of the agents in $\mathit{block}_2$ does not change until, the equality network changes (when the price of good $g_1$ in $\mathit{block}_2$ is $U^2$). The important note is that the price rise in both the blocks occur in disjoint iterations of the algorithm.
%\clearpage
\begin{figure}[t!]
 
 \begin{subfigure}{1.0\textwidth}
   \begin{tikzpicture}
%Styles
[
roundnode/.style={circle, draw=green!60, fill=green!5, very thick, minimum size=3mm},
roundnode3/.style={circle, draw=green!60, fill=green!5, very thick, minimum size=3mm},
roundnode2/.style={circle, draw=red!60, fill=red!5, very thick, minimum size=3mm},
roundnode23/.style={circle, draw=red!60, fill=red!5, very thick, minimum size=3mm},
roundnode4/.style={circle, draw=red!60, fill=red!55, very thick, minimum size=5mm},
roundnode6/.style={circle, draw=red!60, fill=red!55, very thick, minimum size=5mm},
whitenodes/.style={circle, draw=white!60, fill=white!55, very thick, minimum size=5mm},
lineo/.style={orange,->},
lineb/.style={blue,->,},
tlineb/.style={blue,->,ultra thick},
tlineo/.style={orange,->,ultra thick},
]

%UTILITY MATRIX
%blocks

%\draw[green, fill=yellow!10, very thick] (3,1) rectangle (6,-3.4) ;
%\draw[green, fill= pink!10, very thick] (-3,-3.55) rectangle (3,-6.5);

%Nodes

%ITERATION 1
%buyers_block1
\node[whitenodes]     (w) {};
\node[roundnode3]      (b1)       [right=160pt of w] {$b_1$};
\node[roundnode3]      (b2)       [below=3pt of  b1]{$b_2$};
\node[roundnode3]      (b3)       [below=3pt of  b2]{$b_3$};
\node[roundnode3]      (b4)      [below=3pt of  b3]{$b_4$};

%goods_block1
\node[roundnode23]      (g1)       [right=of b1]{$g_1$};
\node[roundnode23]      (g2)       [right=of b2] {$g_2$};
\node[roundnode23]      (g3)       [right=of b3] {$g_3$};
\node[roundnode23]      (g4)      [right=of b4] {$g_4$};

%buyers_block2

\node[roundnode3]      (b'1)       [below=3pt of  b4]{$b_1$};
\node[roundnode3]      (b'2)       [below=3pt of  b'1]{$b_2$};
\node[roundnode3]      (b'3)       [below=3pt of  b'2]{$b_3$};

%goods_block2
\node[roundnode23]      (g'1)      [right=of b'1] {$g_1$};
\node[roundnode23]      (g'2)       [right=of  b'2]{$g_2$};
\node[roundnode23]      (g'3)     [right=of b'3]{$g_3$};

%Lines

\draw[tlineb] (b1) -- (g1);
\draw[tlineb] (b2) -- (g1);
\draw[tlineb] (b3) -- (g1);
\draw[tlineb] (b4) -- (g1);

\draw[tlineb] (b'1) -- (g'1);
\draw[tlineb] (b'2) -- (g'1);
\draw[tlineb] (b'3) -- (g'1);

%PRICES AND SURPLUSES   
\node[whitenodes]   (s3) [left=6pt of b1] {$\frac{3}{4}$};
\node[whitenodes]   (s2) [left=6pt of b2] {$\frac{3}{4}$};
\node[whitenodes]   (s1) [left=6pt of b3] {$\frac{3}{4}$};
\node[whitenodes]   (s2') [left=6pt of b4] {$\frac{3}{4}$};

\node[whitenodes]   (s'2)  [left=6pt of b'1] {$\frac{2}{3}$};
\node[whitenodes]   (s'1)  [left=6pt of b'2] {$\frac{2}{3}$};
\node[whitenodes]   (s'2') [left=6pt of b'3] {$\frac{2}{3}$};

\node[whitenodes]   (s3)   [right=6pt of g1] {$1$};
\node[whitenodes]   (s2)   [right=6pt of g2] {$1$};
\node[whitenodes]   (s1)   [right=6pt of g3] {$1$};
\node[whitenodes]   (s2')   [right=6pt of g4] {$1$};

\node[whitenodes]   (s2)   [right=6pt of g'1] {$1$};
\node[whitenodes]   (s1)   [right=6pt of g'2] {$1$};
\node[whitenodes]   (s2')   [right=6pt of g'3] {$1$};

%GOOD IN DEMAND
\node[roundnode4]  (g1)   [right=of b1] {$g_1$};

\end{tikzpicture}
   \caption{During the first iteration of the algorithm in~\cite{Duan-Mehlhorn:Arrow-Debreu-Market} on $I'_n$. We have $S = \mathit{block}_1$.} \label{fig:4a}
 \end{subfigure}\smallskip
 
 \begin{subfigure}{1.0\textwidth}
   \begin{tikzpicture}
%Styles
[
roundnode/.style={circle, draw=green!60, fill=green!5, very thick, minimum size=3mm},
roundnode3/.style={circle, draw=green!60, fill=green!5, very thick, minimum size=3mm},
roundnode2/.style={circle, draw=red!60, fill=red!5, very thick, minimum size=3mm},
roundnode23/.style={circle, draw=red!60, fill=red!5, very thick, minimum size=3mm},
roundnode4/.style={circle, draw=red!60, fill=red!55, very thick, minimum size=5mm},
roundnode6/.style={circle, draw=red!60, fill=red!55, very thick, minimum size=5mm},
whitenodes/.style={circle, draw=white!60, fill=white!55, very thick, minimum size=5mm},
lineo/.style={orange,->},
lineb/.style={blue,->,},
tlineb/.style={blue,->,ultra thick},
tlineo/.style={orange,->,ultra thick},
]

%ITERATION 2
%buyers_block1
\node[whitenodes]     (w) {};
\node[roundnode3]      (b1)       [right=160pt of w] {$b_1$};
\node[roundnode3]      (b2)       [below=3pt of  b1]{$b_2$};
\node[roundnode3]      (b3)       [below=3pt of  b2]{$b_3$};
\node[roundnode3]      (b4)      [below=3pt of  b3]{$b_4$};

%goods_block1
\node[roundnode23]      (g1)       [right=of b1]{$g_1$};
\node[roundnode23]      (g2)       [right=of b2] {$g_2$};
\node[roundnode23]      (g3)       [right=of b3] {$g_3$};
\node[roundnode23]      (g4)      [right=of b4] {$g_4$};

%buyers_block2

\node[roundnode3]      (b'1)       [below=3pt of  b4]{$b_1$};
\node[roundnode3]      (b'2)       [below=3pt of  b'1]{$b_2$};
\node[roundnode3]      (b'3)       [below=3pt of  b'2]{$b_3$};

%goods_block2
\node[roundnode23]      (g'1)      [right=of b'1] {$g_1$};
\node[roundnode23]      (g'2)       [right=of  b'2]{$g_2$};
\node[roundnode23]      (g'3)     [right=of b'3]{$g_3$};

%Lines

\draw[tlineb] (b1) -- (g1);
\draw[tlineb] (b2) -- (g1);
\draw[tlineb] (b3) -- (g1);
\draw[tlineb] (b4) -- (g1);

\draw[tlineb] (b'1) -- (g'1);
\draw[tlineb] (b'2) -- (g'1);
\draw[tlineb] (b'3) -- (g'1);

\draw[tlineo] (b1)--(g2);
\draw[tlineo] (b1)--(g3);
\draw[tlineo] (b1)--(g4);
\draw[tlineo] (b1)--(g'2);
\draw[tlineo] (b1)--(g'3);

%PRICES AND SURPLUSES   
\node[whitenodes]   (s3) [left=6pt of b1] {$0$};
\node[whitenodes]   (s2) [left=6pt of b2] {$0$};
\node[whitenodes]   (s1) [left=6pt of b3] {$0$};
\node[whitenodes]   (s2') [left=6pt of b4] {$0$};

\node[whitenodes]   (s'2)  [left=6pt of b'1] {$\frac{2}{3}$};
\node[whitenodes]   (s'1)  [left=6pt of b'2] {$\frac{2}{3}$};
\node[whitenodes]   (s'2') [left=6pt of b'3] {$\frac{2}{3}$};

\node[whitenodes]   (s3)   [right=6pt of g1] {$U$};
\node[whitenodes]   (s2)   [right=6pt of g2] {$1$};
\node[whitenodes]   (s1)   [right=6pt of g3] {$1$};
\node[whitenodes]   (s2')   [right=6pt of g4] {$1$};

\node[whitenodes]   (s2)   [right=6pt of g'1] {$1$};
\node[whitenodes]   (s1)   [right=6pt of g'2] {$1$};
\node[whitenodes]   (s2')   [right=6pt of g'3] {$1$};

%GOOD IN DEMAND
\node[roundnode4]  (g1)   [right=of b'1] {$g_1$};

\end{tikzpicture}
   \caption{The equality network changes for the first time. Orange edges appear. $S = \mathit{block}_2$.} \label{fig:4b}
 \end{subfigure}\smallskip
 
 \begin{subfigure}{1.0\textwidth}
   \begin{tikzpicture}
%Styles
[
roundnode/.style={circle, draw=green!60, fill=green!5, very thick, minimum size=3mm},
roundnode3/.style={circle, draw=green!60, fill=green!5, very thick, minimum size=3mm},
roundnode2/.style={circle, draw=red!60, fill=red!5, very thick, minimum size=3mm},
roundnode23/.style={circle, draw=red!60, fill=red!5, very thick, minimum size=3mm},
roundnode4/.style={circle, draw=red!60, fill=red!55, very thick, minimum size=5mm},
roundnode6/.style={circle, draw=red!60, fill=red!55, very thick, minimum size=5mm},
whitenodes/.style={circle, draw=white!60, fill=white!55, very thick, minimum size=5mm},
lineo/.style={orange,->},
lineb/.style={blue,->,},
tlineb/.style={blue,->,ultra thick},
tlineo/.style={orange,->,ultra thick},
]
%buyers_block1
\node[whitenodes]     (w) {};
\node[roundnode3]      (b1)       [right=160pt of w] {$b_1$};
\node[roundnode3]      (b2)       [below=3pt of  b1]{$b_2$};
\node[roundnode3]      (b3)       [below=3pt of  b2]{$b_3$};
\node[roundnode3]      (b4)      [below=3pt of  b3]{$b_4$};

%goods_block1
\node[roundnode23]      (g1)       [right=of b1]{$g_1$};
\node[roundnode23]      (g2)       [right=of b2] {$g_2$};
\node[roundnode23]      (g3)       [right=of b3] {$g_3$};
\node[roundnode23]      (g4)      [right=of b4] {$g_4$};

%buyers_block2

\node[roundnode3]      (b'1)       [below=3pt of  b4]{$b_1$};
\node[roundnode3]      (b'2)       [below=3pt of  b'1]{$b_2$};
\node[roundnode3]      (b'3)       [below=3pt of  b'2]{$b_3$};

%goods_block2
\node[roundnode23]      (g'1)      [right=of b'1] {$g_1$};
\node[roundnode23]      (g'2)       [right=of  b'2]{$g_2$};
\node[roundnode23]      (g'3)     [right=of b'3]{$g_3$};

%Lines

\draw[tlineb] (b1) -- (g1);
\draw[tlineb] (b2) -- (g1);
\draw[tlineb] (b3) -- (g1);
\draw[tlineb] (b4) -- (g1);

\draw[tlineb] (b'1) -- (g'1);
\draw[tlineb] (b'2) -- (g'1);
\draw[tlineb] (b'3) -- (g'1);

\draw[tlineo] (b1)--(g2);
\draw[tlineo] (b1)--(g3);
\draw[tlineo] (b1)--(g4);
\draw[tlineo] (b1)--(g'2);
\draw[tlineo] (b1)--(g'3);
\draw[tlineo] (b'1)--(g1);

%PRICES AND SURPLUSES   
\node[whitenodes]   (s3) [left=6pt of b1] {$0$};
\node[whitenodes]   (s2) [left=6pt of b2] {$0$};
\node[whitenodes]   (s1) [left=6pt of b3] {$0$};
\node[whitenodes]   (s2') [left=6pt of b4] {$0$};

\node[whitenodes]   (s'2)  [left=6pt of b'1] {$0$};
\node[whitenodes]   (s'1)  [left=6pt of b'2] {$0$};
\node[whitenodes]   (s'2') [left=6pt of b'3] {$0$};

\node[whitenodes]   (s3)   [right=6pt of g1] {$U$};
\node[whitenodes]   (s2)   [right=6pt of g2] {$1$};
\node[whitenodes]   (s1)   [right=6pt of g3] {$1$};
\node[whitenodes]   (s2')   [right=6pt of g4] {$1$};

\node[whitenodes]   (s2)   [right=6pt of g'1] {$U^2$};
\node[whitenodes]   (s1)   [right=6pt of g'2] {$1$};
\node[whitenodes]   (s2')   [right=6pt of g'3] {$1$};

\end{tikzpicture}
   \caption{The equality network changes for the second time. The edge from $(b_1,\mathit{block}_2)$ to $(g_1,\mathit{block}_1)$ appears.} \label{fig:4c}
 \end{subfigure}
 \caption{Running the algorithm in~\cite{Duan-Mehlhorn:Arrow-Debreu-Market} on the instance similar to that in Figure \ref{Example 4} . The numbers on left represent surplus of the agents and that on the right the prices of the goods.}
\end{figure}
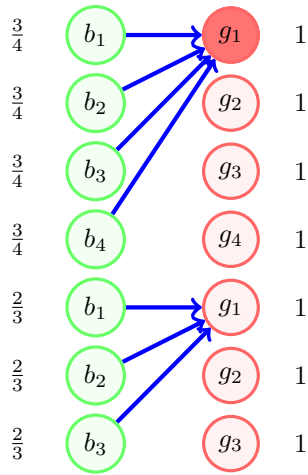
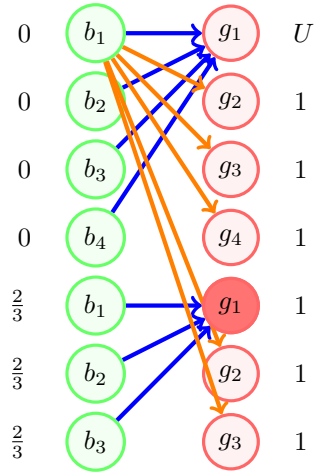
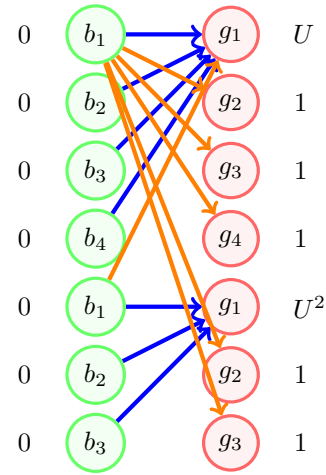
The key idea is that we can create instances where the agents can be partitioned into disjoint block based on their surpluses. At any iteration of the algorithm, the ratio of the surpluses between any two blocks would be greater than $(1+\frac{1}{n})$. Or in other words only one block is present in $S$ at every iteration. This way we can ensure that prices of goods in different blocks increase in different iterations. Thus if we can partition the agents into $k$ blocks (prices of goods in each block being updated in disjoint iterations) we can ensure that the algorithm in~\cite{Duan-Mehlhorn:Arrow-Debreu-Market} incurs $\Omega( \sum_{i \in [k]} \log_{1+\frac{1}{R \cdot n^3}}(U)^i)) = \Omega (n^3\log(U) \cdot (\sum_{i \in[k]} i) = \Omega(n^3k^2\log(U))$.
 
 \paragraph*{Constructing the Instance $I'_n$: Partitioning the Agents into Blocks and Defining the Utility Matrix:}
 
 \begin{enumerate}
  
  \item Initially set every $u(b_i,g_j) = 0$ for every agent $b_i$ and good $g_j$.
  
  \item For every prime $i < n^{\frac{1}{3}}$ and $j<i$, we have a $\mathit{block}_{i,j}$, comprising $i$ agents $(b_1,\mathit{block}_{i,j})$ to $(b_i,\mathit{block}_{i,j})$ and $i$ goods $(g_1,\mathit{block}_{i,j})$ to $(g_i,\mathit{block}_{i,j})$. Agent $(b_k,\mathit{block}_{ij})$ owns $(g_k, \block_{ij})$ for all $k \in [i]$.  
   
  \item Let $u((b_l, \block_{i,j}),(g_{l'},\block_{i',j'}))$ denote the utility of agent $(b_l, \block_{i,j})$ from $(g_{l'}, \block_{i',j'})$.  For every $\mathit{block}_{i,j}$, we set  $u((b_l,\mathit{block}_{i,j}),(g_m,\mathit{block}_{i,j})) = 2U $ for all $l\in [i] $ and all $ m \in [j]$. 
  
   \item For every block $\mathit{block}_{i,j}$, we call the goods $(g_m,\mathit{block}_{i,j})$ \emph{high-demand goods} of $\mathit{block}_{i,j}$ if and only if $m \in [j]$, (as all agents in $\mathit{block}_{i,j}$ are interested in them) and the remaining goods as \emph{low-demand goods} of $\mathit{block}_{i,j}$.
  
  \item Let $\pi$ be the sequence of blocks sorted in decreasing order of $\frac{i-j}{i}$. Let $\pi_i$ denotes the $i^{\mathit{th}}$ block in the sequence $\pi$, and $|\pi|$ denotes the number of length of the sequence which equals the number of blocks $\in \Theta(\frac{n^{\frac{2}{3}}}{\log^2(n)})$ (Follows from Fact 3.1).
  
  \item For all $j \in [\lvert \pi \rvert]$ we set $u((b_l,\pi_1),(g_m,\pi_j)) = 2$, for every $l$ and $m$ such that $(g_l,\pi_1)$ is a \emph{high-demand good} in $\pi_1$, and $(g_m,\pi_j)$ is a  \emph{low-demand good} in $\pi_j$.
  
  \item For all $i\in [|\pi|-1]$, we set $u((b_l,\pi_{i+1}),(g_m,\pi_{i})) = 2$ for every $l$ and $m$ such that  $(g_l,\pi_{i+1}) $ is a \emph{high-demand good} in $\pi_i$, and $(g_m,\pi_{i}) $ is a \emph{high demand-good} in $\pi_{i}$.
  
  \item Finally we define the last block $\block_{n,n}$ comprising agents $(b_1,\block_{n,n})$ to $(b_n,\block_{n,n})$ and goods $(g_1,\block_{n,n})$ to $(g_n,\block_{n,n})$ and we define a few more entries in the utility matrix as follows,
  \begin{itemize}
      \item $u((b_i,\block_{n,n}),(g_i,\block_{n,n})) = U \quad \forall i \in [n]$.
      \item $u((b_i,\block_{n,n}), (g_1,\pi_1)) = 1 \quad \forall i \in [n]$.
      \item $u((b_l,\pi_1),(g_i,\block_{n,n})) = 1 \quad \forall i \in [n]$, for every $l$ such that  $(g_l,\pi_1)$ is a \emph{high demand good} in $\pi_1$.
  \end{itemize}
    
 \end{enumerate}
 
 \begin{observation}
  The number of agents and goods in $I'_n$  $\in \Omega(n)$.
 \end{observation}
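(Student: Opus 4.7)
The plan is purely a counting argument. Agents and goods in $I'_n$ are identified by pairs (label, block), and the blocks introduced in step~2 of the construction (one for each prime $i<n^{1/3}$ and each $j<i$) and in step~8 ($\block_{n,n}$) carry distinct block identifiers. Hence the total number of agents, and the total number of goods, decomposes as a disjoint sum over the blocks, and I just need to add up the contributions.

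For the lower bound I would invoke step~8 directly. By construction $I'_n$ contains the block $\block_{n,n}$ with $n$ agents $(b_1,\block_{n,n}),\ldots,(b_n,\block_{n,n})$ and $n$ goods $(g_1,\block_{n,n}),\ldots,(g_n,\block_{n,n})$. These alone contribute $2n$ to the total count, so the number of agents and goods in $I'_n$ is at least $2n \in \Omega(n)$. That already proves the observation.

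For context I would also sketch the matching $O(n)$ upper bound, so that the $\Omega(n^5\log U)$ iteration lower bound announced earlier in the section can be translated into a bound in the actual instance size. Using Fact~3.1, the number of primes $i<n^{1/3}$ is $O(n^{1/3}/\log n)$; for each such $i$ there are at most $i-1 < n^{1/3}$ choices of $j$, and each block $\block_{i,j}$ contains $2i \le 2n^{1/3}$ agents and goods. Summing, steps~2--7 contribute at most $O((n^{1/3}/\log n)\cdot n^{1/3}\cdot n^{1/3}) = O(n/\log n)$ agents and goods, which together with the $2n$ coming from $\block_{n,n}$ yields a total in $\Theta(n)$.

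There is essentially no obstacle here. The only subtlety worth spelling out is that the block identifiers in steps~2 and~8 produce genuinely disjoint agent and good sets, which is immediate from the pair-based labelling scheme used in the construction.
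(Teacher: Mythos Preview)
Your proposal is correct and takes essentially the same approach as the paper: the paper's proof is the single sentence ``The number of agents in $\block_{n,n}$ is $n$,'' which is exactly your lower-bound argument via step~8. The upper-bound sketch you add is additional (and sound) context that the paper does not include in this observation's proof.
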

  
  \begin{proof}
     The number of agents in $\block_{n,n}$ is $n$.  
  \end{proof}
It can be verified that the instance $I'_n$ is irreducible. On the very first iteration of the algorithm in~\cite{Duan-Mehlhorn:Arrow-Debreu-Market} (when prices of all goods are set to unity), we observe that the equality network is disconnected. All the agents in $\block_{n,n}$ have zero surplus as they invest completely in the good they own. After computing the balanced flow, if we partition the agents based on their surpluses, we have that all agents from $\mathit{block}_{i,j}$ have surplus ${(i-j)}/{i}$. Note that the ratio of the surpluses of the agents in any two disjoint blocks is strictly larger than $ 1 + {1}/{n^{\frac{2}{3}}}$ - To see this consider any two arbitrary blocks $\mathit{block}_{i,j}$ and $\mathit{block}_{i',j'}$. If $i \neq i'$ then the ratio of the surpluses is larger than $1 + {1}/{n^{\frac{2}{3}}}$ (follows from the fact that $i<n^{\frac{1}{3}}$ and Lemma \ref{ratioboundDM}). If $i =i'$, assume without loss of generality that $j' > j$,  then the ratio of the surpluses is $(i-j)/(i-j') = 1+ (j'-j)/(i-j') \geq  1 + 1/i > 1 + {1}/{n^{\frac{2}{3}}}$ (Since $i < 1/n^{\frac{1}{3}}$).   Thus,only the agents of $\pi_1$ belong in $S$ after the first iteration. The algorithm steadily increases the prices of the \emph{high-demand goods} in $\pi_1$, but the surplus vector remains unchanged (follows from Observation \ref{technicalDM}). Thus only agents of $\pi_1$ belong in $S$ for $n^3 \cdot log(U)$ iterations, when the equality network finally changes. After the change notice that the surplus of all the agents in $\pi_1$ becomes zero and only agents of $\pi_2$belong in $S$.
 
 Inductively one can make the same argument, during execution of the algorithm, when only agents of $\pi_i$ belong in $S$ and the surpluses of all the agents in the blocks $\pi_j$ with $ j<i$ and in $\block_{n,n}$ are zero and prices of the \emph{high demand goods} in each of the blocks $\pi_j$ with $j < i$ is $U^{j}$. The price increase of the \emph{high-demand goods} in $\pi_i$ continues for $\Omega(n^3 \cdot \log((U)^{i}))$ many iterations, until the equality network changes or equivalently when the equality edges from $(b_l,\pi_i)$ to $(g_m,\pi_{i-1})$ arise, for every $l$ and $m$ such that $(g_l,\pi_i)$ is a \emph{high-demand good} in $\pi_i$ and $(g_m,\pi_{i-1})$ is a \emph{high-demand good} in $\pi_{i-1}$. With the new equality edges appearing, new augmenting paths also appear in the equality network that reduce the surplus of all agents in $\pi_i$ to zero. However since there are no edges connecting the agents in blocks $\pi_1,\pi_2,...,\pi_i$ to the \emph{high-demand goods} of blocks $\pi_{i+1},...\pi_{|\pi|}$, (and therefore no augmenting path via the agents in the blocks $\pi_{i+1},...\pi_{|\pi|}$ to reduce surpluses of the agents in blocks $\pi_j$ with $j>i$) the surpluses of the agents of the latter blocks remain unchanged. Henceforth only agents $\pi_{i+1}$ constitute $S$.

\begin{theorem}
 \label{DMisworse}
 The number of iterations executed by the algorithm in~\cite{Duan-Mehlhorn:Arrow-Debreu-Market} on the instance $I'_n$ is $\Omega((n^{4+\frac{1}{3}}\log(U))/\log(n)^4)$.
\end{theorem}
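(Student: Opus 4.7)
The plan is to formalise the informal phase analysis stated in the paragraphs immediately preceding the theorem: the $\DM$ algorithm executes $\abs{\pi}$ disjoint phases, in phase $i$ only the agents of $\pi_i$ belong to $S$, every iteration in phase $i$ uses $x = \xmax$, and phase $i$ lasts at least $\Omega(i \cdot n^3 \log U)$ iterations. To establish the phase structure I would induct on $i$: at the start of phase $i$ the high-demand prices of $\pi_1, \ldots, \pi_{i-1}$ are $U, U^2, \ldots, U^{i-1}$ respectively while all other prices are still $1$, and by Observation~\ref{technicalDM} the balanced-flow surplus of every agent of a still-active block $\block_{i_\star, j_\star}$ equals $(i_\star - j_\star)/i_\star$. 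Lemma~\ref{ratioboundDM} then forces the surpluses of distinct active blocks to differ by a multiplicative factor exceeding $1 + 1/n^{2/3} > 1 + 1/n$, so the $\DM$ set $S$ consists of exactly the unique maximum-surplus active block, which by the sorting of $\pi$ is $\pi_i$; the agents of $\block_{n,n}$ have surplus $0$ throughout and never enter $S$.

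Next I would count phase $i$. Observation~\ref{technicalDM} also shows that scaling the high-demand prices of $\pi_i$ uniformly preserves the balanced-flow surpluses, so within phase $i$ none of $x_{23}, x_{24}, x_{13}, x_2$ becomes smaller than $\xmax$, and every iteration uses $x = \xmax = 1 + 1/(Rn^3)$ until $\xeq$ fires. The firing of $\xeq$ corresponds to the bang-per-buck crossover between the intra-block edge (utility $2U$ against the current high-demand price $p_i$ of $\pi_i$) and the cheapest external equality candidate, which by the construction has utility $2$ and is priced at $U^{i-1}$ (a high-demand good of $\pi_{i-1}$ when $i \ge 2$, or a low-demand good anywhere when $i = 1$, in which case $U^{i-1} = 1$). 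Equating $2U/p_i = 2/U^{i-1}$ gives $p_i = U^i$, so phase $i$ requires at least $\log_{1+1/(Rn^3)} U^i = \Omega(i \cdot n^3 \log U)$ iterations. Once those edges appear, the augmenting paths they open absorb $\pi_i$'s surplus into the newly reachable goods, while no other block's surplus changes because the utility matrix contains no edges from $\pi_{j}$ ($j > i$) or from $\block_{n,n}$ into the blocks $\pi_1, \ldots, \pi_i$, completing the induction.

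Summing yields at least $\sum_{i=1}^{\abs{\pi}} \Omega(i \cdot n^3 \log U) = \Omega(\abs{\pi}^2 n^3 \log U)$ iterations, and using $\abs{\pi} = \Theta(n^{2/3}/\log^2 n)$ this equals $\Omega(n^{4+\frac{1}{3}} \log U / \log^4 n)$, as claimed. The main obstacle is the augmenting-path verification in the inductive step: one must show that when $\xeq$ triggers at the end of phase $i$, the balanced flow in the updated equality network routes exactly the surplus of $\pi_i$ through the new equality edges into previously unsaturated $g_j \to t$ capacities, and that the one-way wiring of the inter-block utility edges prevents any augmentation from or into the blocks $\pi_{i+1}, \ldots, \pi_{\abs{\pi}}$ or $\block_{n,n}$; the remaining ingredients — the bang-per-buck crossover point, the $\xmax$-only updates within a phase, the prime-ratio separation of block surpluses, and the final summation — are then routine.
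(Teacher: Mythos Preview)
Your proposal is correct and follows essentially the same approach as the paper: the paper's formal proof is nothing more than the final summation $\sum_{i} \Omega(n^3 \log(U^i))$, with the phase-by-phase analysis (one block $\pi_i$ active at a time, surpluses separated by Lemma~\ref{ratioboundDM}, prices in $\pi_i$ climbing to $U^i$ before $\xeq$ fires) given informally in the preceding paragraphs, exactly as you reconstruct it. One small imprecision: the quantities $x_{23}, x_{24}, x_{13}, x_2$ are artifacts of the present paper and of~\cite{DGM:Arrow-Debreu}, not of the Duan--Mehlhorn algorithm itself, so for the statement about~\cite{Duan-Mehlhorn:Arrow-Debreu-Market} you should phrase the ``only $\xmax$ fires'' step in terms of that algorithm's own stopping rule (surplus vector unchanged $\Rightarrow$ price update by $1 + \Theta(1/n^3)$ until a new equality edge appears), though the conclusion is identical.
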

\begin{proof}
With the utility matrix defined as above, the execution of the algorithm in~\cite{Duan-Mehlhorn:Arrow-Debreu-Market} incurs 
\[     \Omega(\sum \limits_{i \in [\Theta(\frac{n^{\frac{2}{3}}}{\log^2(n)})]}^{} n^3 \cdot \log(U^i)) =  \Omega\left(\frac{n^{4+\frac{1}{3}} \cdot \log(U)}{\log^4 n}\right)\]  iterations.
\end{proof}

\bibliographystyle{plain}
%\bibliography{ref,local}

%\appendix

%\section{The Reduction} \label{reduction}

%Unfortunately this reduction may significantly increase the size of the problem. The number of agents and goods is equal to the number of non-zero entries in $w$. This may be as large as $nm$. 

%However in this paper we would like to achieve a running time of $T(n+m)$ for certain combinatorial algorithms for the $AD_S$.

%\section{The Perturbation} \label{perturbation-appendix}

% \begin{lemma}
%   \label{appendix-pfrom p'}
%  Let $p$ be a vector such that $Ap = X$ where $X$ is the vector corresponding to the right hand sides of the equations in $\ref{fr1}$, $\ref{fr2}$ and $\ref{fr3}$. Then $p$ is also the equilibrium price vector corresponding to the unperturbed utilities. 
% \end{lemma}

%\section{Instance $I_n$}

\end{document}